\newcommand{\ALGtikzmarkcolor}{black}
\newcommand{\ALGtikzmarkextraindent}{4pt}
\newcommand{\ALGtikzmarkverticaloffsetstart}{-.5ex}
\newcommand{\ALGtikzmarkverticaloffsetend}{-.5ex}
\newcounter{ALG@tikzmark@tempcnta}
\newcommand\ALG@tikzmark@start{
    \global\let\ALG@tikzmark@last\ALG@tikzmark@starttext
    \expandafter\edef\csname ALG@tikzmark@\theALG@nested\endcsname{\theALG@tikzmark@tempcnta}
    \tikzmark{ALG@tikzmark@start@\csname ALG@tikzmark@\theALG@nested\endcsname}
    \addtocounter{ALG@tikzmark@tempcnta}{1}
}
\def\ALG@tikzmark@starttext{start}
\newcommand\ALG@tikzmark@end{
    \ifx\ALG@tikzmark@last\ALG@tikzmark@starttext
    \else
        \tikzmark{ALG@tikzmark@end@\csname ALG@tikzmark@\theALG@nested\endcsname}%
        \tikz[overlay,remember picture] \draw[\ALGtikzmarkcolor] let \p{S}=($(pic cs:ALG@tikzmark@start@\csname ALG@tikzmark@\theALG@nested\endcsname)+(\ALGtikzmarkextraindent,\ALGtikzmarkverticaloffsetstart)$), \p{E}=($(pic cs:ALG@tikzmark@end@\csname ALG@tikzmark@\theALG@nested\endcsname)+(\ALGtikzmarkextraindent,\ALGtikzmarkverticaloffsetend)$) in (\x{S},\y{S})--(\x{S},\y{E});
    \fi
    \gdef\ALG@tikzmark@last{end}
}
\apptocmd{\ALG@beginblock}{\ALG@tikzmark@start}{}{\errmessage{failed to patch}}
\pretocmd{\ALG@endblock}{\ALG@tikzmark@end}{}{\errmessage{failed to patch}}
\newcommand{\alg}{\text{ALG}}
\newcommand{\opt}{\text{OPT}}
\newcommand{\E}{\text{E}}
\newcommand{\Prob}{\text{Pr}}
\newcommand{\te}[1]{\text{#1}}
\newcommand{\dist}[2]{d(#1,#2)}
\newcommand{\DetAlg}{\texttt{PD-OMFLP}}
\newcommand{\RandAlg}{\texttt{RAND-OMFLP}}
\newcommand{\BigO}[1]{\mathcal{O}\left(#1\right)}
\newcommand{\SqrtS}{\sqrt{|S|}}
\newcommand{\lln}{\frac{\log n}{\log \log n}}
\DeclareMathOperator*{\argmax}{arg\,max}
\title{The Online Multi-Commodity Facility Location Problem\footnote{This work was partially supported by the German Research Foundation (DFG) within the Collaborative Research Centre On-The-Fly Computing (GZ: SFB 901/3) under the project number 160364472.}}
\author{Jannik Castenow}{Heinz Nixdorf Institute \& Computer Science Dept., Paderborn University, 33102 Paderborn, Germany}{jannik.castenow@upb.de}{https://orcid.org/0000-0002-8585-4181}{}
\author{Bj\"orn Feldkord}{Heinz Nixdorf Institute \& Computer Science Dept., Paderborn University, 33102 Paderborn, Germany}{bjoernf@hni.upb.de}{https://orcid.org/0000-0001-6591-2420}{}
\author{Till Knollmann}{Heinz Nixdorf Institute \& Computer Science Dept., Paderborn University, 33102 Paderborn, Germany}{tillk@mail.upb.de}{https://orcid.org/0000-0003-2014-4696}{}
\author{Manuel Malatyali}{Heinz Nixdorf Institute \& Computer Science Dept., Paderborn University, 33102 Paderborn, Germany}{manuel.malatyali@upb.de}{}{}
\author{Friedhelm Meyer auf der Heide}{Heinz Nixdorf Institute \& Computer Science Dept., Paderborn University, 33102 Paderborn, Germany}{fmadh@upb.de}{}{}
\authorrunning{Castenow, et al.}
\keywords{Online Multi-Commodity Facility Location, Competitive Ratio, Online Optimization, Facility Location Problem}
\begin{document}

\nolinenumbers

\bibliographystyle{plain}

\maketitle

\begin{abstract}
  We consider a natural extension to the metric uncapacitated Facility Location Problem (FLP) in which requests ask for different commodities out of a finite set \( S \) of commodities.
  Ravi and Sinha (SODA 2004) introduced the model as the \emph{Multi-Commodity Facility Location Problem} (MFLP) and considered it an offline optimization problem.
  The model itself is similar to the FLP: i.e., requests are located at points of a finite metric space and the task of an algorithm is to construct facilities and assign requests to facilities while minimizing the construction cost and the sum over all assignment distances.
  In addition, requests and facilities are heterogeneous; they request or offer multiple commodities out of the set $S$.
  A request has to be connected to a set of facilities jointly offering the commodities demanded by it.
  In comparison to the FLP, an algorithm has to decide not only if and where to place facilities, but also which commodities to offer at each.

  To the best of our knowledge we are the first to study the problem in its online variant in which requests, their positions and their commodities are not known beforehand but revealed over time.
  We present results regarding the competitive ratio.
  On the one hand, we show that heterogeneity influences the competitive ratio by developing a lower bound on the competitive ratio for any randomized online algorithm of \( \Omega (  \sqrt{|S|} + \frac{\log n}{\log \log n}  ) \) that already holds for simple line metrics.
  Here, \( n \) is the number of requests.
  On the other side, we establish a deterministic \( \mathcal{O}(\sqrt{|S|} \cdot \log n) \)-competitive algorithm and a randomized \( \mathcal{O}(\sqrt{|S|} \cdot \frac{\log n}{\log \log n} ) \)-competitive algorithm for the problem.
  Further, we show that when considering a more special class of cost functions for the construction cost of a facility, the competitive ratio decreases given by our deterministic algorithm depending on the function.
\end{abstract}

\section{Introduction}\label{section:Introduction}

Consider the scenario of a provider of services in a network infrastructure.
Clients in the network might appear over time at locations in the network that are unknown to the provider and ask for a subset of the offered services.
For a scalable solution, the provider aims at placing instances of the required services close to the appearing requests to minimize the query cost for the requests.
When instantiating a service, there is typically a cost due to overhead for the set-up and the allocation of computational resources.
For example, such cost may be due to a virtual machine containing the service at the location of the instance.
It seems natural that it is worthwhile to offer a combination of services in a single virtual machine as opposed to instantiating each service on its own: i.e., the cost for instantiating a set of services increases less than linear with the number of offered services.
Additionally, a client that requests multiple services could benefit from communicating with a network node offering a subset of the requested services.
It is much cheaper to communicate with a single network node that offers multiple services than to communicate with different network nodes that serve the same set of services together.

The scenario above can be nicely modeled by extending the well-known \emph{Facility Location Problem (FLP)}.
Within the entire paper we assume the metric uncapacitated case if not mentioned otherwise.
In the metric Facility Location Problem, we are given requests located at points of a metric space and possible facility locations with the associated opening cost.
The task of an algorithm is to open facilities and connect each request to an open facility, while minimizing the total cost for opening facilities and the sum over all distances between requests and the facilities they are assigned to.

The natural extension of this problem is the \emph{Multi-Commodity Facility Location Problem (MFLP)} introduced in \cite{Ravi:2004:MFL:982792.982841}, in which each request asks for a subset of commodities out of a finite set.
Facilities are enabled to offer a subset of commodities when being opened and an algorithm has to ensure that a request is connected to a set of facilities jointly offering the requested commodities.
Facility costs are now determined not only by the location but also by the set of offered commodities.
The connection cost of a client is determined by the sum of distances to all facilities it is connected to.
This model generalizes the extensively studied FLP and introduces additional hardness, because an algorithm has to decide not only where to open facilities and how to connect the clients, but also which commodities to offer at a facility.

Our goal is to develop algorithms for the online version of the MFLP, which we call the OMFLP.
In the online variant, the requests are not known beforehand but revealed over time.
On arrival of a request, an algorithm has to immediately assign it to a set of facilities jointly offering the requested commodities.
Thereby, the algorithm has the possibility to open new facilities and determine the set of offered commodities for each newly opened facility.
Decisions on where to place a facility offering which commodities and how to connect a request are made irrevocably by the algorithm.
We analyze our online algorithms under the standard notion of the competitive ratio.

\begin{definition}
  [Competitive Ratio]
  Let \( P \) be a problem with a set of instances \( I \).
  Let \alg{} be an online algorithm and \opt{} be an optimal offline algorithm for \( P \).
  Denote by $\textnormal{Cost}(A, i)$ the total cost of an algorithm $A$ on an instance $i\in I$.
  Then \alg{} is called $c$-competitive if for all instances $i\in I$ it holds that
  $\textnormal{Cost}(\alg{}, i) \leq c\cdot \textnormal{Cost}(\opt{}, i) + a$ for some constant \( a \) independent of \( i \).
\end{definition}

\subsection{Model \& Problem Definition}\label{section:Introduction:Model-and-Problem-Definition}

We consider the metric non-uniform uncapacitated MFLP (Multi-Commodity Facility Location Problem
).
Here, we are given a metric space with point set \( M \) and a set \( R \) of \emph{requests} located at points of \( M \).
Each request \( r \in R \) demands a set \( s_{r} \subseteq S \) of \emph{commodities} out of a finite set \( S \).
The task of an algorithm is to compute a set of facilities \( F \) located at points of \( M \), determine for each facility which set of commodities is offered and then define an assignment of each request in \( R \) to a set of facilities in \( F \) while minimizing the sum of the \emph{construction cost} and the \emph{assignment cost}.
The algorithm is allowed to build multiple facilities on the same point.
Each request \( r \in R \) has to be connected to a set of facilities \( F'\subseteq F \) such that every commodity requested by \( r \) is offered by at least one facility in \( F' \).
We denote the \emph{distance} in the metric space of \( r \) to a facility at \( m \) by \( \dist{r}{m} \).
The connection cost for \( r \) is then determined by the sum of the distances from \( r \) to every facility of \( F' \).
Facilities of the algorithm are constructed with a \emph{configuration} \( \sigma \subseteq S \), i.e., a set of commodities offered at the facility.
Each facility in \( F \) induces a construction cost of \( f_{m}^{\sigma} \) where \( m \in M \) is the point where the facility is located and \( \sigma \subseteq S \) is the configuration of the facility.
Note that \( f_{m}^{\sigma} \) is given for each \( m \in M \) and each \( \sigma \subseteq S \) beforehand.

\noindent
\paragraph{Primal \& Dual Linear Program}
The following \emph{Integer Linear Program (ILP)} represents the MFLP.
\begin{align*}
  \min               & \sum_{m\in M} \sum_{\sigma \subseteq S} f_{m}^{\sigma} y_{m}^{\sigma}                                   + \sum_{m\in M} \sum_{\sigma \subseteq S} \sum_{r\in R} \sum_{s \subseteq s_{r}} \dist{m}{r} x_{mrs}^{\sigma} \\
  \textnormal{s.t. } & \sum_{m\in M}\sum_{\sigma \subseteq S} \sum_{s\subseteq \sigma: e\in s} x_{mrs}^{\sigma}  \geq 1
                     & \hspace*{-3.4cm}\forall r\in R, \forall e\in s_{r}                                                                                                                                                                    \\
                     & x_{mrs}^{\sigma}                                                                                                    \leq y_{m}^{\sigma}
                     & \hspace*{-3.4cm} \forall m\in M,\forall \sigma \subseteq S, \forall r\in R, \forall s \subseteq s_{r}                                                                                                                 \\
                     & x_{mrs}^{\sigma}, y_{m}^{\sigma}                                                                                    \in\{0,1\}
                     & \hspace*{-3.4cm} \forall m\in M,\forall \sigma \subseteq S, \forall r\in R, \forall s \subseteq s_{r}
\end{align*}

Here, \( y_{m}^{\sigma} \) represents a variable indicating that at \( m\in M \) there is a facility in configuration \( \sigma \subseteq S \).
\( x_{mrs}^{\sigma} \) indicates that the subset \( s \subseteq s_{r} \) of commodities requested by request \( r \) is served by a facility at \( m\in M \) in configuration \( \sigma \).
The first set of constraints ensure that every commodity of a request is served by a facility that \( r \) is connected to, while the second set of constraints ensure that requests are connected to and served only by facilities opened with a respective configuration.

Observe that, given fixed \( m, r, \sigma \), the connection cost for serving any subset \( s\subseteq s_{r} \cap \sigma \) by configuration \( \sigma \) at \( m \) is the same, namely \( \dist{m}{r} \).
Therefore, it is safe to assume that it is always better to tackle \( x_{mrs}^{\sigma} \) for maximal \( s \subseteq s_{r}\cap \sigma \), allowing us to eliminate explicitly reflecting \( s \) in \( x_{mrs}^{\sigma} \).
The ILP simplifies to:
\begin{align*}
  \min               & \sum_{m\in M} \sum_{\sigma \subseteq S} f_{m}^{\sigma} y_{m}^{\sigma}              + \sum_{m\in M} \sum_{\sigma \subseteq S} \sum_{r\in R} \dist{m}{r} x_{mr}^{\sigma}                                                                                           \\
  \textnormal{s.t. } & \sum_{m\in M}\sum_{\sigma \subseteq S: e\in \sigma}  x_{mr}^{\sigma} \geq 1                                                                                            & \hspace*{-1.75cm} \forall r\in R, \forall e\in s_{r}   \phantom{.}                      \\
                     & x_{mr}^{\sigma}                                                                                 \leq y_{m}^{\sigma}                                                    & \hspace*{-1.75cm} \forall m\in M,\forall \sigma \subseteq S, \forall r\in R \phantom{.} \\
                     & x_{mr}^{\sigma}, y_{m}^{\sigma}                                                                 \in\{0,1\}                                                             & \hspace*{-1.75cm} \forall m\in M,\forall \sigma \subseteq S, \forall r\in R
  .
\end{align*}

The corresponding dual is then as follows.
For convenience, define \( (a)_{+} : = \max\{a,0\} \) for any number \( a \) and \( z_{e}^{\sigma} = 1  \) if and only if \( e \in \sigma \).
\begin{align*}
  \max & \sum_{r\in R} \sum_{e\in s_{r}} a_{re}                                                                      &                                                                                                   \\
  \textnormal{s.t. }
       & \sum_{e\in s_{r}} a_{re} z_{e}^{\sigma} - \dist{r}{m} \leq b_{mr}^{\sigma}                                  & \hspace*{-0.5cm}\forall r \in R, \forall m\in M, \forall \sigma \subseteq S                       \\
       & \sum_{r\in R} b_{mr}^{\sigma} \leq f_{m}^{\sigma}                                                           & \hspace*{-0.5cm} \forall m\in M,\forall \sigma \subseteq S                                        \\
       & a_{re}, b_{mr}^{\sigma}                                                                              \geq 0 & \hspace*{-0.5cm}  \forall e \in s_{r}, \forall r\in R, \forall m\in M, \forall \sigma \subseteq S
\end{align*}

For \( z_{e}^{\sigma} = 0 \), \( - \dist{m}{r} \leq b_{mr}^{\sigma}  \) is tautological so the first set of constraints can be reduced to \( \left(\sum_{e\in s_{r} \cap \sigma} a_{re} - \dist{r}{m}\right)_{+} \leq b_{mr}^{\sigma} \).
Combined with the second set of constraints this yields a simplified dual as below.
\begin{align*}
  \max & \sum_{r\in R} \sum_{e\in s_{r}} a_{re}                                                                   &                                                          \\
  \textnormal{s.t. }
       & \sum_{r\in R} \left( \sum_{e \in s_{r} \cap \sigma} a_{re} - \dist{m}{r} \right)_{+} \leq f_{m}^{\sigma} & \hspace*{0.7cm}\forall m\in M,\forall \sigma \subseteq S \\
       & a_{re}                                                                               \geq 0              & \forall r\in R, \forall e \in s_{r}
\end{align*}

\noindent
\paragraph{Regarding the construction cost function}
For the construction cost function \( f_{m}^{\sigma} \), we would first like to observe that it can safely be assumed to be subadditive, i.e., for a fixed \( m \in M \) and any \( \sigma \subseteq S \) it holds for all \( a,b \subseteq \sigma \) with \( a \cup b = \sigma \) that
\begin{align*}
  f_{m}^{\sigma} \leq f_{m}^{a} + f_{m}^{b}
  .
\end{align*}
Assume that the cost function does not fulfill subadditivity.
Then for each \( \sigma, a, b \) and \( m \) violating the inequality as above, any algorithm that wants to cover the commodities of \( \sigma \) at \( m \) would simply not construct one facility with configuration \( \sigma \) but two facilities in configurations \( a \) and \( b \).
Thus, when considering the minimum possible construction cost for covering \( \sigma \) at \( m \), subadditivity is implied.

When looking at the literature in the offline case, we observe that the hardness of the problem when approximating it varies a lot depending on the allowed cost functions.
More specifically, a constant approximation is achievable when restricting the construction cost function.
Among others, it is assumed to be linear: i.e., \( f_{m}^{a\cup b} = f_{m}^{a} + f_{m}^{b} \) \cite{DBLP:conf/soda/ShmoysSL04}.
On the other hand, for general cost functions one cannot approximate better than by a factor of \( \Omega(\log |S|) \), due to a reduction from the weighted set cover problem \cite{Ravi:2004:MFL:982792.982841}.
The question arises if such a dependency on the cost function also exists in the online variant.
Naturally, as a starting point one could assume that the construction costs depend only on the number of commodities.
We losen this by demanding that our cost function fulfills
\begin{align}
  \forall \sigma \subseteq S, m\in M: &  & \frac{f_{m}^{\sigma}}{|\sigma|} \geq \frac{f_{m}^{S}}{|S|}
  \label{inequality:main-assumption}.\hspace*{2cm}
\end{align}
Condition~\ref{inequality:main-assumption} assures that the construction cost per commodity is minimal when considering \( S \) entirely.
Keeping in mind that the construction cost increases less than linearly in the number of included commodities, this seems reasonable.
Note, that assuming a cost function that depends only on the number of offered commodities together with the always present subadditivity implies Condition~\ref{inequality:main-assumption} but is not equivalent to it, i.e., our assumption is strictly more general.

In our lower bound, we will see that \emph{prediction} on \( S \) is needed: i.e., an algorithm has to offer types at facilities which were not yet requested.
Mainly, Condition~\ref{inequality:main-assumption} allows us to simplify the decision on which commodities to predict at a fixed point.
In \cref{section:outlook} we discuss how we could drop our assumption for future research.

\noindent
\paragraph{A different cost model}
We would like to briefly note that one could also formulate a different model for the MFLP.
Assume that a request \( r \) is served multiple commodities by a single facility at \( m \in M \).
In our model, the connection cost of \( r \) to \( m \) is counted only once.
This reflects the idea that multiple commodities are served by a single communication path (incurring cost).
One could argue that the connection cost should be counted separately per commodity of \( r \) that is served by the facility.
This model can be easily simulated in our model by replacing each request with \( s_{r}\subseteq S \) by \( |s_{r}| \) many requests demanding a single commodity.
Note that this possibly increases the sequence length by a factor of at most \( |S| \) in the online case.
However, it seems reasonable that the number of commodities \( |S| \) is polynomial in the number of requests \( n \) such that the competitive ratios of our algorithms increase only by a factor of \( 2 \).

\noindent
\paragraph{Additional notation}
We usually suppress the time in our notation to improve readability.
Note, however, that the set \( F \) of facilities opened by the algorithm as well as the set \( R \) of requests changes as time goes.
For convenience, let \( F(e) \subseteq F \) for a commodity \( e \in S \) be the set of facilities that are currently open offering \( e \).
Similarly, at a fixed point in time, let \( \hat{F} \subseteq F \) be the set of currently open facilities offering all commodities in $S$ and let \( R(e) \subseteq R \) be the set of requests in \( R \) that request \( e \in S \).
For a given request \( r \) and a commodity \( e\in s_{r} \), denote by \( \dist{F(e)}{r} \) the distance of \( r \) to the closest open facility offering \( e \).

\subsection{Related Work}\label{section:Introduction:Related-Work}

Facility location problems have long been of great interest for economists and computer scientists.
In this overview, we focus only on provable results for metric variants.

A comprehensive overview of different techniques used for approximation algorithms for the metric Facility Location Problem can be found in~\cite{DBLP:conf/approx/Shmoys00}.
The currently best approximation ratio for the problem is roughly~1.488~\cite{DBLP:journals/iandc/Li13} and a lower bound of 1.463 holds in case $NP\notin DTIME(n^{\BigO{\log\log n}})$~\cite{DBLP:journals/jal/GuhaK99}.
For our work, the primal-dual algorithm by Jain and Vazirani~\cite{DBLP:journals/jacm/JainV01} is particularly interesting, as it inspired algorithms for variants of facility location such as the online~\cite{Fotakis:2007:PAO:1224558.1224672} and the leasing variant~\cite{Nagarajan2013,DBLP:conf/sirocco/KlingHP12}, which in turn heavily influence our deterministic algorithm.

For the Online Facility Location problem, Meyerson~\cite{Meyerson2005} introduced a randomized algorithm which he argued was $\BigO{\log n}$-competitive.
He also showed a non-constant lower bound on the competitive ratio.
The algorithm was later shown to be $\BigO{\frac{\log n}{\log\log n}}$-competitive, which Fotakis~\cite{Fotakis2008} showed to be the best possible competitive ratio for any online algorithm. He also gave a deterministic algorithm with the same competitive ratio, resolving the question of the competitive ratio for Online Facility Location up to a constant.

Fotakis~\cite{Fotakis:2007:PAO:1224558.1224672} also provided a simpler online algorithm with a slightly worse asymptotic competitive ratio, but which runs much more efficiently and admits smaller constants in the analysis.
This algorithm was used to derive an algorithm for the leasing variant as well~\cite{Nagarajan2013} and we also use it as a basis for the approach to our model.
With regard to the competitive ratio, it should be noted that it was shown that Meyerson's algorithm in~\cite{Meyerson2005} performs much better if the scenario is not strictly adversarial.
In fact, gradually weakening the power of the adversary to influence the order of requests also decreases the competitive ratio~\cite{DBLP:conf/soda/Lang18}.
Allowing the algorithm to make small corrections to the position of its facilities also brings the competitive ratio down to a constant~\cite{DBLP:conf/spaa/FeldkordH18}.
This also motivates us to give a variant of Meyerson's algorithm for our model, which then naturally benefits from the same phenomena in non-adversarial scenarios or where decisions are not completely irreversible.

The first work on our model in the offline case was made by Ravi and Sinha~\cite{DBLP:journals/siamdm/RaviS10}, who constructed an $\BigO{\log |S|}$ approximation and showed that this cannot be improved by more than a constant by using a reduction from the weighted set cover problem.
The only restriction on the cost function is that it needs to be subadditive: $f_{m}^{a\cup b} \leq f_{m}^{a} + f_{m}^{b}$.
Shmoys et al.~\cite{DBLP:conf/soda/ShmoysSL04} showed that a constant approximation ratio can be achieved when the cost function is more restricted to be linear ($f_{m}^{a \cup b} = f_{m}^{a} + f_{m}^{b}$) and additionally ordered on the potential facility locations: i.e., between two facility locations all commodities are more expensive on one location than on the other.
Fleischer~\cite{DBLP:conf/aaim/FleischerLTZ06} considered the problem in the non-metric variant and showed an approximation ratio logarithmic in the number of requests, facility locations and commodities for both the capacitated und uncapacitated case.
Svitkina and Tardos~\cite{DBLP:journals/talg/SvitkinaT10} gave a constant approximation for hierarchical cost functions: i.e., opening costs are modeled by a tree with the requests as leaves, where the cost is derived by summing up the cost of connecting the respective requests to the root.
Finally, Poplawski and Rajaraman~\cite{DBLP:conf/soda/PoplawskiR11} considered approximation algorithms for a variant in which the requests are served by connecting them to a set of facilities covering all the commodities via a Steiner tree.

To the best of our knowledge, no work has been done on an online variant of our problem so far.

\subsection{Our Results}\label{section:Introduction:Outline}

The rest of this paper is structured as follows.
In \cref{section:lower-bounds}, we show that the competitive ratio depends on the size of $S$ by deriving a general lower bound of \( \Omega(\SqrtS + \lln) \) even for randomized algorithms on simple metric spaces such as the line and when assuming a cost function that depends only on the number of offered commodities.
Observe that it is trivial to achieve an algorithm having a competitive ratio of \( \mathcal{O}(|S|\cdot\lln) \) simply by solving an instance of the OFLP for each commodity separately, using Fotakis' algorithm~\cite{Fotakis2008}, for example.
Our main result is a deterministic algorithm for the problem achieving a competitive ratio of \( \mathcal{O}(\SqrtS\cdot\log n) \) in \cref{section:deterministic-algorithm}.
The algorithm is based on the primal dual algorithm by Fotakis \cite{Fotakis:2007:PAO:1224558.1224672}, but now has to incorporate the choice of the set of commodities for each facility.
Interestingly, the algorithm distinguishes only between facilities that serve a single commodity and facilities serving all commodities.

In our general analysis, we only assume Condition~\ref{inequality:main-assumption}, concerning the construction cost function \( f_{m}^{\sigma} \).
When restricting the instances to more specific construction cost functions, e.g., polynomials in the size of the set of offered commodities, we are able to show an adaptive lower bound as well as an improvement in the competitive ratio of the deterministic algorithm, both depending on the parameters of the cost function.
We elaborate on this in \cref{section:deterministic-algorithm:Improved-Bounds}.

In \cref{section:Randomized-Algorithm}, we complement our result by a randomized algorithm, achieving an expected competitive ratio of \( \mathcal{O}(\SqrtS\cdot\lln) \).
Our randomized algorithm achieves a slightly better competitive ratio than the deterministic approach and is much more efficient to implement.
The algorithm is based on Meyerson's randomized algorithm for the FLP \cite{Meyerson2005} and uses similar adaptions in the analysis as utilized in the deterministic algorithm.

\section{Lower Bounds}\label{section:lower-bounds}

Due to \cite{Fotakis2008}, we know that the lower bound on the competitive ratio for the OFLP is \( \Theta(\frac{\log n}{\log \log n}) \) even on line metrics.
Next, we prove a lower bound for the OMFLP that includes the total number of commodities \( |S| \) as captured in \cref{theorem:lower-bound}.
Combining both lower bounds yields the result presented in \cref{corollary:lower-bound}.
We start by presenting the proof of the lower bound.
Afterwards, we explain how the lower bound motivates the fundamental design decision for our algorithms so that they distinguish only between facilities serving a single commodity and facilities serving all commodities.

\begin{theorem}
  \label{theorem:lower-bound}
  No randomized online algorithm for the Online Multi-Com\-modity Facility Location problem can achieve a competitive ratio better than \( \Omega \left(\sqrt{|S|}\right) \), even on a single point.
\end{theorem}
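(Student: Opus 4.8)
The plan is to use Yao's principle together with the observation that on a single point the assignment cost is identically zero, so only construction cost remains and the task degenerates to online covering: at every moment the open facilities must jointly offer every commodity requested so far, and the algorithm pays only $\sum f^{\sigma}$ over its facilities. First I would fix a cost function depending only on the cardinality, $f^{\sigma} = 1 + |\sigma|/\SqrtS$, and check that it is subadditive and meets Condition~\ref{inequality:main-assumption}. Its decisive property is the separation it creates: a single commodity costs $\Theta(1)$ and all of $S$ costs $\Theta(\SqrtS)$, while any $\SqrtS$ commodities together cost only $1+\SqrtS/\SqrtS = 2$. This pins the optimum at a constant while, as I will argue, the online cost is forced up to $\Theta(\SqrtS)$.

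For the hard distribution I would place every request on the single point, let each demand a single commodity, and reveal the commodities in a uniformly random order, stopping after exactly $\SqrtS$ requests. Revealing in random order is essential: under a fixed order the algorithm could open one facility for the $\SqrtS$ commodities it knows will arrive and match the optimum. On every instance the optimum opens one facility offering exactly the requested commodities, so $\opt = 2 = \Theta(1)$; by Yao it then suffices to show $\E[\alg] = \SmallO{\SqrtS}$ for every deterministic algorithm.

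The heart is a dichotomy on the coverage $c$, the number of distinct commodities currently offered by open facilities. Let $A$ be the event that $c$ reaches $|S|/2$ at some point during the $\SqrtS$ requests, and write the construction cost as $F + T/\SqrtS$, where $F$ counts open facilities and $T$ is their total configuration size. On $A$ the facilities have total size $T \ge |S|/2$, so the cost is at least $\tfrac{1}{\SqrtS}\cdot\tfrac{|S|}{2} = \tfrac12\SqrtS$; hence $\tfrac{1}{\SqrtS}\E[T] \ge \tfrac12\SqrtS\,\Prob[A]$. For $A^{c}$ I would argue that whenever the current coverage is below $|S|/2$, the next requested commodity---uniform among those not yet requested---falls outside the coverage with probability at least $\tfrac12$, forcing a newly opened facility; since distinct such requests force distinct facilities and this per-step event is measurable with respect to the revealed prefix, summing over the $\SqrtS$ steps gives $\E[F] \ge \tfrac12\SqrtS\,\Prob[A^{c}]$. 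Adding the two estimates, $\E[\alg] = \E[F] + \tfrac{1}{\SqrtS}\E[T] \ge \tfrac12\SqrtS\,(\Prob[A] + \Prob[A^{c}]) = \tfrac12\SqrtS$, and together with $\opt = \Theta(1)$ this produces the ratio; the additive constant of competitiveness is absorbed by scaling every $f^{\sigma}$ or by letting $|S|\to\infty$.

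The step I expect to be the real obstacle is not the probabilistic bookkeeping but the joint design of the cost function and the instance length. The affine shape of $f^{\sigma}$ and the choice of exactly $\SqrtS$ requests are what make both branches of the dichotomy independently charge $\Theta(\SqrtS)$ while holding the optimum constant; a steeper or flatter cost, or a different length, breaks the balance. For instance the natural guess $f^{\sigma} = \sqrt{|\sigma|}$ pushes the optimum up to $\Theta(|S|^{1/4})$, and this argument then only delivers $\SmallO{|S|^{1/4}}$, so recognizing that the cost per commodity must be (essentially) linear in $|\sigma|$ is the crucial idea.
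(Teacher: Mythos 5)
Your proof is correct, and while its scaffolding matches the paper's (Yao's principle, a single point, a uniformly random set of \( \SqrtS \) commodities revealed one at a time, and a cost function that is essentially \( \max\{1,|\sigma|/\SqrtS\} \) so that \( \opt \) stays constant), the core probabilistic step is genuinely different. The paper splits on the \emph{algorithm's behaviour}: either it incurs at least \( \SqrtS/2 \) ``miss'' rounds, or it must predict at least \( \SqrtS/2 \) of the random commodities before they arrive, and in the latter case an exchangeability argument (the algorithm's predictions ``can be viewed as draws without replacement'') plus a hypergeometric tail bound shows \( \E[T]=\Omega(|S|) \). You instead split on the \emph{state of the solution} --- whether the coverage ever reaches \( |S|/2 \) --- charging \( T\geq |S|/2 \) on that event and, on its complement, lower-bounding the number of forced facility openings by summing the per-step conditional probability that the next uniformly chosen unrequested commodity is uncovered (this probability is \( \geq 1/2-1/\SqrtS \) rather than the \( 1/2 \) you state, which is immaterial). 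Your route trades the concentration inequality for plain linearity of expectation, which is arguably cleaner; the one point you should make explicit is that although \( A^{c} \) is not measurable in the prefix filtration, it is contained in the event ``coverage is below \( |S|/2 \) before step \( i \)'' for every \( i \), which is what legitimizes \( \E[F]\geq \Omega(\SqrtS)\cdot\Prob[A^{c}] \). Both arguments reach the same \( \Omega(\SqrtS) \) conclusion against a constant \( \opt \).
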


\begin{corollary}\label{corollary:lower-bound}
  No randomized online algorithm for the Online Multi-Com\-modity Facility Location problem can achieve a competitive ratio better than \( \Omega \left(\sqrt{|S|} + \frac{\log n}{\log \log n}\right) \), even on a line metric.
\end{corollary}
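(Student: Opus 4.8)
The plan is to obtain the combined bound additively, exploiting the elementary inequality $\max\{a,b\} \ge \frac{1}{2}(a+b)$. Since both of the lower bounds I want to combine are already stated for \emph{randomized} algorithms, no additional Yao-type argument is needed: it suffices to show that, over the class of line-metric instances, every randomized online algorithm for the OMFLP has competitive ratio both $\Omega(\SqrtS)$ and $\Omega(\lln)$. The corollary then follows because the competitive ratio is a supremum over all instances and therefore dominates the worst case of each of the two sub-families; the additive constant in the definition of the competitive ratio is inherited from the two cited bounds.

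For the first term I would observe that the construction witnessing \cref{theorem:lower-bound} lives on a single point, and a single point is a (degenerate) line metric. Hence the very same instances already force a ratio of $\Omega(\SqrtS)$ on line metrics. The only thing to verify is a request-count bookkeeping issue: the single-point construction uses some fixed number $g(|S|)$ of requests, so to express the bound in terms of $n$ I would simply require $n \ge g(|S|)$ (and, if desired, pad with further requests at the same point, which alters neither the adversary's leverage nor the asymptotics).

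For the second term I would invoke Fotakis' $\Omega(\lln)$ lower bound for the Online Facility Location problem on line metrics~\cite{Fotakis2008}. The OFLP is exactly the special case of the OMFLP with a single commodity ($|S|=1$): each request demands the one commodity, each opened facility offers it, and the connection cost equals the distance. A single-commodity hard instance is a legitimate OMFLP instance over the full set $S$ in which only one commodity is ever requested, and opening facilities configured with additional commodities is never beneficial there, so the bound transfers verbatim and yields $\Omega(\lln)$ on line-metric OMFLP instances with $n$ requests.

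Combining the two, any randomized online algorithm restricted to line metrics has competitive ratio at least $\max\{\Omega(\SqrtS),\, \Omega(\lln)\} \ge \frac{1}{2}\bigl(\Omega(\SqrtS) + \Omega(\lln)\bigr) = \Omega(\SqrtS + \lln)$, as claimed. The main subtlety, and the only place where care is really needed, is making the two instance families genuinely compatible: confirming that both a single point and the single-commodity case are captured by line-metric OMFLP, and reconciling the request-count parameter $n$ across the two bounds as described above. I do \emph{not} expect to need a genuine ``product'' instance that incurs both penalties simultaneously, since the competitive ratio is a worst-case supremum and the inequality $\max \ge \text{average}$ already converts the two separate bounds into their sum up to a constant factor.
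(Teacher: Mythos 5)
Your proposal is correct and follows essentially the same route as the paper, which simply states that combining \cref{theorem:lower-bound} with Fotakis' line-metric bound yields the corollary; you supply the (standard) details the paper leaves implicit: the single point embeds in a line, the single-commodity OFLP instance embeds in the OMFLP with a monotone cost function, and $\max\{a,b\}\geq \tfrac{1}{2}(a+b)$ turns the two separate bounds into the additive one.
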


\begin{proof}[Proof of \cref{theorem:lower-bound}]
  According to Yao's Principle~\cite{YaosPrinciple} (see e.g., \cite[Chapter 8]{DBLP:books/daglib/0097013} for details), it is sufficient to construct a probability distribution over demand sequences for which the expected ratio between the costs of the deterministic online algorithm performing best against the distribution and the optimal cost is \(\Omega (\sqrt{|S|}) \).
  We prove the theorem by first defining a suitable function for the facility opening costs.
  By \alg{}, we denote a deterministic online algorithm and by \opt{} an optimal offline algorithm for the OMFLP.
  To improve readability, we assume that \( \sqrt{|S|}\in \mathbb{N} \).
  We consider a single point \( m \in M \) of a given metric space.

  \paragraph{Facility opening costs.}
  Let the cost function for facilities for the point \( m\) be \( g(|\sigma|) = f_{m}^{\sigma} = \left\lceil \frac{|\sigma|}{\sqrt{|S|}} \right\rceil  \): i.e., the cost depends only on the size of the configuration.

  \paragraph{Sequence definition.}
  Consider a set \( S'\subset S \) of the size \( |S'| = \sqrt{|S|} \) containing randomly selected commodities.
  The commodities are selected uniformly and independently of each other.
  One at a time, trigger a single request demanding a single commodity in \( S' \) not yet requested at \( m \).

  \paragraph{Competitive ratio.}
  An optimal algorithm builds a single facility serving all the commodities in \( S' \) at \( m \).
  Hence, \opt{} pays no more than a total of \(  g(\sqrt{|S|}) = 1\).

  Contrary to \opt{}, \alg{} does not know the set \( S' \) until it has been revealed after \( |S'| = \sqrt{|S|} \) requests.
  In each time step, \alg{} has to serve the commodity being requested and can additionally buy commodities to cover potential future requests.
  Observe that if \alg{} does not predict, i.e., it only includes commodities that were already requested when building a facility, it builds \( \SqrtS \) facilities for a total price of \( \SqrtS \).

  We observe that \alg{} constructs facilities in \emph{rounds}, where in the \( i \)-th round a not yet covered commodity \( s\in S' \) is requested and \alg{} builds a facility serving \( s \) and \( t_{i} \) additional commodities.
  \( t_{i} \) is entirely chosen by \alg{} and some of the additionally covered commodities might be requested later.
  Note that we may assume that \alg{} does not build new facilities when a commodity that is already covered is requested.
  We can simply move whatever \alg{} then buys to the next round, and the rounds in which nothing happens can be removed from the following calculation.
  Let \( X \) be the number of rounds needed by \alg{}.
  Then the cost of \alg{} is determined by both \( X \) and \( T:=\sum_{i} t_{i} \), because \alg{} builds \( X \) facilities and covers \( T \) many commodities in total: i.e., the cost of \alg{} is at least $\max\{X,T/\sqrt{|S|}\}$.
  Consider \cref{figure:lower-bound:experiment} for a depiction.
  If \( X \geq \SqrtS/2 \), \alg{}'s competitive ratio is at least \( \Omega(\SqrtS) \).
  Therefore, assume that \( X < \SqrtS/2 \).
  Next, we show that in this case \( T \) is large on expectation.
  \begin{figure}[htb]
    \centering
    \includegraphics[width= 0.47\textwidth,page=2, trim=3.8cm 13cm 15.25cm 2.4cm, clip=true]{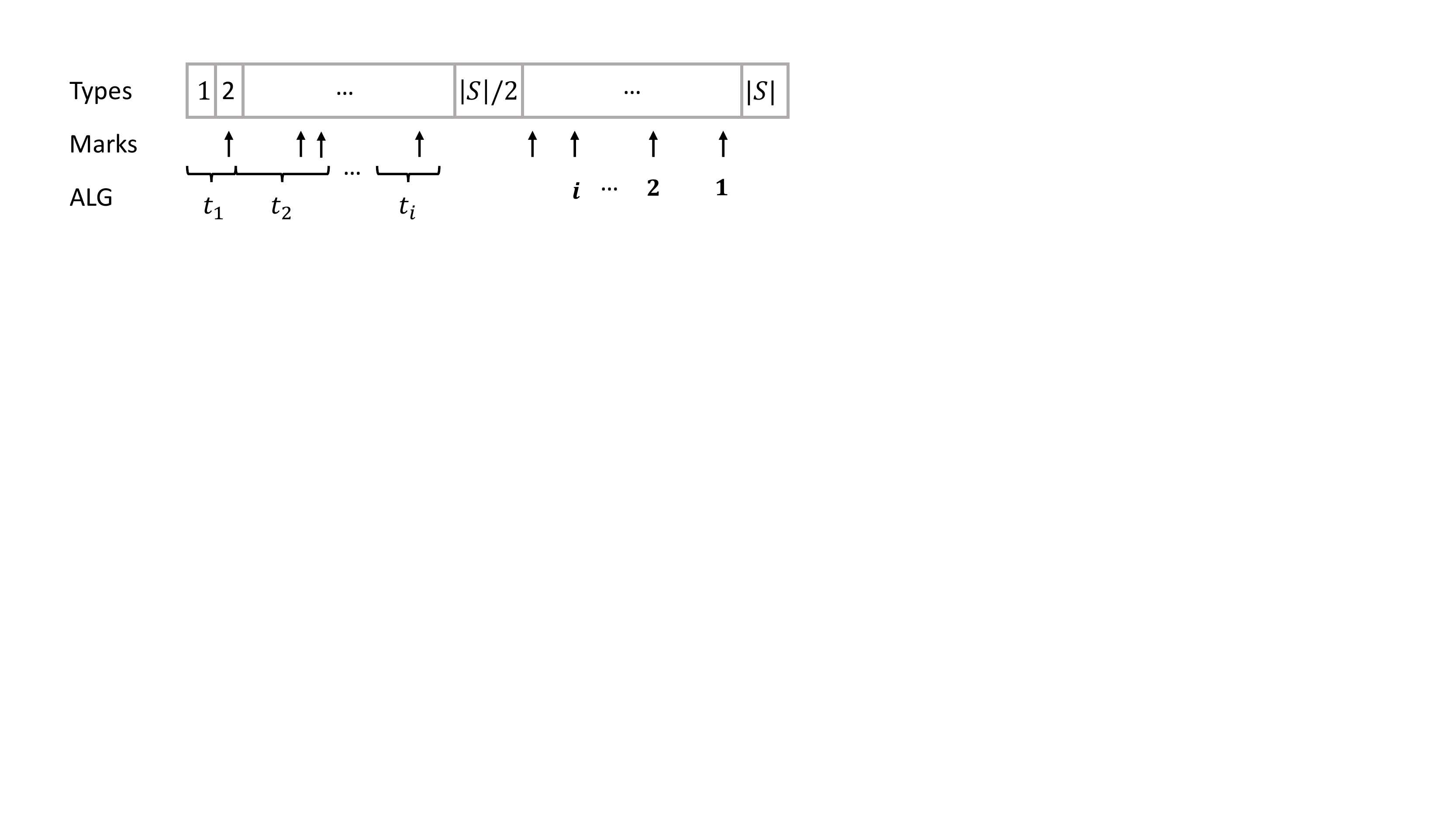}
    \caption{\alg{}'s behavior in rounds \( 1,\dots, X \). In round \( i \), a not yet covered commodity (filled) is requested and covered by a facility of \alg{}. Thereby, \alg{} covers \( t_{i} \) additional commodities. \alg{}'s cost is determined by \( X \) as well as \( T:=\sum_{i} t_{i} \), because it builds \( X \) facilities and covers at least \( T \) commodities.}
    \label{figure:lower-bound:experiment}
  \end{figure}

  Let \( S'_{a} \subset S' \) be the set of commodities that are not covered by \alg{} before they are requested: i.e., they are requested but not \emph{predicted} by \alg{}.
  Similarly, let \( S'_{b} = S' \setminus S'_{a} \)  be the set of commodities that are requested and predicted by \alg{}.
  Observe that \( |S'_{a}|=X < \SqrtS / 2  \) and \( |S'_{b}| \geq \SqrtS /2 \).
  Let \( S_{b} = S \setminus S'_{a} \) be the total set of commodities out of which \alg{} predicts in total \( T \) many, including the ones in \( S'_{b} \).
  Then \( |S_{b}| = |S| - |S'_{a}| \geq |S| - \SqrtS /2 \geq |S|/2 \).
  We are interested in bounding \( T \).
  Since the commodities of \( S_{b} \) are indistinguishable for \alg{} and they all have the same probability of being chosen for \( S' \) (unknown to \alg{}), \alg{}'s decision on which commodities are predicted can be viewed as arbitrary and independent of the chosen \( S'_{b} \).
  Thus, it is equivalent to model \alg{}'s selection by assuming that \alg{} draws \( T \) times without replacement out of the set \( S_{b} \) and covers all commodities of \( S'_{b} \).
  Then the expected number of draws \( \E[T] \) until \( S'_{b} \) is covered can be seen as
  \begin{align}
    \E[T]
     & = \sum_{i = |S'_{b}|}^{|S_{b}|} \Prob[T = i] \cdot i
    \geq \sum_{i=|S|/c}^{|S|/2} \Prob[T = i] \cdot \frac{|S|}{c}
    \nonumber                                                        \\
     & = \Prob\left[T \geq \frac{|S|}{c} \right] \cdot \frac{|S|}{c}
    \geq \Prob\left[T > \frac{|S|}{c} \right] \cdot \frac{|S|}{c}
    \label{inequality:lower-bound-expectation}
  \end{align}
  where \( c \geq 4 \) is a sufficiently large constant.
  Next, we show that with constant probability \( |S|/c \) draws are not sufficient to cover \( S'_{b} \): i.e., \( T > |S|/c \).

  Assume that we draw exactly \( |S|/c \) many times out of \( |S|/2 \) commodities of which \( \SqrtS / 2 \) ones are requested.
  Let \( Y \) be the number of drawn requested commodities.
  Then \( Y \) is hypergeometrically distributed (\( Y \sim \text{Hypergeometric}(|S|/2,\SqrtS / 2,|S|/c) \)) with mean \( \E[Y] = \SqrtS / c \).
  In case \( Y < \SqrtS / 2 \), not all commodities of \( S'_{b} \) are covered: i.e., \( T > |S|/c \).
  Since \( Y \) is hypergeometrically distributed we can apply the bounds of \cite{doi:10.1080/01621459.1963.10500830,CHVATAL1979285} and get
  \begin{align}
     & \Prob\left[T > \frac{|S|}{c} \right]
    = \Prob\left[Y < \frac{\SqrtS}{2}\right]
    = 1 - \Prob\left[Y \geq \frac{\SqrtS}{2}\right]
    \nonumber                                                                                 \\
     & = 1 - \Prob\left[Y \geq \frac{\SqrtS}{c} + \frac{\SqrtS}{2} - \frac{\SqrtS}{c} \right]
    \nonumber                                                                                 \\
     & = 1 - \Prob\left[Y \geq \E[Y] + \frac{\SqrtS\,(c-2)}{2\,c} \right]
    \nonumber                                                                                 \\
     & = 1 - \Prob\left[Y \geq \E[Y] + \frac{(c-2)}{2\SqrtS}\, \frac{|S|}{c} \right]
    \nonumber                                                                                 \\
     & \geq 1 - \te{e}^{-2\frac{(c-2)^{2}}{4\, |S|} \frac{|S|}{c}}
    = 1 - \te{e}^{-\frac{(c-2)^{2}}{2\,c}}
    \geq 1 - \te{e}^{-\frac{1}{2}}
    \geq \frac{1}{4}
    \label{inequality:lower-bound-prob-of-linear}
    .
  \end{align}
  Combining \cref{inequality:lower-bound-expectation} and \cref{inequality:lower-bound-prob-of-linear} yields
  \begin{align}
    \E[T]
    \geq \frac{|S|}{16}
    \label{inequality:lower-bound-expectation-in-S}
    .
  \end{align}
  Therefore, the expected cost for \alg{} is at least
  \begin{align*}
    \max\left\{X,\,g(\E[T])\right\}
    = \max\left\{X,\,\frac{\E[T]}{\SqrtS}\right\}
    \geq \frac{\SqrtS}{16}
    .
  \end{align*}
  Recapitulate that \opt{}'s cost is \( 1 \) and the theorem holds true.
\end{proof}

Our lower bound motivates the usage of \emph{prediction}.
Any algorithm that aims at achieving a competitive ratio depending on \( |S| \) by less than a linear factor has to offer commodities that were not yet requested at some point.
Otherwise, one can easily force it to build \( \Omega (|S|) \) facilities while \opt{} needs only a single one combining all necessary commodities for a cost that is a \( 1/|S| \) fraction of the algorithm's cost (with the choice of a suitable cost function).

When introducing prediction, it is unclear \emph{how} to choose the commodities that are offered while not yet requested.
In our lower bound we can see that a single rule helps us to simplify this decision significantly.
A simple way to have a tight bound against the lower bound on a single point is to construct only facilities serving a single commodity until \( \SqrtS \) many facilities have been constructed.
Afterwards, directly build a facility serving \emph{all} commodities.
Intuitively, do not predict as long as it is not worthwhile and if it is, cover everything.
When all commodities are covered, \opt{} has to cover at least \( \SqrtS \) commodities, which yields a competitive ratio of \( \mathcal{O}(\SqrtS) \) due to Condition~\ref{inequality:main-assumption}.

We denote facilities serving a single commodity as \emph{small facilities} and facilities serving all commodities as \emph{large facilities}.
Both of our algorithms are based on deciding between small and large facilities.
The main difficulty now is to incorporate the aforementioned prediction into a general metric and to establish a suitable threshold that dictates \emph{when} the algorithm switches from building small facilities to building a large one.

\section{A deterministic Algorithm}\label{section:deterministic-algorithm}

In the following section we present our deterministic algorithm for the OMFLP.
As motivated in the previous section, the algorithm considers only the construction of small and large facilities.

\subsection{Algorithm}\label{section:deterministic-algorithm:Algorithm}

Our algorithm \DetAlg{} (\cref{algorithm:deterministic-algorithm}) is shown below.
It is inspired by the primal dual formulation of Fotakis' deterministic algorithm \cite{Fotakis:2007:PAO:1224558.1224672} for the OFLP presented in \cite{Nagarajan2013}, which achieves a competitive ratio of \( \mathcal{O}(\log n) \).
\DetAlg{} achieves a competitive ratio of \( \mathcal{O}(\SqrtS \log n) \).
In its core, our algorithm uses the dual variables of each commodity that a request demands as an investment.
This investment is paid towards connecting to existing small/large facilities (see Constraints (1) and (2) below) as well as towards the construction of and the connection to new small/large facilities (see Constraints (3) and (4) below).
Thereby, all commodities demanded by a request invest together into the connection to or the construction of a large facility, because they all profit by having one shared connection.

Next, we present details on the investment phase.
Consider the following four constraints for a given request \( r \) with commodity set \( s_{r} \).
Our algorithm \DetAlg{} guarantees that the constraints always hold during its execution.

\begin{enumerate}
  \item \( a_{re} \leq \dist{F(e)}{r} \) for all \( e \in s_{r} \)
  \item \( \sum_{e\in s_{r}} a_{re} \leq \dist{\hat{F}}{r} \)
  \item \( (a_{re} - \dist{m}{r})_{+} + \\\sum_{j\in R: e\in s_{j}} (\min\{a_{je}, \dist{F(e)}{j}\} - \dist{m}{j})_{+} \leq f_{m}^{\{e\}} \)\\for all \( e\in s_{r} \)
  \item \( \left(\sum_{e\in s_{r}} a_{re} - \dist{m}{r}\right)_{+} + \\\sum_{j \in R} \left( \min \left\{\sum_{e\in s_{j}} a_{j e}, \dist{\hat{F}}{j} \right\} - \dist{m}{j} \right)_{+} \leq f_{m}^{S} \)
\end{enumerate}

Note that all sets and distances are taken with respect to the current time step.

\begin{algorithm}[htb]
  \caption{\DetAlg{} on arrival of request \( r \) with set \( s_{r} \)}
  \label{algorithm:deterministic-algorithm}
  \begin{algorithmic}[1]
    \While{Not all \( e\in s_{r} \) are served}
    \State Simultaneously increase all \( a_{re} \) for every \( e \in s_{r} \) that is
    \Statex \hspace*{0.7cm} not yet served.
    \If{Constraint (1) or (3) is tight for \( e\in s_{r} \) and \( m \in M \)}
    \State Freeze \( a_{re} \) and declare \( e \) served by \( m \).
    \State In case of Constraint (3), declare facility serving \( e \) at
    \Statex \hspace*{1.2cm} \( m \) temporary open.
    \EndIf
    \If{Constraint (2) or (4) is tight for \( m\in M \)}
    \State For all \( e'\in s_{r} \), freeze \( a_{re'} \) and declare \( e' \) served by \( m \).
    \State Remove all temporarily open facilities.
    \State In case of Constraint (4), open a large facility at \( m \).
    \EndIf
    \State Open all remaining temporary open facilities.
    \EndWhile
  \end{algorithmic}
\end{algorithm}

Requests that appeared earlier than the current one reinvest into small facilities exactly what they invested earlier, as can be seen in Constraint (3).
The decision on \emph{when} to open the first large facility implicitly depends on how much investment has been made towards small facilities.
This can be seen in the minimum term of Constraint (4).
For the first large facility in a certain area, at most the total investment of all requests for which a large facility would be worthwhile is invested.
Thereby, a facility becomes worthwhile if the investment minus the distance to the facility is greater than zero.
After the first large facility is established in an area, the investment of a request's commodity into a new facility is also bounded by the distance of the closest large facility to it.
In this way, only the initial investment is reinvested into large facilities in total.

\subsection{Analysis}\label{section:deterministic-algorithm:Analysis}

Next, we analyze the competitive ratio of our algorithm.
For this, we proceed very similar to \cite{Nagarajan2013}: i.e., we first show that the primal solution of the algorithm is bounded by the sum of all dual variables (\cref{section:deterministic-algorithm:Analysis:Bounded-Cost}) and then prove that an appropriate scaling of the dual variables leads to a feasible dual solution (\cref{section:deterministic-algorithm:Analysis:feasible-dual}).
By weak duality the competitive ratio of the algorithm is then bounded by the used scaling factor, resulting in the correctness of \cref{theorem:PD-extend:algorithm-competitive}.
For the second part of the analysis in \cref{section:deterministic-algorithm:Analysis:feasible-dual} we need to bound the objective function of a special class of weighted set cover problems.
We introduce the respective problem definition and show an upper bound on the total weight needed for a cover in \cref{section:deterministic-algorithm:Ordered-Covering}.
In the entire proof, let the scaling factor be \( \gamma = 1/(5 \sqrt{|S|} H_{n}) \), where \( H_{n} = \sum_{k=1}^{n} \frac{1}{k} \) is the \( n \)-th harmonic number.

\begin{theorem}\label{theorem:PD-extend:algorithm-competitive}
  \textnormal{\DetAlg{}} has a competitive ratio of
  \begin{align*}
    \BigO{\SqrtS \cdot \log n}
    .
  \end{align*}
\end{theorem}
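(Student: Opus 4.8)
The plan is to run the standard primal--dual competitive analysis in two stages and then close with weak duality. First I would bound the total cost incurred by \DetAlg{} from above by a constant times the value $\sum_{r\in R}\sum_{e\in s_{r}} a_{re}$ of the dual solution the algorithm maintains. Second, I would show that scaling every dual variable by $\gamma = 1/(5\SqrtS H_{n})$ produces a feasible solution of the simplified dual. Weak duality then gives $\textnormal{Cost}(\DetAlg{}) \le \gamma^{-1}\sum_{r,e} a_{re} \le \gamma^{-1}\,\textnormal{Cost}(\opt{})$, so the competitive ratio is at most $\gamma^{-1} = 5\SqrtS H_{n} = \BigO{\SqrtS\cdot\log n}$.

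For the first stage (\cref{section:deterministic-algorithm:Analysis:Bounded-Cost}) I would charge the two components of the primal cost separately against $\sum_{r,e} a_{re}$. Whenever a commodity $e$ of $r$ is declared served because Constraint~(1) or~(2) became tight, the frozen value of $a_{re}$ (respectively of $\sum_{e'\in s_{r}} a_{re'}$) equals exactly the distance to the facility it is connected to, so the connection cost is paid verbatim. Whenever Constraint~(3) or~(4) becomes tight and a small resp. large facility is opened, the tightness equation states that the construction cost $f_{m}^{\{e\}}$ resp. $f_{m}^{S}$ equals a sum of positive contributions $(\,\cdot - \dist{m}{\cdot}\,)_{+}$ drawn from the current and the reinvested dual variables; these contributions are bounded by the corresponding $a$-variables. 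Since earlier requests reinvest only their truncated contributions $\min\{a_{je},\dist{F(e)}{j}\}\le a_{je}$, the construction cost is covered by the duals up to a constant factor.

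The second stage (\cref{section:deterministic-algorithm:Analysis:feasible-dual}) is the crux: I must verify $\sum_{r}\bigl(\sum_{e\in s_{r}\cap\sigma}\gamma a_{re}-\dist{m}{r}\bigr)_{+}\le f_{m}^{\sigma}$ for \emph{every} configuration $\sigma$, whereas the invariants only directly control the single-commodity case $\sigma=\{e\}$ (Constraint~(3)) and the all-commodity case $\sigma=S$ (Constraint~(4)). Two effects combine here. First, the invariants bound only truncated sums, and passing from $\min\{a_{re},\dist{F(e)}{r}\}$ back to $a_{re}$ costs a factor $H_{n}$, obtained by ordering the requests and telescoping the excess into a harmonic series exactly as in Fotakis' analysis; this yields the $\log n$ term. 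Second, to bound the general-$\sigma$ expression I would interpret covering the commodities of $\sigma$ at $m$ either by single-commodity facilities or by one large facility as an instance of the weighted set cover problem of \cref{section:deterministic-algorithm:Ordered-Covering}, and use Condition~\ref{inequality:main-assumption}, i.e. $f_{m}^{S}/|S|\le f_{m}^{\sigma}/|\sigma|$, to control the intermediate cases. Concretely, for $|\sigma|$ below the $\SqrtS$ threshold the single-commodity constraints suffice, while for larger $\sigma$ the large-facility constraint together with $f_{m}^{S}\le(|S|/|\sigma|)\,f_{m}^{\sigma}$ suffices; the threshold $\SqrtS$ is exactly where the two bounds meet, mirroring the lower bound of \cref{theorem:lower-bound}.

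The main obstacle I anticipate is the general-$\sigma$ feasibility. The positive-part term does not decompose additively over commodities, since one cannot charge the full distance $\dist{m}{r}$ to each commodity of $\sigma\cap s_{r}$ separately, so the distance must be apportioned carefully among those commodities. This is precisely what the weighted-set-cover abstraction in \cref{section:deterministic-algorithm:Ordered-Covering} is designed to handle, and pinning down the constant in $\gamma$ (the factor $5$) will come out of balancing that covering bound against the $H_{n}$ loss from truncation. Once both stages are assembled, weak duality delivers the claimed ratio $\BigO{\SqrtS\cdot\log n}$.
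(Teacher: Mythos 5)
Your proposal follows the paper's proof essentially verbatim: bound the algorithm's cost by a constant times $\sum_{r,e}a_{re}$, show that scaling by $\gamma=1/(5\SqrtS H_{n})$ yields dual feasibility via the $c$-ordered covering bound of $2cH_{n}$ (splitting configurations at the threshold $\SqrtS$ and invoking Condition~\ref{inequality:main-assumption} for the large ones), and conclude by weak duality. The approach, the decomposition, and the constants all match the paper's argument.
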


\subsubsection{Bounding the algorithm's cost}\label{section:deterministic-algorithm:Analysis:Bounded-Cost}\(  \)

\noindent
Our main goal here is to show \cref{corollary:PD-extend:cost-of-algorithm-bounded-by-dual-vars}: i.e., the cost of the algorithm is bounded by the sum of all duals.
The proofs of the following lemmas are close to the proof of Lemma 4.1 in \cite{Nagarajan2013}, yet we have to carefully distinguish between small and large facilities and the respective investment of the requests.

\begin{lemma}\label{lemma:PD-extend:cost-of-assignment-bounded}
  The assignment cost of our algorithm's solution is bounded by \( \sum_{r\in R} \sum_{e\in s_{r}} a_{re} \).
\end{lemma}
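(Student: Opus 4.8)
The plan is to bound the assignment cost request by request: I will show that for every $r \in R$ the total distance from $r$ to the facilities it is finally connected to is at most $\sum_{e \in s_r} a_{re}$, and then sum over all requests. Fix a request $r$ and follow the while-loop of \DetAlg{} that processes it. In each iteration the simultaneous increase of the duals halts as soon as one of the four constraints becomes tight, so the commodities of $s_r$ are served either one at a time through Constraints (1)/(3) or all at once through Constraints (2)/(4).

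First I would establish a per-commodity distance bound for the ``small facility'' constraints. If $e$ is served because Constraint (1) is tight at $m$, then $m$ realizes $\dist{F(e)}{r}$ and $a_{re} = \dist{F(e)}{r} = \dist{m}{r}$, so the connection distance equals $a_{re}$. If $e$ is served because Constraint (3) is tight at $m$, then --- exactly as in the analysis of Lemma 4.1 in \cite{Nagarajan2013} --- the only term of the left-hand side that grows while $a_{re}$ is being raised is $(a_{re} - \dist{m}{r})_+$ (the contributions of the earlier requests $j$ are frozen, and the $\min\{\cdot,\dist{F(e)}{j}\}$ caps ensure that an already-open facility is instead handled by Constraint (1)); hence the constraint can become tight only while this term is strictly increasing, which forces $a_{re} \ge \dist{m}{r}$. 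In both cases the facility serving $e$ lies within distance $a_{re}$ of $r$.

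Now I separate two cases for $r$. If no large-facility constraint ever becomes tight during $r$'s processing, then every commodity $e \in s_r$ is served by some facility $f_e \in F'$ with $\dist{f_e}{r} \le a_{re}$. Since the assignment cost of $r$ counts each facility in $F'$ once and every facility in $F'$ serves at least one commodity of $r$, indexing the facilities of $F'$ by a commodity each serves gives
\[
  \sum_{f \in F'} \dist{f}{r} \le \sum_{e \in s_r} \dist{f_e}{r} \le \sum_{e \in s_r} a_{re}.
\]
If instead Constraint (2) or (4) becomes tight at some point $m$, the algorithm removes all temporarily opened facilities and (re)declares \emph{all} commodities of $s_r$ served by the single facility $m$, so $r$ is finally connected to $m$ alone. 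For Constraint (2) we have $\dist{m}{r} = \dist{\hat{F}}{r} = \sum_{e\in s_r} a_{re}$, and for Constraint (4) the same growth argument applied to $(\sum_{e \in s_r} a_{re} - \dist{m}{r})_+$ yields $\dist{m}{r} \le \sum_{e \in s_r} a_{re}$; either way the assignment cost of $r$ equals $\dist{m}{r} \le \sum_{e \in s_r} a_{re}$.

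Summing the per-request bound over all $r \in R$ gives the claim. I expect the main obstacle to be the careful justification of the distance bounds $\dist{m}{r} \le a_{re}$ and $\dist{m}{r} \le \sum_{e \in s_r} a_{re}$ at the instant a constraint becomes tight --- in particular arguing that Constraint (3)/(4) can only be activated by $r$'s own increasing investment, so that the relevant $(\cdot)_+$ term is strictly positive, and that the $\min$-caps together with Constraint (1) prevent a request from being forced onto a far, already-open facility at zero investment. This is precisely where the argument must diverge from \cite{Nagarajan2013} to account for the coexistence of small and large facilities and the overriding re-assignment performed when a large facility is opened.
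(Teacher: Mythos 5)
Your proposal is correct and follows essentially the same route as the paper's proof: a case split between a request served entirely by small facilities (bounding each connection distance by $a_{re}$ via tightness of Constraint (1) or (3)) and a request served by a single large facility (bounding the one connection distance by $\sum_{e\in s_{r}} a_{re}$ via tightness of Constraint (2) or (4)). You simply spell out in more detail the tightness arguments that the paper states tersely.
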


\begin{proof}
  For a request \( r \) it holds that either (i) all commodities of \( s_{r} \) are assigned to only small facilities or (ii) all commodities of \( s_{r} \) are assigned to a single large facility.

  In (i), for a fixed commodity \( e\in s_{r} \) either Constraint (1) or (3) was true.
  For Constraint 1) \( a_{re} = \dist{F(e)}{r} \) and for Constraint 3) \( a_{re} \geq \dist{m}{r} \) for the point \( m \) to which \( e \) is assigned.
  Therefore, the connection cost is bounded in \( a_{re} \) for each \( e\in s_{r} \).

  In (ii), either Constraint (2) or (4) is true.
  In any case the complete connection cost for request \( r \) is bounded by \( \sum_{e\in s_{r}} a_{re}  \).
\end{proof}

\begin{lemma}\label{lemma:PD-extend:cost-of-small-facilities-bounded}
  The construction cost for small facilities of our algorithm's solution is bounded by \( \sum_{r\in R} \sum_{e\in s_{r}} a_{re} \).
\end{lemma}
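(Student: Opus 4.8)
The plan is to charge the construction cost of the small facilities directly to the dual variables $a_{je}$, mimicking the primal--dual accounting of \cite{Nagarajan2013} but tracking each commodity separately. A small facility serving a commodity $e$ at a point $m$ is created only when Constraint~(3) becomes tight for some triggering request $r$ and the pair $(e,m)$, so that at the moment of opening its cost satisfies
\begin{align*}
f_{m}^{\{e\}} = (a_{re} - \dist{m}{r})_{+} + \sum_{j\in R:\, e\in s_{j}} \left(\min\{a_{je}, \dist{F(e)}{j}\} - \dist{m}{j}\right)_{+},
\end{align*}
all distances being evaluated at that time. First I would absorb the triggering term $(a_{re}-\dist{m}{r})_{+}$ into the sum: since Constraint~(1) guarantees $a_{re}\le \dist{F(e)}{r}$ then, it equals $(\min\{a_{re},\dist{F(e)}{r}\}-\dist{m}{r})_{+}$, i.e.\ the reinvestment term of $r$ itself. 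Thus the cost of each small facility serving $e$ is exactly $\sum_{j:\,e\in s_{j}}(\min\{a_{je},\dist{F(e)}{j}\}-\dist{m}{j})_{+}$, taken over all requests demanding $e$ that have arrived so far.

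The core step is a per-pair telescoping bound: for every commodity $e$ and every request $j$ with $e\in s_{j}$, the total reinvestment of $j$ into all \emph{permanently} opened small facilities serving $e$ is at most $a_{je}$. I would enumerate those facilities in the order $m_{1},\dots,m_{K}$ of their permanent opening, let $d_{k}=\dist{F(e)}{j}$ be the distance just before $m_{k}$ is opened, and write $c_{k}=\min\{a_{je},d_{k}\}$. Because $m_{k}$ stays open once built, at the opening of $m_{k+1}$ we have $d_{k+1}\le \dist{m_{k}}{j}$; moreover a strictly positive contribution at step $k$ forces $a_{je}>\dist{m_{k}}{j}$, whence $c_{k+1}\le d_{k+1}\le \dist{m_{k}}{j}$. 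Consequently the $k$-th contribution obeys
\begin{align*}
\left(\min\{a_{je},d_{k}\}-\dist{m_{k}}{j}\right)_{+} = c_{k}-\dist{m_{k}}{j} \le c_{k}-c_{k+1},
\end{align*}
and the sum over $k$ telescopes to at most $c_{1}\le a_{je}$.

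Finally I would swap the order of summation, grouping the facility costs first by commodity $e$ and then by request $j$, and apply the telescoping bound to obtain
\begin{align*}
\sum_{\text{small } (m,e)} f_{m}^{\{e\}} = \sum_{e\in S}\ \sum_{j:\,e\in s_{j}} \left(\text{reinvestment of } j \text{ into } e\text{-facilities}\right) \le \sum_{e\in S}\ \sum_{j:\,e\in s_{j}} a_{je} = \sum_{r\in R}\sum_{e\in s_{r}} a_{re},
\end{align*}
which is the claimed bound. The main obstacle I expect is the temporary-facility mechanism: facilities are first declared temporarily open and may later be removed when Constraint~(4) triggers a large facility. I would handle this by observing that the charging involves only permanently opened small facilities, and that the telescoping relies solely on the monotonicity $d_{k+1}\le \dist{m_{k}}{j}$, which holds precisely because each $m_{k}$ remains open; removed temporaries never enter the cost and only decrease distances, so they cannot break the bound. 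A secondary point to verify carefully is the absorption of the triggering term, namely that the sum in Constraint~(3) ranges over the earlier requests only, so merging $r$ introduces no double counting, which is exactly what Constraint~(1) secures.
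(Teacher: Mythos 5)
Your proposal is correct and follows essentially the same route as the paper's proof: both arguments fix a commodity \( e \), observe that Constraint~(3) bounds a newly opened small facility's cost by the sum of the requests' bids, and then show that each pair \( (j,e) \) contributes at most \( a_{je} \) in total because once a facility at \( m \) opens, the remaining bid is capped by \( \dist{F(e)}{j} \le \dist{m}{j} \). Your telescoping formulation \( c_{k}-\dist{m_{k}}{j} \le c_{k}-c_{k+1} \) is just a cleaner packaging of the paper's ``outstanding bids decrease by exactly the amount spent'' bookkeeping, and your explicit treatment of the triggering term via Constraint~(1) and of temporarily opened facilities makes points the paper leaves implicit.
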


\begin{proof}
  Throughout the proof we consider only a request's bid towards small facilities at all points.
  A request's bid is the contribution of the respective term in the sum of Constraints (1) or (3).
  We can ignore large facilities here, since their construction only reduces the bid of requests towards small facilities.

  Fix a commodity \( e\in S \) and consider only the small facilities offering \( e \).
  Observe that when a small facility is opened, its construction cost is bounded by the sum of all bids of requests for \( e \) (Constraint (3)).
  Any request \( r \) bids at most \( a_{re} \) towards any point \( m \in M \) due to the minimum term in Constraint (3).
  We show that when the bid of a request \( r \) is used to open a facility at \( m \in M \), all outstanding bids of \( r \) for other facilities serving \( e \) are reduced by the amount \( r \) bids for \( e \) towards \( m \).

  Assume that there are two locations \( m, m' \) without a small facility.
  Before commodity \( e \) of \( r \) is assigned, it bids \( (a_{re} - \dist{r}{m})_{+} \) and \( (a_{re} - \dist{m'}{r})_{+} \) towards both locations.
  Assume that a facility at \( m \) opens and commodity \( e \) of \( r \) is assigned to it.
  Then the bid of \( r \) for \( e \) towards \( m' \) reduces to \( (\dist{m}{r} - \dist{m'}{r})_{+} \).
  Thus, it was reduced by \( (a_{re} - \dist{m'}{r})_{+} - (\dist{m}{r} - \dist{m'}{r})_{+} = (a_{re} - \dist{m}{r}) \) which is the amount spent for the facility at \( m \).

  Assume that commodity \( e \) of \( r \) is assigned to a facility and \( m,m' \) are locations without a facility.
  When a facility at \( m \) opens and the bid of \( r \) for \( e \) reduces, it reduces by \( (\min\{a_{re}, \dist{F(e)}{r}\} - \dist{m}{r})_{+}  \) which is greater than \( 0 \): i.e., \( m \) is closer to \( r \) than any already open facility offering \( e \) and \( \dist{m}{r} \leq a_{re} \).
  We will show that the bid of \( r \) for \( e \) at \( m' \) reduces by exactly this amount.
  Once a small facility for \( e \) is opened at \( m \), \( \min\{a_{re}, \dist{F'(e)}{r}\} = \dist{F'(e)}{r} = \dist{m}{r} \) (where \( F' \) denotes the new facility set containing \( m \)).
  As a side note, when \( a_{re} \geq \dist{F(e)}{r} \) holds once, it will hold for all future configurations since we do not delete facilities.
  The bid \( r \) spends for \( e \) towards \( m' \) reduces by
  \begin{align*}
     & (\min\{a_{re}, \dist{F(e)}{r}\} - \dist{m'}{r})_{+}              \\
     & \phantom{=} - (\min\{a_{re}, \dist{F'(e)}{r}\}-\dist{m'}{r})_{+} \\
     & = (\min\{a_{re}, \dist{F(e)r}\}-\dist{m'}{r})_{+}                \\
     & \phantom{=} - (\dist{m}{r} - \dist{m'}{r})                       \\
     & = (\min\{a_{re}, \dist{F(e)}{r}\}-\dist{m}{r})_{+}
    .
  \end{align*}
\end{proof}

The proof of \cref{lemma:PD-extend:cost-of-large-facilities-bounded} is very close to the proof of \cref{lemma:PD-extend:cost-of-small-facilities-bounded}.

\begin{restatable}{lemma}{lemmaCostForLargeFacilitiesBounded}\label{lemma:PD-extend:cost-of-large-facilities-bounded}
  The construction cost for large facilities of our algorithm's solution is bounded by \(  \sum_{r\in R } \sum_{e\in s_{r}} a_{re} \).
\end{restatable}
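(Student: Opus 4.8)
The plan is to mirror the proof of \cref{lemma:PD-extend:cost-of-small-facilities-bounded} almost verbatim, replacing the per-commodity investment \( a_{re} \) of Constraint (3) by the \emph{aggregate} investment \( \sum_{e\in s_{r}} a_{re} \) of Constraint (4), and the commodity-specific facility set \( F(e) \) by the set \( \hat{F} \) of large facilities. Concretely, I would define a request's \emph{bid} towards a large facility at a point \( m \) as the corresponding summand in Constraint (4), namely \( (\sum_{e\in s_{r}} a_{re} - \dist{m}{r})_{+} \) before \( r \) is served by a large facility and \( (\min\{\sum_{e\in s_{r}} a_{re}, \dist{\hat{F}}{r}\} - \dist{m}{r})_{+} \) afterwards. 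Constraint (4) guarantees that whenever a large facility is opened at \( m \) (i.e., the constraint becomes tight), its construction cost \( f_{m}^{S} \) is at most the sum of these bids over all requests, so it suffices to show that the \emph{total} bid spent by any single request on large facilities never exceeds \( \sum_{e\in s_{r}} a_{re} \).

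First I would note that the minimum term in Constraint (4) immediately caps the bid of \( r \) towards any point by \( \sum_{e\in s_{r}} a_{re} \), exactly as the minimum term in Constraint (3) capped the per-commodity bid by \( a_{re} \). The heart of the argument is then the same telescoping/charging step: I would show that when \( r \)'s bid opens (or helps open) a large facility at a point \( m \), all of \( r \)'s outstanding bids towards other candidate large-facility locations are reduced by precisely the amount \( r \) spent at \( m \), so that each request's aggregate budget is charged at most once.

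For the not-yet-served case I would take two locations \( m, m' \) without a large facility, where \( r \) bids \( (\sum_{e\in s_{r}} a_{re} - \dist{m}{r})_{+} \) and \( (\sum_{e\in s_{r}} a_{re} - \dist{m'}{r})_{+} \); once a large facility opens at \( m \) and serves \( r \), the bid toward \( m' \) drops to \( (\dist{m}{r} - \dist{m'}{r})_{+} \), a reduction of exactly \( (\sum_{e\in s_{r}} a_{re} - \dist{m}{r}) \), matching the spend at \( m \). For the already-served case, after a large facility opens at \( m \) one has \( \min\{\sum_{e\in s_{r}} a_{re}, \dist{\hat{F}'}{r}\} = \dist{\hat{F}'}{r} = \dist{m}{r} \) (with \( \hat{F}' \) the updated large-facility set), and the same two-line computation as in \cref{lemma:PD-extend:cost-of-small-facilities-bounded} shows the bid toward \( m' \) reduces by \( (\min\{\sum_{e\in s_{r}} a_{re}, \dist{\hat{F}}{r}\} - \dist{m}{r})_{+} \), again equal to the amount spent. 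Summing over the opened large facilities, each request's contribution telescopes down to at most its initial aggregate investment \( \sum_{e\in s_{r}} a_{re} \), and summing over all requests yields the claimed bound \( \sum_{r\in R}\sum_{e\in s_{r}} a_{re} \).

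The step I expect to be the main obstacle is the bookkeeping at the transition from \emph{``no large facility near \( r \)''} to \emph{``a large facility exists,''} since that is where the \( \min \) term in Constraint (4) switches from the raw sum \( \sum_{e\in s_{r}} a_{re} \) to the distance \( \dist{\hat{F}}{r} \); I would need to verify that the reinvestment of already-served requests (the second sum in Constraint (4)) reinvests exactly what was invested before and nothing more, so that no unit of \( \sum_{e\in s_{r}} a_{re} \) is charged twice. This is the large-facility analogue of the side remark in the small-facility proof that once \( a_{re}\ge \dist{F(e)}{r} \) holds it continues to hold (as facilities are never deleted), and it is precisely what keeps the telescoping sum bounded.
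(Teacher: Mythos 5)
Your proposal is correct and follows essentially the same route as the paper's own proof: bounding each request's bid by the aggregate investment via the minimum term in Constraint (4), then showing through the same two-case telescoping argument (request not yet served by a large facility versus already served) that outstanding bids toward other locations drop by exactly the amount spent. The transition point you flag as the main obstacle is handled in the paper by precisely the observation you anticipate, namely that once \( \sum_{e\in s_{r}} a_{re} \geq \dist{\hat{F}}{r} \) holds it persists because facilities are never deleted.
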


\begin{proof}
  Observe that when a large facility is opened, its construction cost is bounded by the sum of all bids of requests (Constraint (4)).
  Any request \( r \) bids only at most \( \sum_{e\in s_{r}} a_{re} \) towards a large facility at any point \( m \in M \), due to the minimum term in Constraint (4).
  We show that when the bid of a request \( r \) is used to open a large facility at \( m \in M \), all outstanding bids of \( r \) for other large facilities are reduced by the amount \( r \) bids towards \( m \), similar to the case of the small facilities.

  Assume that there are two locations \( m, m' \) without a large facility.
  Before \( r \) is assigned, it bids \( (\sum_{e\in s_{r} } a_{re} - \dist{r}{m})_{+} \) and \( (\sum_{e\in s_{r} } a_{re} - \dist{m'}{r})_{+} \) towards both locations.
  Assume that a large facility at \( m \) opens and \( r \) is assigned to it.
  Then the bid of \( r \) towards the large facility at \( m' \) reduces to \( (\dist{m}{r} - \dist{m'}{r})_{+} \).
  Thus, it was reduced by \( ( \sum_{e\in s_{r} } a_{re} - \dist{m'}{r})_{+} - (\dist{m}{r} - \dist{m'}{r})_{+} = ( \sum_{e\in s_{r} } a_{re} - \dist{m}{r}) \), which is the amount spent for the large facility at \( m \).

  Assume that \( r \) is already assigned to a facility and \( m,m' \) are locations without a large facility.
  When a large facility at \( m \) opens and the bid of \( r \) reduces, it reduces by \( (\min \{ \sum_{e\in s_{r} } a_{re}, \dist{\hat{F}}{r}\} - \dist{m}{r})_{+}  \), which is greater than \( 0 \): i.e., \( m \) is closer to \( r \) than any already open large facility and \( \dist{m}{r} < \sum_{e\in s_{r} } a_{re} \).
  We will show that the bid of \( r \) for a large facility at \( m' \) reduces by exactly this amount.
  Once the large facility at \( m \) is opened, \( \min \{ \sum_{e\in s_{r} } a_{re}, \dist{\hat{F}'}{r}\} = \dist{\hat{F}'}{r} = \dist{m}{r} \) (where \( \hat{F}' \) denotes the new facility set containing large facilities including only the new one at \( m \)).
  As a side note, if \( \sum_{e\in s_{r} } a_{re} > \dist{\hat{F}}{r} \) holds once, it will hold for all future configurations since we do not delete facilities.
  The bid \( r \) spends towards a large facility at \( m' \) reduces by
  \begin{align*}
     & (\min \{\sum_{e\in s_{r} } a_{re}, \dist{\hat{F}}{r}\} - \dist{m'}{r})_{+}                \\
     & \phantom{=} - (\min \{\sum_{e\in s_{r} } a_{re}, \dist{\hat{F}'}{r}\} - \dist{m'}{r})_{+} \\
     & = (\min \{\sum_{e\in s_{r} } a_{re}, \dist{\hat{F}}{r}\} - \dist{m'}{r})_{+}              \\
     & \phantom{=} - (\dist{m}{r} - \dist{m'}{r})_{+}                                            \\
     & = \min \{\sum_{e\in s_{r} } a_{re}, \dist{\hat{F}}{r}\} - \dist{m}{r}
    .
  \end{align*}
\end{proof}

\begin{corollary}\label{corollary:PD-extend:cost-of-algorithm-bounded-by-dual-vars}
  The cost of the algorithm's solution is bounded by \( 3 \sum_{r \in R} \sum_{e\in s_{r} } a_{re}  \).
\end{corollary}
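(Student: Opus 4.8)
The plan is to observe that this corollary is simply the aggregation of the three preceding lemmas, so the work is to argue that the three cost categories they bound exhaust the algorithm's total cost without overlap. First I would note that by construction \DetAlg{} only ever opens small facilities (serving a single commodity) and large facilities (serving all of \( S \)), so every facility the algorithm builds falls into exactly one of these two classes. Consequently the total cost of the algorithm's solution decomposes into exactly three disjoint summands: the assignment (connection) cost, the construction cost of all small facilities, and the construction cost of all large facilities. No cost is counted twice and none is omitted.

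Next I would invoke the three bounds already established. \cref{lemma:PD-extend:cost-of-assignment-bounded} bounds the assignment cost by \( \sum_{r\in R}\sum_{e\in s_{r}} a_{re} \); \cref{lemma:PD-extend:cost-of-small-facilities-bounded} bounds the small-facility construction cost by the same quantity; and \cref{lemma:PD-extend:cost-of-large-facilities-bounded} bounds the large-facility construction cost by it as well. Adding the three inequalities termwise yields
\begin{align*}
  \textnormal{Cost}(\DetAlg{}) \leq 3 \sum_{r\in R}\sum_{e\in s_{r}} a_{re},
\end{align*}
which is precisely the claim.

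There is no substantive obstacle here; the statement is a bookkeeping consequence of the earlier lemmas. The only point that genuinely requires care—and which I would state explicitly to keep the argument airtight—is the exhaustiveness and disjointness of the decomposition: one must confirm that the algorithm never incurs a cost outside these three categories (guaranteed by the small/large dichotomy of \cref{algorithm:deterministic-algorithm}) and that the same dual mass \( \sum_{r}\sum_{e\in s_{r}} a_{re} \) may legitimately be charged once to each of the three categories, since the three lemmas bound genuinely distinct portions of the primal cost against a common dual quantity rather than against disjoint shares of it. Once this is noted, the summation is immediate.
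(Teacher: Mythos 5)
Your proof is correct and takes exactly the route the paper intends: the corollary is stated without proof precisely because it follows by summing the bounds of \cref{lemma:PD-extend:cost-of-assignment-bounded}, \cref{lemma:PD-extend:cost-of-small-facilities-bounded}, and \cref{lemma:PD-extend:cost-of-large-facilities-bounded} over the three disjoint cost components. Your explicit remark that each lemma charges the full dual mass \( \sum_{r}\sum_{e\in s_{r}} a_{re} \) separately (which is where the factor \( 3 \) comes from) is a correct and welcome clarification.
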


\subsubsection{\texorpdfstring{\( c \)}{c}-ordered covering}\label{section:deterministic-algorithm:Ordered-Covering}\(  \)

\noindent
Before we continue with the analysis, we introduce a special class of the weighted set cover problem.
A good solution to instances of this class is needed in the proofs of \cref{lemma:PD-extend:feasability-for-medium-configurations} and \cref{lemma:PD-extend:feasability-for-configurations}.
Our instances are defined below and we aim at finding a minimal weight covering of the set \( \{1,\dots,n\} \).

\begin{definition}[\( c \)-ordered covering]
  Consider elements \( 1, \dots, n \) and a given parameter \( c \geq 1 \).
  An instance for \( c \)-ordered covering is given as follows.
  For element \( i \), define \( A_{i} \subseteq \{ 1, \dots, i-1 \} \) and \( B_{i} \subseteq \{ 1, \dots, i-1\} \) such that \( A_{i} \cap B_{i} = \emptyset \) and \( A_{i} \cup B_{i} = \{1,\dots, i-1\} \).
  For any two elements \( i \) and \( j \) with \( i < j \) it holds \( B_{i} \subseteq B_{j} \).
  For every \( i = 1,\dots,n \) let there be a set \( \{i\} \) with weight \( \frac{c}{|B_{i}|+1} \) and a set \( \{i\} \cup A_{i} \) with weight \( c \).
\end{definition}

We will show that a covering with a weight of at most \( 2 c H_{n} \) can always be achieved.
For this, let us introduce some notation.
We call a set of elements \( \{i, \dots, j\} \subseteq \{1,\dots, n\} \) with \(  i\leq j \) of maximum cardinality a \emph{block} if \( B_{i} = B_{j} \).
For convenience, we say an element \( i \) copes the elements in \( A_{i} \).
Note that within a block, the \( B_{i} \) do not change.
Thus, each element copes all the previous elements in its block and possibly more elements.

Our proof consists of the following two steps:
\begin{enumerate}
  \item Given a \( c \)-ordered covering instance of length \( n \), we can cover \( x > 1 \) elements with a total weight of \( 2 c\,\sum_{i=n-x}^{n} \frac{1}{i} \).
  \item Given a \( c \)-ordered covering instance of length \( n \), the \( x \) previously covered elements can safely be removed from the instance and we can create a new ordered covering instance of length \( n - x \).
\end{enumerate}
It directly follows that the set \( \{1,\dots,n\} \) can be covered by a \( c \)-ordered covering instance with a weight of \( 2 c H_{n} \).

\begin{lemma}\label{lemma:PD:Subproblem:Covering-with-harmonic-weight}
  Given a \( c \)-ordered covering instance of length \( n \), we can cover \( x \geq 1 \) elements with a total weight of \( 2 c\, \sum_{i=n-x}^{n} \frac{1}{i} \).
\end{lemma}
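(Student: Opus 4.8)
The plan is to reduce everything to an analysis of the \emph{last} block and to make the singleton-versus-big-set decision based purely on its \emph{position}, not on a direct comparison of the two set weights. Write $\{b,\dots,n\}$ for the last block, let $B := B_n$ be its common set, $\beta := |B|$, and $\ell := n-b+1$ its length. Since $B\subseteq\{1,\dots,b-1\}$ we have $\beta\le b-1$, equivalently $\beta+\ell\le n$, and every element of the block carries the same singleton weight $\frac{c}{\beta+1}$. I would then split into the two cases $n\le 2\beta+1$ and $n\ge 2\beta+2$, i.e.\ according to how $n$ compares with twice the size of $B$.

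In the first case ($n\le 2\beta+1$) I would cover the block by its $\ell$ singletons. The key observation is that every block element $i\in\{b,\dots,n\}$ satisfies $i\le n\le 2\beta+1\le 2(\beta+1)$, so $\frac{c}{\beta+1}\le\frac{2c}{i}$. Summing the singleton weights and charging each one to its harmonic term gives a total of at most $2c\sum_{i=b}^{n}\frac1i\le 2c\sum_{i=b-1}^{n}\frac1i$. Setting $x:=\ell$ (so $n-x=b-1$) yields the claim, because the $\ell$ covered elements are exactly the top $\ell$ of the instance.

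In the second case ($n\ge 2\beta+2$, so $\beta<n/2$) I would instead take the big set $\{n\}\cup A_n$ of the last element, which covers $\{1,\dots,n\}\setminus B$, that is $x:=n-\beta\ge 1$ elements, at weight exactly $c$. Since $\beta<n/2$, a short integral estimate gives $\sum_{i=\beta}^{n}\frac1i\ge\ln\frac{n+1}{\beta}>\ln 2>\frac12$, hence $c\le 2c\sum_{i=\beta}^{n}\frac1i=2c\sum_{i=n-x}^{n}\frac1i$, as required. It is essential here to credit \emph{all} $n-\beta$ covered elements rather than only the $\ell$ elements of the block; otherwise the sum $\sum_{i=b-1}^{n}\frac1i$ could be far too small when the block starts late but carries a small $B$.

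The remaining work is boundary bookkeeping, which I expect to be the only delicate point. When $\beta=0$, monotonicity forces $B_i=\emptyset$ for all $i$, so the whole instance is a single block and the big set of $n$ covers everything at weight $c\le 2cH_n$; together with the base case $n=1$ this handles the degenerate situations where the left endpoint $n-x$ of the stated sum would otherwise drop to $0$. The arithmetic is routine; the one genuinely non-obvious step is choosing the case split by position ($n$ versus $2\beta$) instead of by comparing the singleton and big-set costs, since only the position-based split makes the per-element charge $\frac{c}{\beta+1}\le\frac{2c}{i}$ and, dually, the estimate $\sum_{i=\beta}^{n}\frac1i\ge\frac12$ come out correctly.
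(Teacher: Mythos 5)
Your proof is correct and is essentially the paper's argument: your case split \( n\le 2\beta+1 \) versus \( n\ge 2\beta+2 \) is exactly the paper's rule of selecting whichever of the two options (all singletons of the last block, or the big set \( \{n\}\cup A_{n} \)) is cheaper per covered element, since the per-element costs are \( c/(\beta+1) \) and \( c/(n-\beta) \) and these denominators sum to \( n+1 \). The remaining differences are cosmetic: you charge element \( i \) the amount \( 2c/i \) directly and use an integral estimate in the big-set case, where the paper uses the single uniform bound \( \min\{c/(n-\beta),\,c/(\beta+1)\}\le 2c/n \) and multiplies by the number of covered elements.
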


\begin{proof}
  Consider the following two choices that cover at least the elements of the last block.
  \begin{enumerate}
    \item Select the set \( \{n\} \cup A_{n} \) with a weight of \( c \).
    \item For every element \( i \) of the last block, select the set \( \{i\} \) with a weight of \( c/(|B_{n}|+1) \) each.
  \end{enumerate}
  Observe that the element \( n \) copes \( n-|B_{n}| \) elements.
  Hence, the weight per coped element in case 1 is \( c/(n-|B_{n}|) \).
  Depending on which choice is cheaper per element, select one of the two choices.
  Now, the weight per selected element is bounded by
  \begin{align*}
    \min \left\{ \frac{c}{n-|B_{n}|}, \frac{c}{|B_{n}|+1} \right\} \leq \frac{2\,c}{n}.
  \end{align*}
  Assume \( x \) elements were covered.
  Then the total weight for the covered elements is
  \begin{align*}
    \sum_{i=n-x}^{n} \frac{2\,c}{n} \leq 2 c\, \sum_{i=n-x}^{n} \frac{1}{i}.
  \end{align*}
\end{proof}

\begin{lemma}\label{lemma:PD:Subproblem:Instance-can-be-rebuildt}
  Given a \( c \)-ordered covering instance of length \( n \), the element \( n \) and \( x \geq 0 \) arbitrary elements that are coped by it can be removed from the instance.
  We can transform the remaining instance into a new \( c \)-ordered covering instance of length \( n - x - 1 \).
\end{lemma}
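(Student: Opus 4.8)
The plan is to build the new instance by simply \emph{restricting} each set $A_i$ and $B_i$ of a surviving element to the elements that remain, and then relabelling the survivors in increasing order. Concretely, write $D = \{n\} \cup X$ for the removed set, where $X \subseteq A_n$ with $|X| = x$ collects the coped elements we delete, and let $P = \{1,\dots,n\}\setminus D$ be the set of $n-x-1$ survivors together with the unique order-preserving bijection $\pi\colon P \to \{1,\dots,n-x-1\}$. For every $i \in P$ I would set $A'_i = A_i \setminus D$ and $B'_i = B_i \setminus D$, and then declare the new instance to consist, for each survivor, of the singleton $\{\pi(i)\}$ and the set $\{\pi(i)\}\cup \pi(A'_i)$, with $\pi(B'_i)$ playing the role of the $B$-set of $\pi(i)$.

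The crucial structural fact, and the step on which everything hinges, is that deleting coped elements does not change any $B$-set at all. Indeed, suppose $k$ is coped by $n$, i.e. $k \in A_n$; since $A_n \cap B_n = \emptyset$ we have $k \notin B_n$, and by the monotonicity $B_i \subseteq B_n$ for every $i \le n$ this forces $k \notin B_i$ for all such $i$. Moreover $n$ itself lies in no $B_i$ with $i \le n$, as $B_i \subseteq \{1,\dots,i-1\}$. Hence no element of $D$ occurs in any $B_i$, so $B'_i = B_i \setminus D = B_i$ for every survivor $i$. Consequently the singleton weight $c/(|B'_i|+1) = c/(|B_i|+1)$ is preserved verbatim, and only the $A$-sets shrink.

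With $B'_i = B_i$ in hand the remaining verification is routine bookkeeping. Disjointness is immediate from $A'_i \subseteq A_i$ and $B'_i = B_i$. For the partition property I would compute, for a survivor $i$,
\begin{align*}
A'_i \cup B'_i = (A_i \setminus D) \cup B_i = (A_i \cup B_i)\setminus D = \{1,\dots,i-1\}\setminus D,
\end{align*}
which is exactly the set of survivors below $i$; applying the order-preserving $\pi$ turns this into $\{1,\dots,\pi(i)-1\}$ and likewise maps $A'_i, B'_i$ into subsets of $\{1,\dots,\pi(i)-1\}$. Monotonicity transfers directly: for survivors $i < j$ we have $\pi(i) < \pi(j)$ and $B'_i = B_i \subseteq B_j = B'_j$, so $\pi(B'_i) \subseteq \pi(B'_j)$. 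Finally the weights match the definition — each large set $\{\pi(i)\}\cup\pi(A'_i)$ keeps weight $c$ and each singleton keeps weight $c/(|B_i|+1)$ — so the relabelled family is a genuine $c$-ordered covering instance of length $n-x-1$, as claimed.

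I would flag the identity $B'_i = B_i$ as the only nontrivial point: it is precisely the restriction to \emph{coped} elements (rather than arbitrary ones) that keeps the $B$-sets frozen, which is what makes the rebuilt instance a faithful continuation of the original and lets the harmonic weight bound recurse cleanly when this lemma is combined with \cref{lemma:PD:Subproblem:Covering-with-harmonic-weight}.
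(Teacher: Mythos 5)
Your proposal is correct and follows essentially the same route as the paper's proof: the key observation in both is that an element coped by $n$ lies in no $B_j$ (and neither does $n$ itself), so deletion leaves every $B$-set and hence every singleton weight unchanged, and only the partition condition $A_j\cup B_j=\{1,\dots,j-1\}$ needs repairing by an order-preserving relabelling. The only difference is presentational — you remove all $x+1$ elements at once via an explicit bijection, while the paper removes them one at a time and iterates.
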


\begin{proof}
  Observe that the element \( n \) can safely be removed from the instance by simply deleting the sets \( \{ n \} \) and \( \{ n \} \cup A_{n} \).

  Any other element \( i \) that is coped by \( n \) is not in any \( B_{j} \) for all \( j = 1, \dots, n \).
  Removing \( i \) (including the sets \( \{i\} \) and \( \{i\} \cup A_{i} \)) thus does not influence any \( B_{j} \), such that the following still holds:
  \begin{itemize}
    \item The weights of all remaining sets are untouched.
    \item For all remaining \( j \): \( A_{j} \cap B_{j} = \emptyset \).
    \item For all remaining elements \( j \) and \( k \) with \( j < k \): \( B_{j} \subseteq B_{k} \).
  \end{itemize}
  The condition that for all remaining \( j \) it has to hold \( A_{j} \cup B_{j} = \{1,\dots,j-1\} \) is violated due to the removal of \( i \).
  However, it can easily be fixed by consistently renaming every element \( j > i \) to \( j - 1 \).
  The resulting instance is a \( c \)-ordered covering instance not containing \( i \).

  The described procedure can be repeated for \( x > 1 \) arbitrary elements coped by \( n \), resulting in a \( c \)-ordered covering instance of length \( n - 1 - x \).
\end{proof}

\begin{lemma}\label{corollary:PD:Subproblem:covering-works-with-some-weight}
  The set \( \{1,\dots,n\} \) can be covered by a \( c \)-ordered covering instance with a weight of \( 2 c H_{n} \).
\end{lemma}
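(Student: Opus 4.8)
The plan is to combine Lemma~\ref{lemma:PD:Subproblem:Covering-with-harmonic-weight} and Lemma~\ref{lemma:PD:Subproblem:Instance-can-be-rebuildt} in a straightforward induction on the length $n$ of the instance. I would prove exactly the claim that an instance of length $n$ admits a cover of weight at most $2cH_n$. The base case $n=0$ is immediate, since $H_0 = 0$ and the empty instance requires no sets (one could equally start at $n=1$, where the only available set $\{1\}$ has weight $c/(|B_1|+1)=c \le 2c = 2cH_1$).

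For the inductive step, I would first apply Lemma~\ref{lemma:PD:Subproblem:Covering-with-harmonic-weight} to the instance of length $n$. This selects either the single set $\{n\}\cup A_n$ or the singletons of the entire last block, thereby covering some number $x \ge 1$ of elements---at least all of the last block, and in particular element $n$ together with the elements it copes---at total weight at most $2c\sum_{i=n-x}^{n}\frac{1}{i}$, i.e.\ the last batch of terms of the harmonic series. Crucially, the covered elements are precisely element $n$ and elements coped by it, which is exactly the situation required by the second lemma.

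I would then invoke Lemma~\ref{lemma:PD:Subproblem:Instance-can-be-rebuildt}: these covered elements can be excised and the remainder rebuilt, via the consistent renaming, into a genuine $c$-ordered covering instance of strictly smaller length. The induction hypothesis supplies a cover of this shorter instance of weight at most $2cH_{n-x}$. Adding the two contributions and using the telescoping identity $H_{n-x} + \sum_{i=n-x+1}^{n}\frac{1}{i} = H_n$ yields a total weight of at most $2cH_n$, which closes the induction and gives the statement of \cref{corollary:PD:Subproblem:covering-works-with-some-weight}.

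The step I expect to require the most care is aligning the indices between the two lemmas: Lemma~\ref{lemma:PD:Subproblem:Covering-with-harmonic-weight} is phrased in terms of the number $x$ of covered elements together with the harmonic tail $\sum_{i=n-x}^{n}\frac{1}{i}$, whereas Lemma~\ref{lemma:PD:Subproblem:Instance-can-be-rebuildt} speaks of removing element $n$ plus $x$ further coped elements and producing an instance of length $n-x-1$. I would fix one convention for how many elements are covered per round and verify that the harmonic tail telescopes exactly from $H_{n-x}$ up to $H_n$, absorbing the harmless off-by-one in the number of terms into the loose $\frac{1}{n}\le\frac{1}{i}$ estimate already used inside Lemma~\ref{lemma:PD:Subproblem:Covering-with-harmonic-weight}. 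Everything else---feasibility of the rebuilt instance and hence applicability of the induction hypothesis---is handed to us directly by the two preceding lemmas.
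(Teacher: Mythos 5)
Your proposal is correct and follows essentially the same route as the paper: the paper's own proof also repeatedly applies \cref{lemma:PD:Subproblem:Covering-with-harmonic-weight} and \cref{lemma:PD:Subproblem:Instance-can-be-rebuildt} to peel off the last block and telescope the harmonic tails into $2cH_n$, merely phrasing the iteration procedurally rather than as an explicit induction. Your observation that the covered elements are exactly element $n$ together with elements it copes (so the removal lemma applies) is the same justification the paper gives.
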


\begin{proof}
  By \cref{lemma:PD:Subproblem:Covering-with-harmonic-weight}, we can cover \( x \) elements of a \( c \)-ordered covering instance of length \( n \) with a weight of \( 2 c \, \sum_{i=n-x}^{n} \frac{1}{i} \).
  The covered elements can safely be removed by \cref{lemma:PD:Subproblem:Instance-can-be-rebuildt} since all covered elements are coped by the last element.
  This yields a \( c \)-ordered covering instance of length \( n - x \).
  Repeatedly applying \cref{lemma:PD:Subproblem:Covering-with-harmonic-weight} and \cref{lemma:PD:Subproblem:Instance-can-be-rebuildt} yields a covering of \( \{1,\dots, n\} \) with a weight of \( 2 c \, \sum_{i=1}^{n} \frac{1}{i} = 2 c H_{n} \).
\end{proof}

\subsubsection{A feasible dual solution}\label{section:deterministic-algorithm:Analysis:feasible-dual}\(  \)

\noindent
Next, we are ready to show that scaling down all \( a_{re} \) by \( \gamma \) leads to a feasible solution to the dual.
First, the distance of the nearest facility to a request demanding a specific commodity can be bounded as follows.

\begin{lemma}\label{lemma:PD-extend:same-commodity-distance-bounds-dual}
  Fix a commodity \( e\in S \).
  Consider two requests \( j, \ell \) which arrived at time \( t,t' \) where \( t < t' \) with \( e \in s_{j} \) and \( e \in s_{\ell} \).
  Let \( G \) be a set of facilities such that each facility serves \( s \).
  It holds \( \dist{G}{j} - \dist{m}{j} \geq a_{\ell e} - \dist{m}{ \ell} - 2 \dist{m}{j} \) at the time when we increase \( a_{\ell e} \).
\end{lemma}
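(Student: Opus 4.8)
The plan is to derive the inequality from two ingredients only: the first dual constraint maintained by \DetAlg{}, and two applications of the triangle inequality in the underlying metric. The guiding observation is that at the moment we increase \( a_{\ell e} \), commodity \( e \) of \( \ell \) is by definition not yet served (otherwise the while-loop would not raise \( a_{\ell e} \)), so Constraint (1) is still valid for \( e \); hence the current value of \( a_{\ell e} \) is bounded by the distance from \( \ell \) to the nearest open facility offering \( e \).

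First I would make the role of \( G \) precise. Since \( G \) consists of facilities each serving \( e \) that are open at the considered time, it is a subset of the current set \( F(e) \), and therefore \( \dist{F(e)}{\ell} \leq \dist{G}{\ell} \). Combining this with Constraint (1), which guarantees \( a_{\ell e} \leq \dist{F(e)}{\ell} \) as long as \( e \in s_{\ell} \) is unserved, yields \( a_{\ell e} \leq \dist{G}{\ell} \) at the time \( a_{\ell e} \) is increased.

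Next I would relate \( \dist{G}{\ell} \) to \( \dist{G}{j} \) through the point \( m \). Let \( g^{\star} \in G \) be the facility realizing \( \dist{G}{j} \). By the triangle inequality, \( \dist{G}{\ell} \leq \dist{g^{\star}}{\ell} \leq \dist{g^{\star}}{j} + \dist{j}{\ell} = \dist{G}{j} + \dist{j}{\ell} \), and a second application through \( m \) gives \( \dist{j}{\ell} \leq \dist{m}{j} + \dist{m}{\ell} \). Chaining these with the bound from the previous step produces \( a_{\ell e} \leq \dist{G}{j} + \dist{m}{j} + \dist{m}{\ell} \), which rearranges directly to the claimed \( \dist{G}{j} - \dist{m}{j} \geq a_{\ell e} - \dist{m}{\ell} - 2\dist{m}{j} \).

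I expect the only delicate point to be the justification of \( a_{\ell e} \leq \dist{G}{\ell} \): one must insist that the inequality is evaluated precisely at the instant \( a_{\ell e} \) is being raised, so that \( e \) is still unserved for \( \ell \) and Constraint (1) is applicable, and that \( G \) is drawn from the facilities serving \( e \) that are open at that instant, giving \( \dist{G}{\ell} \geq \dist{F(e)}{\ell} \). The arrival order \( t < t' \) plays no role in the inequality itself, since metric distances are time-independent; it merely fixes the setting in which the lemma is later invoked (with \( j \) an earlier request whose bid is being compared against \( \ell \)'s contribution). Everything else reduces to the two triangle inequalities above.
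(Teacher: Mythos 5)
Your proof is correct and follows essentially the same route as the paper's: bound \( a_{\ell e} \) by the distance to an open facility in \( G \) serving \( e \) (via Constraint (1)), then chain two triangle inequalities through \( j \) and \( m \) and rearrange. The paper phrases the first step as "we could have assigned \( \ell e \) to \( g \)" for the \( g \in G \) closest to \( j \), which is the same observation you make via \( a_{\ell e} \leq \dist{F(e)}{\ell} \leq \dist{G}{\ell} \).
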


\begin{proof}
  This proof is very similar to the proof of Lemma 4.2 in \cite{Nagarajan2013}.
  For completeness we restate it here in compliance to our notation.
  Note, that we have to carefully consider the commodities.

  Consider the facility \( g \in G \) closest to \( j \) when we increase the dual of \( \ell e \).
  The dual value \( a_{\ell e} \) is no more than \( \dist{g}{\ell} \) since \( g \) is open, serves the commodity of \( \ell \) and we could have assigned \( \ell e \) to \( g \).
  By the triangle inequality it holds \( a_{\ell e} \leq \dist{g}{\ell} \leq \dist{g}{j} + \dist{j}{m} + \dist{m}{\ell} \).
  At the time we increase \( a_{\ell e} \) it holds \( \dist{G}{j} - \dist{m}{j} = \dist{g}{j} - \dist{m}{j} \) and every facility of \( G \) offers the commodity of \( j \).
  Together this yields \( \dist{G}{j} - \dist{m}{j} = \dist{g}{j} - \dist{m}{j} \geq a_{ \ell e} - \dist{m}{\ell} - 2 \dist{m}{j} \).
\end{proof}

Note that as the set \(G\) we usually take the set of all large facilities when considering different commodities of \(j\) and \(\ell\) and the set of all large facilities and small facilities that serve \( e \).
In the remainder, we will prove that all constraints of the dual hold when using the variables \(  a_{re} \) as set by the algorithm, scaled by $\gamma$.

\begin{lemma}[Feasability for configurations \( 1 \leq |\sigma|\leq \sqrt{|S|} \)]\label{lemma:PD-extend:feasability-for-medium-configurations}
  Fix a configuration \( \sigma \) with \( |\sigma| \leq \sqrt{|S|} \).
  For any \( R'\subseteq R \) and any facility serving \( \sigma \) at \( m \in M \): \( \sum_{r\in R'} (\sum_{e\in s_{r}\cap \sigma } \gamma a_{re} - \dist{m}{r})_{+} \leq f_{m}^{\sigma} \).
\end{lemma}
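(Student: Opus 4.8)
The plan is to first decouple the configuration constraint into $|\sigma|$ independent single-commodity statements, and then to settle each of them with a harmonic charging argument phrased through the $c$-ordered covering of \cref{corollary:PD:Subproblem:covering-works-with-some-weight}. Since every summand is nonnegative, it suffices to treat $R' = R$. For the reduction I would use the elementary inequality $(y)_{+} \ge y$: for a request $r$ with $k_{r} := |s_{r}\cap\sigma| \le |\sigma| \le \SqrtS$ one gets $(\sum_{e\in s_{r}\cap\sigma}\gamma a_{re} - \dist{m}{r})_{+} \le \sum_{e\in s_{r}\cap\sigma}(\gamma a_{re} - \dist{m}{r}/k_{r})_{+} \le \sum_{e\in s_{r}\cap\sigma}(\gamma a_{re} - \dist{m}{r}/|\sigma|)_{+}$, where the second step uses $k_{r}\le|\sigma|$. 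Summing over $r$ and writing $\gamma' := \gamma|\sigma| \le 1/(5H_{n})$, so that $\gamma a_{re} - \dist{m}{r}/|\sigma| = \tfrac{1}{|\sigma|}(\gamma' a_{re} - \dist{m}{r})$, the left-hand side is at most $\tfrac{1}{|\sigma|}\sum_{e\in\sigma}\sum_{r:e\in s_{r}}(\gamma' a_{re} - \dist{m}{r})_{+}$. Each inner sum now carries the full distance $\dist{m}{r}$, which is exactly the form required for the single-commodity analysis.

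Next I would fix a commodity $e$ and bound the unscaled sum $\sum_{r:e\in s_{r}}(a_{re} - \dist{m}{r})_{+}$. Ordering the relevant requests by arrival time as $1,\dots,N$, I would read off a $c$-ordered covering instance with $c = f_{m}^{\{e\}}$: let $B_{i}$ collect the earlier requests whose bid towards $m$ has already been capped by an open facility offering $e$ near them (their contribution frozen at the distance to $F(e)$), and let $A_{i}$ be the remaining earlier requests, so that $A_{i}\cup B_{i} = \{1,\dots,i-1\}$ with $A_{i}\cap B_{i}=\emptyset$. Because facilities are never deleted, a request that enters some $B_{i}$ stays in every later $B_{j}$, yielding the nestedness $B_{i}\subseteq B_{j}$ for $i<j$. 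The set $\{i\}\cup A_{i}$ of weight $c$ models opening one small facility for $e$ at $m$, whose cost $f_{m}^{\{e\}}$ dominates the combined overshoot of request $i$ together with the still-uncovered earlier requests $A_{i}$, while the singleton $\{i\}$ of weight $c/(|B_{i}|+1)$ models request $i$ sharing an already-open facility with the $|B_{i}|$ requests of $B_{i}$. \cref{lemma:PD-extend:same-commodity-distance-bounds-dual} is what makes these two weights legitimate upper bounds on the contribution of request $i$: it controls how far a later dual $a_{ie}$ can exceed $\dist{m}{i}$ in terms of the distance of an earlier request to the facilities serving $e$, turning the triangle-inequality slack into the $2\dist{m}{\cdot}$ term and hence into a bounded constant. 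Feeding this instance into \cref{corollary:PD:Subproblem:covering-works-with-some-weight} then yields $\sum_{r:e\in s_{r}}(a_{re} - \dist{m}{r})_{+} \le \kappa\, f_{m}^{\{e\}}\, H_{n}$ for an absolute constant $\kappa \le 5$.

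Finally I would reassemble. Using $(\gamma' a_{re} - \dist{m}{r})_{+} \le \gamma'(a_{re} - \dist{m}{r})_{+}$ (valid since $\gamma'\le 1$), the per-commodity bound, and $\gamma'/|\sigma| = \gamma$, the left-hand side is at most $\gamma\,\kappa\,H_{n}\sum_{e\in\sigma}f_{m}^{\{e\}}$. Here I would invoke that the construction cost is monotone in the configuration — assumable without loss of generality by passing to the monotone closure $\min_{\tau\supseteq\sigma} f_{m}^{\tau}$ — to obtain $f_{m}^{\{e\}}\le f_{m}^{\sigma}$ for every $e\in\sigma$, hence $\sum_{e\in\sigma}f_{m}^{\{e\}} \le |\sigma|\,f_{m}^{\sigma} \le \SqrtS\,f_{m}^{\sigma}$. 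Plugging in $\gamma = 1/(5\SqrtS H_{n})$ collapses the bound to $\tfrac{\kappa}{5}f_{m}^{\sigma} \le f_{m}^{\sigma}$, as required. The main obstacle is the middle step: correctly extracting the $c$-ordered covering instance from the run of \DetAlg{} — defining $A_{i}$ and $B_{i}$ from the commodity-specific sets $F(e)$, verifying the partition and nestedness conditions, and checking via \cref{lemma:PD-extend:same-commodity-distance-bounds-dual} that the overshoot of a block of not-yet-covered requests is genuinely paid for by a single opening of cost $f_{m}^{\{e\}}$. The even split of the connection cost into $\dist{m}{r}/|\sigma|$ shares and the monotonicity reduction are the two places where the hypothesis $|\sigma|\le\SqrtS$ and the cost-function assumptions enter, and they must be arranged so that all accumulated constants remain within the factor $5$ hidden in $\gamma$.
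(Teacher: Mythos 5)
Your overall architecture matches the paper's: reduce to single commodities, extract a $c$-ordered covering instance from the run of \DetAlg{} via the $A_\ell/B_\ell$ partition and \cref{lemma:PD-extend:same-commodity-distance-bounds-dual}, invoke \cref{corollary:PD:Subproblem:covering-works-with-some-weight}, and reassemble using $|\sigma|\le\SqrtS$. Your per-request splitting of the distance into $\dist{m}{r}/k_r$ shares is a clean way to phrase the paper's reassembly step, and the reliance on $f_m^{\{e\}}\le f_m^{\sigma}$ (monotonicity) is implicitly present in the paper as well. But there is a genuine gap in the middle step. The inequality that Constraint (3) plus \cref{lemma:PD-extend:same-commodity-distance-bounds-dual} actually delivers for the singleton sets is
\begin{align*}
  a_{\ell e} - \dist{m}{\ell} \;\leq\; \frac{f_{m}^{\{e\}} + 2\sum_{j\in B_{\ell}}\dist{m}{j}}{|B_{\ell}|+1},
\end{align*}
not $f_{m}^{\{e\}}/(|B_{\ell}|+1)$: each request $j\in B_{\ell}$ contributes $\dist{F(e)}{j}-\dist{m}{j}\ge a_{\ell e}-\dist{m}{\ell}-2\dist{m}{j}$, and the $-2\dist{m}{j}$ slack accumulates. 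Consequently the covering parameter is $c = f_{m}^{\{e\}} + \lambda$ with $\lambda = 2\sum_{j\in B_{\ell}}\dist{m}{j}$, and the covering lemma yields $\sum_{r}(a_{re}-\dist{m}{r})_{+}\le 2f_{m}^{\{e\}}H_{n}+4H_{n}\sum_{r}\dist{m}{r}$. Your claim that the $2\dist{m}{\cdot}$ slack turns ``into a bounded constant,'' giving $\sum_{r}(a_{re}-\dist{m}{r})_{+}\le\kappa f_{m}^{\{e\}}H_{n}$ with $\kappa\le 5$, is not justified and is false in general --- the sum of distances $\sum_{r}\dist{m}{r}$ is not controlled by $f_{m}^{\{e\}}$.

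The distance term cannot be discarded; it has to be \emph{absorbed}, and this is precisely why the scaling factor is $\gamma = 1/(5\SqrtS H_{n})$ rather than something like $1/(2\SqrtS H_{n})$. The paper moves the $4H_{n}\sum_{r}\dist{m}{r}$ term to the left-hand side to obtain the stronger statement $\sum_{r}(a_{re}-5H_{n}\dist{m}{r})_{+}\le 2f_{m}^{\{e\}}H_{n}$, and only then rescales. Your decomposition can be repaired along the same lines: since $\gamma' := \gamma|\sigma|\le 1/(5H_{n})$, one has $(\gamma' a_{re}-\dist{m}{r})_{+}\le\gamma'\,(a_{re}-5H_{n}\dist{m}{r})_{+}$, and feeding in the corrected per-commodity bound recovers $\tfrac{2}{5}\max_{e\in\sigma}f_{m}^{\{e\}}\le f_{m}^{\sigma}$. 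But as written, the step $(\gamma' a_{re}-\dist{m}{r})_{+}\le\gamma'(a_{re}-\dist{m}{r})_{+}$ combined with the (incorrect) bound $\kappa f_{m}^{\{e\}}H_{n}$ silently loses the only mechanism by which the accumulated $\sum_{j\in B_{\ell}}\dist{m}{j}$ terms are paid for, so the proof does not go through.
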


\begin{proof}
  First, consider a single commodity \( s\in \sigma \).
  Consider any request \( \ell \in R' \) with \( s\in s_{\ell} \) at the time at which we increase \( a_{\ell s} \).
  Also, consider only those requests in \( R' \) that arrived earlier than \( \ell \).
  All other requests do not influence the dual variable \( a_{\ell s}  \).
  Due to Constraint (3), for \( \ell \) and \( s \) it holds that:
  \begin{align*}
    f_{m}^{\{s\}} \geq (a_{\ell s} - \dist{m}{\ell})_{+} + \hspace*{-0.2cm}\sum_{j\in R': s\in s_{j}}\hspace*{-0.2cm} \left(\min\{a_{js}, \dist{F(s)}{j}\} - \dist{m}{j} \right)_{+}.
  \end{align*}
  Let \( A_{\ell} \) be the set of requests of \( R' \) for which \( \min\{a_{js}, \dist{F(s)}{j}\} = a_{js} \) and, similarly, let \( B_{\ell} \) be the set of requests of \( R' \) for which \( \min\{a_{js}, \dist{F(s)}{j}\} = \dist{F(s)}{j} \) at the arrival of \( \ell \).

  For requests in \( B_{\ell} \) we can apply \cref{lemma:PD-extend:same-commodity-distance-bounds-dual} since all considered facilities serve the commodity \( s \).
  This yields:
  \begin{align*}
    f_{m}^{\{s\}} & \geq (a_{\ell s} - \dist{m}{\ell})_{+} + \sum_{j\in A_{\ell}} (a_{js} - \dist{m}{j})_{+}                                                           \\
                  & \phantom{\geq} + \sum_{j\in B_{\ell}} (\dist{F(s)}{j} - \dist{m}{j})_{+}                                                                           \\
                  & \geq (a_{\ell s} - \dist{m}{\ell}) + \sum_{j\in A_{\ell}} (a_{js} - \dist{m}{j})                                                                   \\
                  & \phantom{\geq} + \sum_{j\in B_{\ell}} (a_{\ell s} - \dist{m}{\ell} - 2 \dist{m}{j})                                                                \\
                  & \hspace*{-0.5cm} = (|B_{\ell}|+1)(a_{\ell s} - \dist{m}{\ell}) + \sum_{j\in A_{\ell}} (a_{js} - \dist{m}{j}) - 2 \sum_{j\in B_{\ell}} \dist{m}{j}.
  \end{align*}
  Denote \( 2 \sum_{j\in B_{\ell}} \dist{m}{j} \) by \( \lambda \).
  The inequality above implies the following two inequalities:
  \begin{align}
    (a_{\ell s} - \dist{m}{\ell})
     & \leq \frac{f_{m}^{\{s\}} + \lambda }{|B_{\ell}| + 1 } \label{inequality:proof-medium-config-inequality-one} \\
    (a_{\ell s} - \dist{m}{\ell}) + \sum_{j\in A_{\ell}} (a_{j s} - \dist{m}{j})
     & \leq f_{m}^{\{s\}} + \lambda
    .\label{inequality:proof-medium-config-inequality-two}
  \end{align}

  Now we model the task to bound \(X := \sum_{r\in R'} (a_{r s}-\dist{m}{r})_{+} \) by solving the problem of covering all the \( (a_{r s} - \dist{m}{r} )_{+} \) given an instance of \( c \)-ordered covering.
  The idea behind this is the following.
  Each time we cover an element \( (a_{r s} - \dist{m}{r} )_{+} \) of \( X \), we do so by applying either \cref{inequality:proof-medium-config-inequality-one} or \cref{inequality:proof-medium-config-inequality-two}.
  In case we apply \cref{inequality:proof-medium-config-inequality-one}, we remove one element \( (a_{r s} - \dist{m}{r} )_{+} \) from \( X \) and add a weight of \( \frac{f_{m}^{\{s\}} + \lambda }{|B_{\ell}| + 1 } \).
  In the other case of \cref{inequality:proof-medium-config-inequality-two}, we remove multiple elements from the sum of \( X \) and add a weight of \( f_{m}^{\{s\}} + \lambda \).
  We ask ourselves how much weight is achieved when removing every element of \( X \).
  The resulting weight then directly represents an upper bound for \( X \).

  Next, we define an instance of \( c \)-ordered covering based on inequalities (1) and (2).
  Our instance is as follows:
  Number the requests of \( R' \) from \( 1 \) to \( |R'(s)| \) in the order of arrival.
  The elements of our instance are \( 1,\dots,|R'(s)| \).
  Consider element \( i \).
  It represents \( (a_{r s} - \dist{m}{r})_{+} \) of the \( i \)-th arriving request \( r \) of \( R'(s) \).
  The sets \( B_{i} \) and \( A_{i} \) are given by the \( B_{r} \) and \( A_{r} \) as defined above.
  The parameter \( c \) of our ordered covering is \( f_{m}^{\{s\}} + \lambda \).
  For every element \( i \) there is a set \( \{i\} \) of weight \( c/(|B_{i}|+1) \) and a set \( \{i\} \cup A_{i} \) of weight \( c \).
  Notice that the weights of the sets correspond to \cref{inequality:proof-medium-config-inequality-one} and \cref{inequality:proof-medium-config-inequality-two}, respectively.

  Now, we show that this is a proper \( c \)-ordered covering instance.
  For any element \( i \), \( A_{i} \cap B_{i} = \emptyset \) by definition and \( A_{i} \cup B_{i} = \{1,\dots,i-1\} \), because exactly the requests of \( R'(s) \) that arrived earlier than the request corresponding to \( i \) have a defined value for \( \min\{a_{j s}, \dist{F(s)}{j}\} \).
  If a request \( r \) is in some \( A_{i} \) and in \( B_{i+1} \), it contributed to building a large facility of which the distance to itself is less than \( a_{r} \).
  Thus, for all following elements \( j > i \), \( r \) will stay in \( B_{j} \).
  In other words, for any two elements \( i, j \) with \( i < j \) it holds that \( B_{i} \subseteq B_{j} \).

  By \cref{corollary:PD:Subproblem:covering-works-with-some-weight} we know that \( \{(a_{r s} - \dist{m}{r} )_{+} \,|\, r\in R'(s)\} \) can be covered with a total weight of \( 2 (f_{m}^{\{s\}} + \lambda) H_{n} \).
  Each time an element is covered, this corresponds to applying either \cref{inequality:proof-medium-config-inequality-one} or \cref{inequality:proof-medium-config-inequality-two} to the respective \( (a_{r s} - \dist{m}{r})_{+} \) term, indicated by the increase in the weight of the covering.
  Note, that for any request \( r \in R'\setminus R'(s) \), commodity \( s \) is not requested and thus \( (a_{r s} - \dist{m}{r})_{+} = 0 \).
  Thus, we conclude:
  \begin{align*}
     & \sum_{r\in R'} (a_{r s} - \dist{m}{r})_{+} \leq 2 (f_{m}^{\{s\}} + \lambda) H_{n}         \\
     & \hspace*{2.65cm} \leq 2 f_{m}^{\{s\}} H_{n} + 4 H_{n} \sum_{r \in R'} \dist{m}{r}         \\
     & \Rightarrow \sum_{r\in R'} (a_{r s} - 5 H_{n} \dist{m}{r})_{+} \leq 2 f_{m}^{\{s\}} H_{n}
  \end{align*}
  Let \( s := \argmax_{e\in \sigma} f_{m}^{\{e\}} \).
  Now, by applying the inequality for each commodity of \( \sigma \) separately, we have
  \begin{align*}
     &                                     & \sum_{s\in \sigma} \sum_{r\in R'} (a_{r s} - 5 H_{n} \dist{m}{r})_{+}
     & \leq 2 f_{m}^{\{s\}} H_{n} |\sigma|                                                                                                                   \\
     & \Rightarrow                         & \sum_{r\in R'} \sum_{s\in \sigma \cap s_{r}} (a_{r s} - 5 H_{n} \dist{m}{r})_{+}
     & \leq 2 f_{m}^{\{s\}} H_{n} |\sigma|                                                                                                                   \\
     & \Rightarrow                         & \sum_{r\in R'} \left( \sum_{e\in \sigma \cap s_{r}} \frac{a_{r e}}{5 \sqrt{|S|} H_{n}} - \dist{m}{r}\right)_{+}
     & \leq \frac{2}{5} f_{m}^{\{s\}}                                                                                                                        \\
     & \Rightarrow                         & \sum_{r\in R'} \left( \sum_{e\in \sigma \cap s_{r}} \gamma a_{r e} -  \dist{m}{r}\right)_{+}
     & \leq f_{m}^{\sigma}
  \end{align*}
\end{proof}

Next, we approach configurations of a size of at least \( \SqrtS \).
The proofs of \cref{lemma:PD-extend:all-commodities-distance-bounds-dual} and \cref{lemma:PD-extend:feasability-for-configurations} are very similar to the proofs of \cref{lemma:PD-extend:same-commodity-distance-bounds-dual} and \cref{lemma:PD-extend:feasability-for-medium-configurations}, respectively.

What follows is a technical lemma similar to \cref{lemma:PD-extend:same-commodity-distance-bounds-dual}, which considers the distance of a request to the nearest large facility.

\begin{restatable}{lemma}{technicalLemmaDistanceLarge}\label{lemma:PD-extend:all-commodities-distance-bounds-dual}
  Consider two requests \( j, \ell \), which arrived at time \( t,t' \) where \( t < t' \).
  It holds that \( \dist{\hat{F}}{j} - \dist{m}{j} \geq \sum_{e\in s_{\ell}} a_{\ell e} - \dist{m}{\ell} - 2 \dist{m}{j} \) at the time when we increase the dual variables of \( \ell \).
\end{restatable}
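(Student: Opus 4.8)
The plan is to mirror the proof of \cref{lemma:PD-extend:same-commodity-distance-bounds-dual}, replacing the commodity-specific facility set $G$ by the set $\hat{F}$ of currently open large facilities and the single dual $a_{\ell e}$ by the joint investment $\sum_{e\in s_{\ell}} a_{\ell e}$. The structural fact that makes the adaptation go through is that a large facility, by definition, offers every commodity in $S$; hence any large facility simultaneously serves all commodities demanded by both $j$ and $\ell$, so we never have to track which commodities a particular facility provides, and the role played by Constraint~(1) in the earlier lemma is now played by Constraint~(2).

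First I would fix the moment at which the algorithm increases the dual variables of $\ell$ and let $g \in \hat{F}$ be the large facility closest to $j$ at that instant, so that $\dist{\hat{F}}{j} = \dist{g}{j}$. Since $g$ is open and offers all commodities demanded by $\ell$, the request $\ell$ could have been connected entirely to $g$; the invariant Constraint~(2) therefore forces $\sum_{e\in s_{\ell}} a_{\ell e} \leq \dist{\hat{F}}{\ell} \leq \dist{g}{\ell}$, where the last step uses that $g \in \hat{F}$.

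Next I would chain this with the triangle inequality $\dist{g}{\ell} \leq \dist{g}{j} + \dist{j}{m} + \dist{m}{\ell}$ and substitute $\dist{g}{j} = \dist{\hat{F}}{j}$. Rearranging and using $\dist{j}{m} = \dist{m}{j}$ yields
\begin{align*}
  \dist{\hat{F}}{j} - \dist{m}{j} = \dist{g}{j} - \dist{m}{j} \geq \dist{g}{\ell} - \dist{m}{\ell} - 2\dist{m}{j} \geq \sum_{e\in s_{\ell}} a_{\ell e} - \dist{m}{\ell} - 2\dist{m}{j},
\end{align*}
which is exactly the claimed inequality.

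Since the argument is a direct transcription, I do not expect a genuine obstacle; the only point requiring care is the justification that $\sum_{e\in s_{\ell}} a_{\ell e} \leq \dist{g}{\ell}$ holds at the precise time the duals of $\ell$ are raised. This is guaranteed because Constraint~(2) is maintained throughout the execution and $g$ already belongs to $\hat{F}$ at that moment, so connecting $\ell$'s entire commodity set to $g$ is a feasible option. As in the single-commodity case, no monotonicity of $\hat{F}$ over time is needed beyond evaluating all quantities at this common time step.
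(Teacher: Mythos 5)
Your proposal is correct and follows essentially the same route as the paper's own proof: pick the large facility $g \in \hat{F}$ closest to $j$ at the moment the duals of $\ell$ are raised, invoke Constraint~(2) to bound $\sum_{e\in s_{\ell}} a_{\ell e}$ by $\dist{g}{\ell}$, and conclude via the triangle inequality. Nothing further is needed.
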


\begin{proof}
  This proof is analogous to the proof of \cref{lemma:PD-extend:same-commodity-distance-bounds-dual}.
  Consider the facility \( g \in \hat{F} \) closest to \( j \) when we increase \( X:=\sum_{e\in s_{\ell}} a_{\ell e} \).
  The value of \( X \) is no more than \( \dist{g}{\ell} \) since \( g \) is open and a large facility and we could have assigned \( \ell \) completely to \( g \) (Constraint (2)).
  By the triangle inequality it holds that \( X \leq \dist{g}{\ell} \leq \dist{g}{j} + \dist{j}{m} + \dist{m}{\ell} \).
  At the time we increase \( X \) it holds that \( \dist{G}{j} - \dist{m}{j} = \dist{g}{j} - \dist{m}{j} \) and the facilities of \( \hat{F} \) can all serve \( j \) completely.
  Together this yields \( \dist{G}{j} - \dist{m}{j} = \dist{g}{j} - \dist{m}{j} \geq X - \dist{m}{\ell} - 2 \dist{m}{j} = \sum_{e\in s_{\ell}} a_{\ell e} - \dist{m}{\ell} - 2 \dist{m}{j} \).
\end{proof}

\begin{restatable}[Feasability for configurations \( |\sigma|>\sqrt{|S|} \)]{lemma}{fesabilityForLargeConfigurations}\label{lemma:PD-extend:feasability-for-configurations}
  Fix a configuration \( \sigma \)  with \( |\sigma| > \sqrt{|S|} \).
  For any \( R' \subseteq R \) and any facility serving \( \sigma \) at \( m \in M \) it holds that
  \\
  \( \sum_{r\in R'} \left( \sum_{e\in s_{r}\cap \sigma} \gamma a_{r e} - \dist{m}{r}\right)_{+} \leq f_{m}^{\sigma} \).
\end{restatable}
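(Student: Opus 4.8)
The plan is to follow the proof of \cref{lemma:PD-extend:feasability-for-medium-configurations} almost verbatim, but to replace every per-commodity quantity by the \emph{bundled} investment \( X_{r} := \sum_{e\in s_{r}} a_{re} \) of each request and to invoke the large-facility machinery in place of the small-facility one. First I would note that since \( \sigma \subseteq S \) we have \( \sum_{e\in s_{r}\cap\sigma} \gamma a_{re} \leq \gamma X_{r} \), so it suffices to bound \( \sum_{r\in R'} (\gamma X_{r} - \dist{m}{r})_{+} \) by \( f_{m}^{\sigma} \); enlarging the left-hand side this way is harmless because we only need an upper bound. The roles played by Constraint (3) and \cref{lemma:PD-extend:same-commodity-distance-bounds-dual} in the medium case are now taken over by Constraint (4) and \cref{lemma:PD-extend:all-commodities-distance-bounds-dual}, which bound the bundled investment against the distance \( \dist{\hat{F}}{\cdot} \) to the nearest large facility.

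Concretely, I would fix the last request \( \ell \in R' \) whose duals are increased, expand Constraint (4), and split \( R' \) into the set \( A_{\ell} \) of requests with \( \min\{X_{j},\dist{\hat{F}}{j}\} = X_{j} \) and the set \( B_{\ell} \) with \( \min\{X_{j},\dist{\hat{F}}{j}\} = \dist{\hat{F}}{j} \). Applying \cref{lemma:PD-extend:all-commodities-distance-bounds-dual} to every \( j\in B_{\ell} \) and writing \( \lambda := 2\sum_{j\in B_{\ell}} \dist{m}{j} \), I obtain the two inequalities \( (X_{\ell} - \dist{m}{\ell}) \leq (f_{m}^{S}+\lambda)/(|B_{\ell}|+1) \) and \( (X_{\ell}-\dist{m}{\ell}) + \sum_{j\in A_{\ell}} (X_{j}-\dist{m}{j}) \leq f_{m}^{S}+\lambda \). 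These are exactly the set weights of a \( c \)-ordered covering instance with \( c = f_{m}^{S}+\lambda \), the sets \( A_{i}, B_{i} \) being inherited from the \( A_{r}, B_{r} \); the nesting \( B_{i}\subseteq B_{j} \) for \( i<j \) holds because once a request contributes to a large facility closer to it than \( X_{r} \), it remains in every later \( B \). Invoking \cref{corollary:PD:Subproblem:covering-works-with-some-weight} then yields \( \sum_{r\in R'} (X_{r}-\dist{m}{r})_{+} \leq 2(f_{m}^{S}+\lambda)H_{n} \leq 2 f_{m}^{S} H_{n} + 4 H_{n} \sum_{r\in R'} \dist{m}{r} \), which rearranges, exactly as in the medium case, to \( \sum_{r\in R'} (X_{r} - 5 H_{n} \dist{m}{r})_{+} \leq 2 f_{m}^{S} H_{n} \).

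The remaining step is the scaling, and this is where the hypothesis \( |\sigma| > \SqrtS \) finally enters. Dividing the last inequality by \( 5\SqrtS H_{n} = 1/\gamma \) gives \( \sum_{r\in R'} (\gamma X_{r} - \dist{m}{r}/\SqrtS)_{+} \leq 2 f_{m}^{S}/(5\SqrtS) \); since \( 1/\SqrtS \leq 1 \), the left-hand side dominates \( \sum_{r\in R'} (\gamma X_{r} - \dist{m}{r})_{+} \). Finally I would apply Condition~\ref{inequality:main-assumption} in the form \( f_{m}^{\sigma} \geq (|\sigma|/|S|)\, f_{m}^{S} > f_{m}^{S}/\SqrtS \), using \( |\sigma| > \SqrtS \), so that \( 2 f_{m}^{S}/(5\SqrtS) \leq f_{m}^{S}/\SqrtS \leq f_{m}^{\sigma} \), which together with \( \sum_{e\in s_{r}\cap\sigma}\gamma a_{re} \leq \gamma X_{r} \) completes the bound.

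The main obstacle I anticipate is not any individual estimate but making the reduction to \( c \)-ordered covering rigorous in the \emph{bundled} setting: I must justify \( A_{i}\cup B_{i} = \{1,\dots,i-1\} \) (only requests of \( R' \) arriving before \( \ell \) have a defined value of \( \min\{X_{j},\dist{\hat{F}}{j}\} \)) and that the nesting property survives bundling, since the quantity crossing the threshold is now the aggregate \( X_{r} \) rather than a single dual \( a_{re} \). The conceptually pleasing point, which doubles as the other delicate spot, is the distance-coefficient slack (\( 1/\SqrtS \) versus \( 1 \)): it is precisely this slack together with the cost comparison from Condition~\ref{inequality:main-assumption}, available only once \( |\sigma| > \SqrtS \), that makes the large-configuration regime meet the medium one exactly at the threshold \( \SqrtS \).
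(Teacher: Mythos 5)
Your proposal is correct and takes essentially the same route as the paper's own proof: relax the sum over \( s_{r}\cap\sigma \) to the full bundled investment, apply Constraint (4) together with \cref{lemma:PD-extend:all-commodities-distance-bounds-dual} to split \( R' \) into \( A_{\ell} \) and \( B_{\ell} \), reduce to a \( c \)-ordered covering instance with \( c = f_{m}^{S}+\lambda \) via \cref{corollary:PD:Subproblem:covering-works-with-some-weight}, and finish by scaling with \( \gamma \) and Condition~\ref{inequality:main-assumption} using \( |\sigma| > \sqrt{|S|} \). The only cosmetic slip is the phrase ``fix the last request \( \ell \)'': the two covering inequalities must be instantiated at \emph{every} request of \( R' \) to supply the set weights of the covering instance, which your construction implicitly does anyway.
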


\begin{proof}
  Consider any request \( \ell \in R' \) and assume the time at which we increase \( \sum_{e\in s_{\ell}} a_{\ell e} \).
  Recapitulate that due to Constraint (4) for request \( \ell \):
  \begin{align*}
     & \left(\sum_{e\in s_{\ell}} a_{\ell e} - \dist{m}{\ell}\right)_{+} \hspace*{-0.1cm} + \sum_{j\in R'} \left(\min\left\{ \sum_{e \in s_{j} \cap \sigma} a_{je}, \dist{\hat{F}}{j} \right\} - \dist{m}{j}\right)_{+} \\
     & \leq \left( \sum_{e\in s_{\ell}} a_{\ell e} - \dist{m}{\ell}\right)_{+} \hspace*{-0.1cm} + \sum_{j\in R'} \left(\min\left\{ \sum_{e \in s_{j}} a_{je}, \dist{\hat{F}}{j} \right\} - \dist{m}{j}\right)_{+}       \\
     & \leq f_{m}^{S}
    .
  \end{align*}
  Let \( A_{\ell} \) be the set of requests of \( R' \) where \( \min\{\sum_{e\in s_{j}} a_{je}, \dist{\hat{F}}{j}\} = \sum_{e\in s_{j}} a_{je} \) and, similarly, let \( B_{\ell} \) be the set of requests of \( R' \) for which \( \min\{\sum_{e\in s_{j}} a_{je}, \dist{\hat{F}}{j}\} = \dist{\hat{F}}{j} \) at the arrival of \( \ell \).

  For the requests in \( B_{\ell} \) and their commodities, we can apply \cref{lemma:PD-extend:all-commodities-distance-bounds-dual}.
  Thus,
  \begin{align*}
    f_{m}^{S}
     & \geq \left( \sum_{e\in s_{\ell}} a_{\ell e} - \dist{m}{\ell}\right) + \sum_{j\in A_{\ell}} \left(\sum_{e \in s_{j}} a_{je} - \dist{m}{j}\right)                \\
     & \phantom{\geq} + \sum_{j\in B_{\ell}} \left( \dist{\hat{F}}{j} - \dist{m}{j}\right)                                                                            \\
     & \geq \left( \sum_{e\in s_{\ell}} a_{\ell e} - \dist{m}{\ell}\right) + \sum_{j\in A_{\ell}} \left(\sum_{e \in s_{j}} a_{je} - \dist{m}{j}\right)                \\
     & \phantom{\geq} + \sum_{j\in B_{\ell}} \left( \sum_{e\in s_{\ell}} a_{\ell e} - \dist{m}{\ell} - 2 \dist{m}{j} \right)
    \\
     & \geq (|B_{\ell}|+1)\left( \sum_{e\in s_{\ell e}} a_{\ell e}-\dist{m}{\ell} \right) + \sum_{j\in A_{\ell}} \left(\sum_{e \in s_{j}} a_{je} - \dist{m}{j}\right) \\
     & \phantom{\geq} - 2 \sum_{j\in B_{\ell}}  \dist{m}{j}.
  \end{align*}
  Denote \( 2 \sum_{j \in B_{\ell}} \dist{m}{j} \) by \( \lambda \).
  The inequality above implies in the following two inequalities:
  \begin{align}
    \left(\sum_{e\in s_{\ell}} a_{\ell e} - \dist{m}{\ell}\right)
     & \leq \frac{f_{m}^{S} + \lambda }{|B_{\ell}| + 1 } \label{inequality:proof-large-configuration-feasible-one} \\
    \left(\sum_{e\in s_{\ell}} a_{\ell e} - \dist{m}{\ell}\right) + \sum_{j\in A_{\ell}} \left(\sum_{e\in s_{j}} a_{je} - \dist{m}{j} \right)
     & \leq f_{m}^{S} + \lambda \label{inequality:proof-large-configuration-feasible-two}
    .
  \end{align}

  We model the task to bound \(X := \sum_{r} \left( \sum_{e\in s_{r}} a_{re}-\dist{m}{r}\right)_{+} \) by solving the problem of covering all the \( \left( \sum_{e\in s_{r}} a_{re} - \dist{m}{r} \right)_{+} \), given an instance of \( c \)-ordered covering.
  The idea behind this is the following.
  Each time we cover an element \( \left( \sum_{e\in s_{r}} a_{re} - \dist{m}{r} \right)_{+} \) of \( X \), we do so by applying either \cref{inequality:proof-large-configuration-feasible-one} or \cref{inequality:proof-large-configuration-feasible-two}.
  In case of \cref{inequality:proof-large-configuration-feasible-one}, we remove one element \( \left( \sum_{e\in s_{r}} a_{re} - \dist{m}{r} \right)_{+} \) from \( X \) and add a weight of \( \frac{f_{m}^{S} + \lambda }{|B_{\ell}| + 1 } \).
  In the other case of \cref{inequality:proof-large-configuration-feasible-two}, we remove multiple elements from the sum of \( X \) and add a weight of \( f_{m}^{S} + \lambda \).
  We ask ourselves how much weight is achieved when removing every element of \( X \).
  The resulting weight then directly represents an upper bound for \( X \).

  Next, we define an instance of \( c \)-ordered covering based on inequalities (1) and (2).
  Our instance is as follows:
  The elements are \( 1,\dots,p \).
  Consider element \( i \).
  It represents \( \left( \sum_{e\in s_{r}} a_{re} - \dist{m}{r} \right)_{+} \) of the \( i \)-th arriving request \( r \) of \( R'(\sigma) \).
  The sets \( B_{i} \) and \( A_{i} \) are given by the \( B_{r} \) and \( A_{r} \) as defined above.
  The parameter \( c \) of our ordered covering is \( f_{m}^{S} + \lambda \).
  For every element \( i \) there is a set \( \{i\} \) of weight \( c/(|B_{i}|+1) \) and a set \( \{i\} \cup A_{i} \) of weight \( c \).
  Notice that the weights of the sets correspond to \cref{inequality:proof-large-configuration-feasible-one} and \cref{inequality:proof-large-configuration-feasible-two}, respectively.

  Now, we show that this is a proper \( c \)-ordered covering instance.
  For any element \( i \), \( A_{i} \cap B_{i} = \emptyset \) by definition and \( A_{i} \cup B_{i} = \{1,\dots,i-1\} \), because exactly the requests of \( R'(\sigma) \) that arrived earlier than the request corresponding to \( i \) have a defined value for \( \min\{ \sum_{e\in s_{j}} a_{j e}, \dist{\hat{F}}{j} \} \).
  If a request \( r \) is in some \( A_{i} \) and in \( B_{i+1} \), it contributed to building a large facility of which the distance to itself is less than \( a_{r} \).
  Thus, for all following elements \( j > i \), \( r \) will stay in \( B_{j} \).
  In other words, for any two elements \( i, j \) with \( i < j \) it holds that \( B_{i} \subseteq B_{j} \).

  By \cref{corollary:PD:Subproblem:covering-works-with-some-weight} we know that \( \left\{\left( \sum_{e \in s_{r}} a_{re} - \dist{m}{r} \right)_{+} \,|\, r\in R'\right\} \) can be covered with a total weight of \( 2 (f_{m}^{S} + \lambda) H_{n} \).
  Each time an element is covered, this corresponds to applying either \cref{inequality:proof-large-configuration-feasible-one} or \cref{inequality:proof-large-configuration-feasible-two} to the respective \( \left( \sum_{e\in s_{r}} a_{re} - \dist{m}{r} \right)_{+} \) term, indicated by the increase in the weight of the covering.
  Combined with Condition~\ref{inequality:main-assumption}, we conclude:
  \begin{align*}
     & \sum_{r\in R'} \left(\sum_{e\in s_{r}} a_{re} - \dist{m}{r}\right)_{+}
    \leq 2 (f_{m}^{S} + \lambda) H_{n}                                                    \\
     & \phantom{\Rightarrow} \leq 2 f_{m}^{S} H_{n} + 4 H_{n} \sum_{r \in R'} \dist{m}{r} \\
     & \Rightarrow
    \sum_{r\in R'} \left(\sum_{e\in s_{r}} a_{re} - 5 H_{n} \dist{m}{r} \right)_{+}
    \leq 2 f_{m}^{S} H_{n}
    \leq 2 \frac{|S|}{|\sigma|} H_{n} f_{m}^{\sigma}                                      \\
     & \Rightarrow
    \sum_{r\in R'} \left( \sum_{e\in s_{r}} a_{re} - 5 H_{n} \dist{m}{r}\right)_{+}
    \leq 2 \sqrt{|S|} H_{n} f_{m}^{\sigma}                                                \\
     & \Rightarrow
    \sum_{r\in R'} \left(\sum_{e\in s_{r}\cap \sigma}\frac{a_{re}}{5 \sqrt{|S|} H_{n}} - \frac{\dist{m}{r}}{\sqrt{|S|}}\right)_{+}
    \leq \frac{2}{5} f_{m}^{\sigma}                                                       \\
     & \Rightarrow
    \sum_{r\in R'} \left(\sum_{e\in s_{r}\cap \sigma}\gamma a_{re} - \dist{m}{r} \right)_{+}
    \leq f_{m}^{\sigma}
    .
  \end{align*}
\end{proof}

By \cref{lemma:PD-extend:feasability-for-medium-configurations} and \cref{lemma:PD-extend:feasability-for-configurations} we conclude that the following corollary holds.

\begin{corollary}\label{corollary:PD-extend:gamma-ar-feasable}
  The dual variables \( a_{re} \) scaled by \( \gamma \) provide a feasible dual solution.
\end{corollary}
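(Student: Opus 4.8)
The plan is to recognize that this corollary is essentially an assembly of the two feasibility lemmas already established, so the only real work is a clean case distinction over the possible sizes of a configuration. First I would recall that, after the simplifications carried out when deriving the dual, a solution $(a_{re})$ is feasible precisely when $a_{re}\geq 0$ for all $r,e$ and when, for every point $m\in M$ and every configuration $\sigma\subseteq S$,
\begin{align*}
  \sum_{r\in R}\left(\sum_{e\in s_{r}\cap\sigma}\gamma a_{re}-\dist{m}{r}\right)_{+}\leq f_{m}^{\sigma}.
\end{align*}
Non-negativity is immediate: the algorithm only ever raises the $a_{re}$ from zero and then freezes them, and scaling by $\gamma=1/(5\SqrtS H_{n})>0$ preserves $\gamma a_{re}\geq 0$.

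For the packing constraint I would fix an arbitrary $m\in M$ and an arbitrary $\sigma\subseteq S$ and split on $|\sigma|$. The empty configuration is trivial, since the left-hand side is then $0\leq f_{m}^{\emptyset}$. For $1\leq|\sigma|\leq\SqrtS$ I would invoke \cref{lemma:PD-extend:feasability-for-medium-configurations} with the choice $R'=R$, which yields the inequality above verbatim. For $|\sigma|>\SqrtS$ I would invoke \cref{lemma:PD-extend:feasability-for-configurations}, again specializing $R'=R$. Since these two size ranges together with the empty set exhaust all subsets $\sigma\subseteq S$, every dual constraint is satisfied, and combined with non-negativity the scaled solution $(\gamma a_{re})$ is feasible.

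There is no genuine obstacle at this stage; all the difficulty was absorbed into the earlier lemmas. The one point worth an explicit remark rather than real work is that both feasibility lemmas were stated for an \emph{arbitrary} subset $R'\subseteq R$, and it is exactly this extra generality — needed internally when charging the positive contributions through the $c$-ordered covering bound of \cref{corollary:PD:Subproblem:covering-works-with-some-weight} — that lets the present corollary follow by the trivial specialization $R'=R$. At a high level, the split at the threshold $\SqrtS$ is what allows the two different right-hand sides (a single-commodity cost $f_{m}^{\{s\}}$ for small configurations, and the all-commodity cost $f_{m}^{S}$ rescaled via Condition~\ref{inequality:main-assumption} for large ones) to be brought under the common scaling factor $\gamma$, so that both cases terminate in the identical dual inequality.
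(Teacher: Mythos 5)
Your proposal is correct and follows the paper's own route: the corollary is obtained exactly by combining \cref{lemma:PD-extend:feasability-for-medium-configurations} and \cref{lemma:PD-extend:feasability-for-configurations} (with $R'=R$), which between them cover every configuration $\sigma\subseteq S$. Your additional remarks on non-negativity and the empty configuration are harmless elaborations of what the paper leaves implicit.
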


\begin{proof}[Proof of \cref{theorem:PD-extend:algorithm-competitive}]
  By \cref{corollary:PD-extend:gamma-ar-feasable}, the dual variables \( a_{re} \) scaled by \( \gamma = 1/(5\sqrt{|S|} \, H_{n}) \) provide a feasible solution to the dual and thus \( \sum_{r\in R} \sum_{e\in s_{r}} a_{re} \leq 5\sqrt{|S|}\,H_{n} \cdot \opt{} \) due to weak duality.
  By \cref{corollary:PD-extend:cost-of-algorithm-bounded-by-dual-vars}, the cost of \DetAlg{}'s solution is at most
  \begin{align*}
    3 \sum_{r\in R} \sum_{e\in s_{r}} a_{re} \leq 15\sqrt{|S|}\,H_{n} \cdot \opt{}
    .
  \end{align*}
\end{proof}

\subsection{Improved Bounds}\label{section:deterministic-algorithm:Improved-Bounds}

In the following subsection, we show how we can derive better bounds on the competitive ratio of \DetAlg{} when the cost function \( f_{m}^{\sigma} \) is restricted.
Complementarily, we also derive an adaptive lower bound for the chosen restriction.
Both results show that the competitive ratio heavily depends on the given construction cost function, both in the lower and the upper bound.

Assume the facility cost is equal for all points \( m \in M \) and depends only on the size of the configuration, i.e., we can write the cost function as \( g(|\sigma|) = f_{m}^{\sigma} \).
We consider the class of functions
\begin{align*}
  \mathcal{C}= \left\{ g_{x}(|\sigma|) = |\sigma|^{\frac{x}{2}}\,\big|\, x \in [0,2] \right\}
  .
\end{align*}

Observe that \( \mathcal{C} \) intuitively contains functions that behave as the root function varying between a constant (\( x=0 \)) and a linear function (\( x = 2 \)).
It seems natural that costs for more commodities increase smoothly while the function is subadditive.

Our results concerning cost functions of class \( \mathcal{C} \) are summarized in \cref{theorem:deterministic-algorithm:improved-bound}.
Its proof is given below, first for the upper bound and afterwards for the lower bound.
Before going to the proof, consider some examples that are implied by \cref{theorem:deterministic-algorithm:improved-bound} in which our algorithm actually achieves a tight competitive ratio concerning the part depending on \( |S| \).
A linear function, i.e., \( x=2 \), yields an upper bound for \DetAlg{} of \( \mathcal{O} ( \log n ) \) and a general lower bound of \( \Omega (\frac{\log n}{\log \log n}) \).
Note that for this function, \opt{} has no advantage by combining commodities in a single facility and prediction is essentially useless.
Our algorithm achieves the tight bound (concerning \( |S| \)) by roughly mimicking separate instances of the OFLP for each commodity.
The square root function, i.e., \( x = 1 \), yields an upper bound of \( \mathcal{O}( \sqrt[4]{|S|} \log n ) \) for \DetAlg{} and a general lower bound of \( \Omega (\sqrt[4]{|S|} + \frac{\log n}{\log \log n}) \).
Trivially, setting \( x = 0 \) removes the necessity of distinguishing between small and large facilities, yielding the upper bound \( \mathcal{O} (\log n) \) for \DetAlg{} and the lower bound \( \Omega (\frac{\log n}{\log \log n}) \) identical to the OFLP.
When considering only the term depending on \( |S| \), our algorithm's competitive ratio comes close to the lower bound.
\cref{figure:improved-bounds:figure-for-competitive-ratio} sketches the respective terms for comparison.

\begin{figure}[htb]
  \centering
  \includegraphics[width=0.47\textwidth, page=4, trim= 0.25cm 11.5cm 20cm 0.25cm]{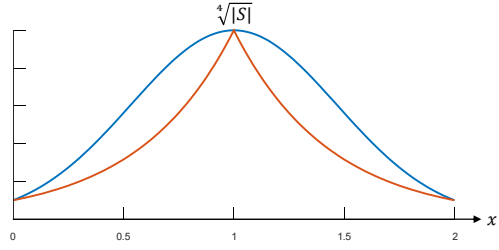}
  \caption{A depiction of the functions \( \SqrtS^{(2\,x-x^{2})/2} \) (blue) and \( \min\{\SqrtS^{(2-x)/2},\SqrtS^{x/2}\} \) (orange) using \( |S| = 10.000 \). For \( x\in\{0,1,2\} \) the functions have the same value: i.e., the part depending on \( |S| \) is equal in the lower and the upper bound. They both have a peak of value \( \sqrt[4]{|S|} \) at \( x = 1 \).}
  \label{figure:improved-bounds:figure-for-competitive-ratio}
\end{figure}

\begin{theorem}\label{theorem:deterministic-algorithm:improved-bound}
  Fix a cost function \( g_{x} \in \mathcal{C} \).
  \textnormal{\DetAlg{}} achieves a competitive ratio of \( \BigO{\SqrtS^{\frac{2\,x-x^{2}}{2}} \log n} \).
  No randomized online algorithm for the OMFLP can achieve a competitive ratio better than \( \Omega \left( \min\{\SqrtS^{\frac{2-x}{2}},\SqrtS^{\frac{x}{2}}\} + \lln \right) \).
\end{theorem}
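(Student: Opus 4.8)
My plan is to re-run the primal--dual analysis of \cref{section:deterministic-algorithm:Analysis} essentially verbatim, but with the two numbers that were hard-wired as $\SqrtS$ turned into free parameters: a scaling factor written as $\gamma = 1/(\beta H_{n})$, and a threshold $\tau$ at which the dual-feasibility proof switches from bounding a configuration via small facilities (\cref{lemma:PD-extend:feasability-for-medium-configurations}) to bounding it via the large facility (\cref{lemma:PD-extend:feasability-for-configurations}). Since \cref{corollary:PD-extend:cost-of-algorithm-bounded-by-dual-vars} does not use the cost function at all, the algorithm's cost is still at most $3\sum_{r}\sum_{e}a_{re}\le 3\beta H_{n}\cdot\opt{}$, so the task reduces to determining the smallest $\beta$ for which the scaled duals $\gamma a_{re}$ stay dual-feasible for \emph{every} configuration under $f_{m}^{\sigma}=|\sigma|^{x/2}$; the competitive ratio is then $\BigO{\beta\log n}$.

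For a configuration of size $k$, re-reading the end of \cref{lemma:PD-extend:feasability-for-medium-configurations} with the single-commodity cost $g_{x}(1)=1$ shows that covering $\sigma$ by $k$ small facilities overpays its true cost $g_{x}(k)=k^{x/2}$ by the factor $k/k^{x/2}=k^{(2-x)/2}$; hence the small-facility branch is feasible for all $k\le\tau$ once $\beta\ge c_{1}\,\tau^{(2-x)/2}$. Re-reading \cref{lemma:PD-extend:feasability-for-configurations}, but now invoking the \emph{exact} ratio $f_{m}^{S}=|S|^{x/2}\le(|S|/|\sigma|)^{x/2}f_{m}^{\sigma}$ that $g_{x}$ provides (sharper than the linear Condition~\ref{inequality:main-assumption} used in the general proof), shows that the large-facility branch is feasible for all $k>\tau$ once $\beta\ge c_{2}\,(|S|/\tau)^{x/2}$. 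Balancing the two, i.e.\ solving $\tau^{(2-x)/2}=(|S|/\tau)^{x/2}$, gives $\tau=|S|^{x/2}$ and $\beta=\Theta\!\big(|S|^{x(2-x)/4}\big)=\Theta\!\big(\SqrtS^{(2x-x^{2})/2}\big)$, which is exactly the claimed ratio after multiplying by $H_{n}=\Theta(\log n)$.

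\textbf{The main obstacle.}
The catch is the connection-distance bookkeeping in the small-facility branch. As carried out, \cref{lemma:PD-extend:feasability-for-medium-configurations} bounds the dual load \emph{per commodity}, and each commodity subtracts its own copy of $\dist{m}{r}$; collapsing the $k$ per-commodity positive parts into the single positive part $(\sum_{e\in s_{r}\cap\sigma}\gamma a_{re}-\dist{m}{r})_{+}$ that the dual constraint actually requires loses a further factor $k$ on the distance and would force $\beta\ge 5k$ up to $k=\tau=|S|^{x/2}$. That alone reproduces only the weaker $\Theta(|S|^{x/(2+x)})$ obtained by balancing $5\tau$ against $(|S|/\tau)^{x/2}$, and since $|S|^{x/2}\gg|S|^{x(2-x)/4}$ it destroys the improvement. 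I expect this to be the crux: one must re-prove \cref{lemma:PD-extend:same-commodity-distance-bounds-dual} and \cref{lemma:PD-extend:feasability-for-medium-configurations} so that, for configurations up to $\tau=|S|^{x/2}$, the distance is charged only once. The natural lever is the interplay with Constraints~(2) and~(4): a request whose commodities have jointly invested enough to reach a large facility has its total $\sum_{e\in s_{r}}a_{re}$ capped by $\dist{\hat F}{r}$ through Constraint~(2), so such requests are already governed by the large-facility mechanism, whereas for the remaining requests the per-commodity investment is small enough that the single-distance collapse can be made lossless. Turning this dichotomy into a quantitative statement for the threshold $\tau=|S|^{x/2}$ is the real work of the upper bound.

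\textbf{Lower bound.}
Here I would reuse the single-point construction of \cref{theorem:lower-bound} unchanged — reveal, one at a time, the commodities of a uniformly random $S'\subset S$ with $|S'|=\SqrtS$ — but under $f_{m}^{\sigma}=|\sigma|^{x/2}$. Now \opt{} opens one facility serving $S'$ at cost $g_{x}(\SqrtS)=\SqrtS^{x/2}=|S|^{x/4}$. Splitting as before on the number $X$ of facilities \alg{} opens: if $X\ge\SqrtS/2$ then, since each facility costs at least $g_{x}(1)=1$, \alg{} pays at least $\SqrtS/2$, for a ratio of $\SmallO{\SqrtS^{(2-x)/2}}$; otherwise the hypergeometric tail bound \cref{inequality:lower-bound-prob-of-linear} still yields $\Prob[T>|S|/c]\ge 1/4$, and because $g_{x}$ is concave with $g_{x}(0)=0$, hence subadditive, the construction cost is at least $g_{x}\!\big(\sum_{i}(1+t_{i})\big)\ge g_{x}(T)=T^{x/2}$, so with probability $\ge 1/4$ it is $\SmallO{|S|^{x/2}}$, giving an expected ratio of $\SmallO{\SqrtS^{x/2}}$. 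In either case the ratio is at least $\min\{\SqrtS^{(2-x)/2},\SqrtS^{x/2}\}$, and composing this single-point instance with the line-metric instance behind \cref{corollary:lower-bound} contributes the additive $\lln$ term. The only genuine change from \cref{theorem:lower-bound} is replacing the linear cost accounting $\max\{X,T/\SqrtS\}$ by the concave $\max\{X,T^{x/2}\}$ and invoking subadditivity to lower-bound the cost of covering $T$ commodities — using the tail bound directly rather than $\E[T]$ so as to avoid the wrong direction of Jensen's inequality for the concave $g_{x}$; I expect no further difficulty on this side.
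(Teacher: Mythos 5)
Your overall strategy---re-run the primal--dual analysis with a free threshold and scaling factor, balance \( a/g_{x}(a) \) against \( g_{x}(|S|)/g_{x}(a) \) to obtain \( a=|S|^{x/2} \) and the exponent \( x(2-x)/4 \), and adapt the single-point construction of \cref{theorem:lower-bound} to the cost \( g_{x} \)---is exactly the paper's. The lower-bound half of your proposal is complete and is in fact more careful than the paper's own write-up: the paper bounds the expected construction cost from below by \( g_{x}(\E[T]) \), which for a concave \( g_{x} \) is the wrong direction of Jensen, whereas your use of the tail bound \( \Prob[T>|S|/c]\geq 1/4 \) together with subadditivity of \( g_{x} \) yields \( \E[g_{x}(T)]\geq\frac{1}{4}g_{x}(|S|/c)=\Omega(|S|^{x/2}) \) cleanly. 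Nothing further is needed on that side.

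On the upper bound, the obstacle you flag is genuine, and you should be aware that the paper does not resolve it either: its proof consists precisely of the balancing computation you reproduce, asserting that a configuration of size \( |\sigma|\leq a \) "immediately" admits the scaling factor \( \BigO{(\frac{|\sigma|}{g_{x}(|\sigma|)}\log n)^{-1}} \). That assertion presupposes that the \( |\sigma| \) per-commodity bounds \( \sum_{r}(a_{rs}-5H_{n}\dist{m}{r})_{+}\leq 2f_{m}^{\{s\}}H_{n} \) can be merged into the single dual constraint \( \sum_{r}(\sum_{e}\gamma a_{re}-\dist{m}{r})_{+}\leq f_{m}^{\sigma} \) at no extra cost. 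In the general analysis this merge works only because the scaling denominator \( 5\sqrt{|S|}H_{n} \) is at least \( 5|\sigma|H_{n} \) for every \( |\sigma|\leq\sqrt{|S|} \), which absorbs the factor \( |\sigma| \) picked up by the distance term; once the denominator is reduced to \( \Theta(|S|^{x(2-x)/4}H_{n}) \) while the threshold is raised to \( |S|^{x/2} \), this fails for every \( x\in(0,2) \), and, as you compute, splitting requests between the small- and large-facility routes within the existing lemmas only yields the exponent \( x/(2+x) \). Your proposed repair via the Constraint (2)/(4) dichotomy is plausible but is not carried out, so as written your upper bound---like the paper's---stops exactly at the step that requires a new argument. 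To make the claim self-contained you would need to actually prove the strengthened versions of \cref{lemma:PD-extend:same-commodity-distance-bounds-dual} and \cref{lemma:PD-extend:feasability-for-medium-configurations} you describe (with the distance charged only once per request), or else settle for the weaker but fully justified bound \( \BigO{|S|^{x/(2+x)}\log n} \).
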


\subsubsection{Proof of the upper bound}\(  \)

\noindent
Fix \( g_{x} \in \mathcal{C} \).
When considering our analysis, we observe that we essentially distinguish between configurations of a size of at most \( a \) (see \cref{lemma:PD-extend:feasability-for-medium-configurations}) and those of a size of at least \( a \) (see \cref{lemma:PD-extend:feasability-for-configurations}), where \( a \) is some threshold.
In our analysis for the general case, this threshold is \( \SqrtS \).
However, it can be optimized when having knowledge about the construction cost function as follows.

Let \( \sigma \) be a configuration of the size \( |\sigma| \leq a \).
Then we immediately get a scaling factor of \( \mathcal{O}((\frac{|\sigma|}{g_{x}(|\sigma|)}\log n)^{-1}) \) to reach dual feasibility.
Observe that \( \frac{|\sigma|}{g_{x}(|\sigma|)}=|\sigma|^{(1-(x/2))} \leq a^{(1-(x/2))} = \frac{a}{g_{x}(a)}\).
Thus, for such a configuration, the scaling factor is \( \mathcal{O}((\frac{a}{g_{x}(a)}\log n)^{-1}) \).
For configurations of a size of at least \( a \), we end up with a scaling factor of \( \mathcal{O}((\frac{g_{x}(|S|)}{g_{x}(a)}\log n)^{-1}) \).
The competitive ratio of our algorithm is thus in general given by \( \mathcal{O}(\max\{\frac{a}{g_{x}(a)},\frac{g_{x}(|S|)}{g_{x}(a)} \}\log n) \).
We set \( \frac{a}{g_{x}(a)} = \frac{g_{x}(|S|)}{g_{x}(a)} \) and solve for \( a \).
This yields \( a =  g_{x}(|S|) = \SqrtS^{x} \) for our threshold value.
Plugging in \( a \) in the competitive ratio yields the bound of \cref{theorem:deterministic-algorithm:improved-bound}.

\subsubsection{Proof of the lower bound}\(  \)

\noindent
Next, we turn our attention to a lower bound for functions in \( \mathcal{C} \) that are parametrized in \( x \).
Consider the construction used the proof of the lower bound in \cref{theorem:lower-bound}.

Fix \( g_{x} \in \mathcal{C} \).
Independent of the cost function, we concluded in \cref{inequality:lower-bound-expectation-in-S} that if \alg{} does not proceed \( \SqrtS/2 \) rounds it has to cover expectedly \( \E[T] \geq |S|/16 \) commodities.
In the former case, \alg{} pays at least \( (\SqrtS/2)\, g(1) = \SqrtS/2 \).
In the latter case, \alg{} pays at least
\begin{align}
  g_{x}(\E[T])
  \geq g_{x}\left(\frac{|S|}{16}\right)
  = \left(\frac{|S|}{16}\right)^{\frac{x}{2}}
  \geq \frac{\SqrtS^{x}}{16}
  .
\end{align}
Thus, \alg{}'s expected cost is at least
\begin{align*}
  \frac{1}{16}\min\{\SqrtS, \SqrtS^{x}\}
  .
\end{align*}
\opt{} pays at most \( g_{x}(\SqrtS) = \sqrt{|S|}^{\frac{x}{2}} \).
Thus, the resulting competitive ratio is at least
\begin{align*}
  \frac{\frac{1}{16}\,\min\{\SqrtS,\SqrtS^{x}\}}{\sqrt{|S|}^{\frac{x}{2}}}
  = \frac{1}{16}\,\min\{\SqrtS^{\frac{2-x}{2}},\SqrtS^{\frac{x}{2}}\}
  .
\end{align*}

Recapitulate that \cref{theorem:lower-bound} holds for a single point.
Thus, we can extend the bound above by using Fotakis' lower bound \cite{Fotakis2008} to the bound stated in the theorem.

\section{A randomized Algorithm}\label{section:Randomized-Algorithm}

In the next section we present a randomized algorithm for the OMFLP we call \RandAlg{}.
Randomization has the advantage that the decision process is highly efficient in comparison to a deterministic approach.
While low computational complexity in a time step is not of interest when considering the competitive ratio, it might be useful in practice.
Additionally, we are able to prove a slightly better competitive ratio of \( \mathcal{O}(\SqrtS \lln) \).

\subsection{Algorithm}\label{section:Randomized-Algorithm:Algorithm}

Our algorithm \RandAlg{} is inspired by Meyerson's randomized algorithm for the OFLP \cite{Meyerson2005} achieving an expected competitive ratio of \( \mathcal{O}(\lln) \).
We again consider the metric non-uniform OMFLP.
Similar to Meyerson, we first introduce classes for facility costs to deal with the non-uniformity.

Fix a configuration \( \sigma \).
Consider the set of all possible different \( f_{m}^{\sigma} \) rounded down to the nearest power of \( 2 \) in increasing order \( C_{1}^{\sigma},\dots, C_{n}^{\sigma} \).
We call \( C_{i}^{\sigma} \) the \emph{class} \( i \) with respect to \( \sigma \) representing a facility cost for \( \sigma \) occurring at a set of points in \( M \).
Observe that for any \( i \geq 1 \) it holds that \( 2\,C_{i}^{\sigma} \leq C_{(i+1)}^{\sigma} \).
Let \( \dist{C_{i}^{\sigma}}{m} \) denote the minimal distance from a point \( m\in M \) to a point in class \( i \) for \( \sigma \).
By rounding down the facility costs, the competitive ratio of our algorithm increases by at most a factor of \( 2 \).

Our main focus for randomization is to define probabilities for the construction of small and large facilities when a request arrives.
To simplify our analysis, we define the probabilities in a way that the expected costs that are paid for a request for connecting to a facility, for the construction of small facilities and for the construction of a large facility are equal.
When considering Meyerson's algorithm \cite{Meyerson2005}, we observe that the probability for constructing a facility when a request \( r \) arrives depends on the cost \( r \) creates under the assumption that it is simply connected.
The connection cost can be interpreted as the budget of \( r \).

\begin{figure}[htb]
  \centering
  \includegraphics[width=0.47\textwidth, clip=true, page=3, trim=1.75cm 11.5cm 16.25cm 2.25cm]{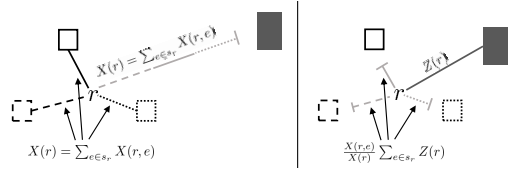}
  \caption{Assume that \( r \) gets connected and requests \( 3 \) commodities (indicated by normal, dotted and dashed lines). On the left: It is cheapest for \( r \) to connect to small facilities (not filled rectangles) only. The large facility (filled rectangle) is too far away. On the right: It is cheapest to connect to the nearest open large facility. We distribute a share of \( \frac{X(r,e)}{X(r)} \) of the total connection cost of \( r \) to every commodity.}
  \label{figure:rand-algo:example}
\end{figure}

Assume that a request \( r \) arrives and demands a set of commodities \( s_{r} \subseteq S \).
Either it is connected only to small facilities or it is connected to a single large facility.
\cref{figure:rand-algo:example} depicts the two possible situations.
In the former case, for any commodity \( e \in s_{r} \), either \( r \) is connected to the closest facility serving \( e \), or it is cheaper to open a new small facility of some class and connect \( r \) to it.
Thus, the connection cost is bounded by
\begin{align*}
  X(r,e) & := \min \{ \dist{F(e)}{r}, \min_{\,i} \{C_{i}^{\{e\}} + \dist{C_{i}^{\{e\}}}{r} \} \}
  .
\end{align*}
So, the total connection cost if the request is connected to only small facilities is
\begin{align*}
  X(r) & := \sum_{e\in s_{r}} X(r,e)
  .
\end{align*}
In the latter case, the request is either connected to a single large facility that exists, or it is cheaper to construct a new large facility of some class and connect \( r \) to it, such that the cost is at most
\begin{align*}
  Z(r) & := \min \{\dist{\hat{F}}{r}, \min_{\,i} \{C_{i}^{S} + \dist{C_{i}^{S}}{r}\} \}
  .
\end{align*}
So, assuming that the request gets connected, the total expected assignment cost can be bounded by exactly
\begin{align*}
  \E[r,asg] = \min\{ X(r),Z(r) \}
  .
\end{align*}

Our algorithm \RandAlg{} is depicted in \cref{algorithm:randomized-algorithm}.
It ensures that on expectation the total cost for all small facilities and the total cost for large facilities of a request equals \( \min\{ X(r),Z(r) \} \).

For large facilities, the expected assignment cost of \( r \) is distributed over all classes similar to Meyerson's approach for non-uniform facility costs \cite{Meyerson2005}.
Thereby, class \( i \) receives a portion of \( \E[r,asg] \) proportional to the improvement for \( r \) if there were a facility of class \( i \): i.e., \( \dist{C_{(i-1)}^{S}}{r} - \dist{C_{i}^{S}}{r} \).
This portion is divided for \( i \) by the construction cost of a facility of class \( i \).

For small facilities this is done similarly.
However, the probabilities for a single commodity \( e \in s_{r} \) are scaled by \( X(r,e)/X(r) \), which represents \( e \)'s share on the total value of \( \E[r,asg] \).

\begin{algorithm}[htb]
  \caption{\RandAlg{} on arrival of request \( r \) with set \( s_{r} \)}
  \label{algorithm:randomized-algorithm}
  \begin{algorithmic}[1]
    \ForAll{Classes \( i \)}
    \ForAll{\( e\in s_{r} \)}
    \State Build a small facility of class \( i \) in
    \Statex \hspace*{1.2cm} configuration \( \{ e \} \) closest to \( r \) with
    \Statex \hspace*{1.4cm} \( \Prob[r,e,i] = \frac{\dist{C_{(i-1)}^{\{e\}}}{r} - \dist{C_{i}^{\{e\}}}{r}}{C_{i}^{\{e\}} } \cdot \frac{X(r,e)}{X(r)} \)
    \Statex \hspace*{1.2cm} where \( \dist{C_{0}^{\{e\}}}{r} := \min \{Z(r), X(r)\} \).
    \EndFor
    \State Build a large facility of class \( i \) closest to \( r \) with
    \Statex \hspace*{0.9cm} \( \Prob[r,S,i] = \frac{ \dist{C_{(i-1)}^{S}}{r} - \dist{C_{i}^{S}}{r} }{C_{i}^{S}} \)
    \Statex \hspace*{0.75cm} where \( \dist{C_{0}^{S}}{r} := \min \{Z(r), X(r)\} \).
    \EndFor
  \end{algorithmic}
\end{algorithm}

\subsection{Analysis}\label{section:Randomized-Algorithm:Analysis}

For the analysis, we have the following outline.
The analysis is closely related to the analysis of \cite{Meyerson2005} but we need to carefully handle the size of a configuration.
We analyze the algorithm's cost for all requests of a fixed \emph{optimal center}.
An optimal center is a facility placed by \opt{} together with all the commodities connected to it in the offline solution.
We start by showing that the expected cost for connecting a request and for the construction of small and large facilities due to a request is equal in \cref{lemma:randomized:expected-cost-is-equal}.
This holds, when considering the total commodities of a request as well as when considering only a single one.
Afterwards, we establish \cref{lemma:randomized:expected-cost-per-commodity-total} saying that when considering either a single commodity of an optimal center or all commodities of \( S \), the algorithm's cost is bounded in the cost for the construction of a facility serving the single commodity or all commodities at the point of the optimal center (within a factor of \( \mathcal{O}(\lln) \)).
Finally, the key idea of our proof is a distinction between \emph{small} optimal centers having a configuration containing at most \( \SqrtS \) commodities (\cref{lemma:randomized:competitive-ratio-small-centers}) and \emph{large} optimal centers with a configuration containing more than \( \SqrtS \) commodities (\cref{lemma:randomized:competitive-ratio-large-centers}).
This step is similar to the end of the analysis of \DetAlg{} in \cref{section:deterministic-algorithm:Analysis}.
Using the last two lemmas, we close our section with a small proof of the theorem below.

\begin{restatable}{theorem}{theoremRandomizeAlgorithmCompetitive}
  \label{theorem:randomized:competitive-ratio}
  \textnormal{\RandAlg{}} has a competitive ratio of \begin{align*}
    \BigO{\SqrtS \cdot \lln}
    .
  \end{align*}
\end{restatable}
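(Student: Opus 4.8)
The plan is to reduce the theorem to the two per-center lemmas and then sum over the offline solution. First I would invoke \cref{lemma:randomized:expected-cost-is-equal}: since the expected connection cost, the expected construction cost of small facilities, and the expected construction cost of large facilities triggered by a request $r$ are each equal to $\min\{X(r),Z(r)\}$, the total expected cost of \RandAlg{} is at most $3\sum_{r\in R}\min\{X(r),Z(r)\}$ (we charge every facility to the request whose arrival built it). Hence it suffices to charge $\sum_{r\in R}\min\{X(r),Z(r)\}$ against \opt{}. I would do this by fixing an optimal offline solution and grouping the work by \emph{optimal centers}: every commodity $e\in s_r$ of every request $r$ is served in \opt{} by some facility, and I collect all these (request, commodity) assignments routed to a common center $o$ with configuration $\sigma_o$. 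The cost \opt{} attributes to $o$ is its construction cost $f_o^{\sigma_o}$ plus the assignment distances of the requests routed to it, and these per-center costs add up to $\opt{}$ with no double counting.

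Next I would split the centers at the threshold $\SqrtS$. For a \emph{small} center with $|\sigma_o|\le\SqrtS$ I bound $\min\{X(r),Z(r)\}\le X(r)=\sum_{e\in s_r}X(r,e)$ and apply the single-commodity part of \cref{lemma:randomized:expected-cost-per-commodity-total} separately to each of the at most $\SqrtS$ commodities of $\sigma_o$. Each commodity $e$ contributes $\BigO{\lln}$ times the cost of serving $e$ at $o$ by a small facility, namely $f_o^{\{e\}}$, plus assignment terms. By monotonicity of the minimum-cost function (covering the singleton $\{e\}$ is no harder than covering the superset $\sigma_o$, so $f_o^{\{e\}}\le f_o^{\sigma_o}$) the construction terms sum to $\sum_{e\in\sigma_o}f_o^{\{e\}}\le|\sigma_o|\,f_o^{\sigma_o}\le\SqrtS\,f_o^{\sigma_o}$, while the assignment terms sum to \opt{}'s assignment cost at $o$. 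This is exactly the content packaged in \cref{lemma:randomized:competitive-ratio-small-centers} and yields a per-center bound of $\BigO{\SqrtS\cdot\lln}$ times \opt{}'s cost for $o$.

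For a \emph{large} center with $|\sigma_o|>\SqrtS$ I instead bound $\min\{X(r),Z(r)\}\le Z(r)$ and apply the all-commodities part of \cref{lemma:randomized:expected-cost-per-commodity-total}, which charges the $Z(r)$-terms to a single large facility at $o$ within a factor $\BigO{\lln}$, i.e.\ to $f_o^{S}$ plus assignment. The crucial step is Condition~\ref{inequality:main-assumption}, which gives $f_o^{S}\le\frac{|S|}{|\sigma_o|}f_o^{\sigma_o}<\SqrtS\,f_o^{\sigma_o}$, so that replacing $\sigma_o$ by the full set $S$ costs only an extra $\SqrtS$ factor. This is \cref{lemma:randomized:competitive-ratio-large-centers} and again gives a per-center bound of $\BigO{\SqrtS\cdot\lln}$ times \opt{}'s cost for $o$.

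Finally I would sum the two per-center bounds over all optimal centers. Since each center is either small or large and the per-center costs partition $\opt{}$, the total expected cost is $3\sum_{r}\min\{X(r),Z(r)\}\le\BigO{\SqrtS\cdot\lln}\cdot\opt{}$, which is the claim. Granting the lemmas, the real obstacle in the theorem itself is the balancing: the threshold $\SqrtS$ is precisely what equalizes the ``sum over $\le\SqrtS$ commodities'' loss in the small case against the ``$f_o^{S}\le\SqrtS\,f_o^{\sigma_o}$'' loss in the large case, and one must check that the per-(request, commodity) charging underlying \cref{lemma:randomized:expected-cost-per-commodity-total} splits \opt{}'s assignment distances cleanly so that no distance or facility cost is counted twice across centers; the heavier probabilistic work (that the opening probabilities are well defined and that the three expected quantities coincide) is deferred to \cref{lemma:randomized:expected-cost-is-equal} and \cref{lemma:randomized:expected-cost-per-commodity-total}.
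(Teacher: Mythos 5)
Your proposal is correct and follows essentially the same route as the paper: the paper's proof of the theorem is literally the combination of \cref{lemma:randomized:competitive-ratio-small-centers} and \cref{lemma:randomized:competitive-ratio-large-centers} summed over all optimal centers, and your expanded sketch just re-derives the content of those two lemmas (the split at the threshold \( \SqrtS \), the per-commodity charging for small centers, and the use of Condition~\ref{inequality:main-assumption} for large centers) in the same way the paper does.
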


\begin{lemma}\label{lemma:randomized:expected-cost-is-equal}
  Consider a configuration \( \tau \in S \cup \{\{s\} \,|\, s\in S\} \).
  Let \( \E[r,\tau,asg] \) be the expected assignment cost of \( r \) charged to \( \tau \).
  Let \( \E[r,\tau,\ell F] \) be the expected cost for large facilities due to \( r \) charged to \( \tau \).
  Let \( \E[r,\tau, sF] \) be the expected cost for small facilities due to \( r \) charged to \( \tau \).

  Then
  \begin{align*}
    \E[r,\tau,asg] = \E[r,\tau,\ell F] = \E[r,\tau, sF].
  \end{align*}
\end{lemma}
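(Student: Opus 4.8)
The plan is to compute each of the three quantities directly from the probabilities defined in \cref{algorithm:randomized-algorithm} and to show that all three collapse, via telescoping, to a single common value: writing $D := \min\{X(r),Z(r)\} = \E[r,asg]$ for the budget, I expect every quantity to equal $D$ when $\tau = S$ and $\frac{X(r,s)}{X(r)}\cdot D$ when $\tau = \{s\}$. I would treat the two cases $\tau = S$ and $\tau = \{s\}$ in parallel, since the singleton case is just the full case scaled by the share factor $\frac{X(r,s)}{X(r)}$, which is exactly the extra factor appearing in the small-facility probabilities.

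First I would handle the \emph{facility} costs, exploiting the defining feature of the probabilities: the class cost $C_i^{\sigma}$ in the denominator cancels against the cost $C_i^{\sigma}$ of the facility that is actually built. For large facilities this gives
\[
  \E[r,S,\ell F] = \sum_i \Prob[r,S,i]\,C_i^S = \sum_i \left( \dist{C_{(i-1)}^S}{r} - \dist{C_i^S}{r} \right),
\]
and for small facilities of a commodity $s$,
\[
  \E[r,\{s\},sF] = \sum_i \Prob[r,s,i]\,C_i^{\{s\}} = \frac{X(r,s)}{X(r)} \sum_i \left( \dist{C_{(i-1)}^{\{s\}}}{r} - \dist{C_i^{\{s\}}}{r} \right).
\]
Both inner sums telescope. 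Using the boundary definition $\dist{C_0^{\sigma}}{r} := \min\{Z(r),X(r)\} = D$ and the fact that the telescoping terminates at a vanishing distance term (a facility may always be opened at $r$'s own location, so the deepest relevant class contributes distance $0$), each inner sum equals $D$, yielding $\E[r,S,\ell F] = D$ and $\E[r,\{s\},sF] = \frac{X(r,s)}{X(r)}\,D$.

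Second, I would pin down the remaining quantities from the charging convention and from the identity $\sum_{e\in s_r} X(r,e) = X(r)$. By definition the assignment cost charged to a configuration is $D$, distributed over commodities with weight $\frac{X(r,e)}{X(r)}$, so $\E[r,S,asg] = D$ and $\E[r,\{s\},asg] = \frac{X(r,s)}{X(r)}\,D$, matching the facility costs above. The two cross terms follow the same way: summing $\E[r,\{e\},sF]$ over all $e\in s_r$ gives $\frac{D}{X(r)}\sum_{e} X(r,e) = D = \E[r,S,sF]$, while charging the large-facility total $D$ to a single commodity with its share gives $\frac{X(r,s)}{X(r)}\,D = \E[r,\{s\},\ell F]$. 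Collecting, every quantity equals $D$ for $\tau=S$ and $\frac{X(r,s)}{X(r)}\,D$ for $\tau=\{s\}$, which establishes the claimed chain of equalities.

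The step I expect to be the main obstacle is the boundary of the telescoping sums: one must argue rigorously that the summation range over classes is exactly the one on which every $\Prob[r,\cdot,i]$ is a genuine probability — in particular that the distance-improvement numerators $\dist{C_{(i-1)}^{\sigma}}{r} - \dist{C_i^{\sigma}}{r}$ are nonnegative, so that the relevant distances form a monotone staircase — and that the smallest such distance is $0$. This is precisely where the Meyerson-style class construction and the definition $\dist{C_0^{\sigma}}{r} = \min\{X(r),Z(r)\}$ must be invoked with care; once it is settled, the rest is the bookkeeping of the share factor $\frac{X(r,e)}{X(r)}$ together with $\sum_{e} X(r,e) = X(r)$.
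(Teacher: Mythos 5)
Your proposal is correct and follows essentially the same route as the paper's own proof: both compute each quantity by telescoping the class sums down to the boundary value $\dist{C_{0}^{\tau}}{r} = \min\{X(r),Z(r)\}$ and then distribute via the share factors $\frac{X(r,e)}{X(r)}$ together with $\sum_{e\in s_{r}} X(r,e) = X(r)$. The boundary issue you flag (monotonicity of the staircase $\dist{C_{i}^{\tau}}{r}$ and the vanishing of the last term) is indeed the delicate point, and the paper's proof passes over it just as silently, relying on the Meyerson-style cumulative class construction.
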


\begin{proof}
  Fix \( \tau = S \) and \( r\in R \).
  We charge the total expected cost of \( r \) to \( S \).
  Observe that we already discussed that the total assignment cost of \( r \) is \( \E[r,S,asg] = \min \{X(r), Z(r)\} \).
  The expected construction cost for large facilities due to \( r \) charged to \( S \) is then
  \begin{align*}
    \E[r,S,\ell F]
     & = \sum_{i} \Pr[r,S,i] \, C_{i}^{S}
    = \sum_{i} (\dist{C_{(i-1)}^{S}}{r} - \dist{C_{i}^{S}}{r})
    \\
     & = C_{0}^{S}
    = \min \{Z(r), X(r)\}
    = \E[r,S,asg]
    .
  \end{align*}
  The expected construction cost for small facilities due to \( r \) charged to \( S \) is
  \begin{align*}
     & \E[r,S,sF]
    = \sum_{e\in s_{r}} \sum_{i} \Pr[r,e,i] \, C_{i}^{\{e\}}
    \\
     & = \sum_{e\in s_{r}} \sum_{i} \left( \left( \dist{C_{(i-1)}^{\{e\}}}{r} - \dist{C_{i}^{\{e\}}}{r} \right) \, \frac{X(r,e)}{X(r)} \right)
    \\
     & = \sum_{e\in s_{r}} \frac{X(r,e) \, C_{0}^{\{e\}}}{X(r)}
    = \sum_{e\in s_{r}} \frac{X(r,e) \, \min\{Z(r), X(r)\}}{X(r)}
    \\
     & = \min\{Z(r), X(r)\}
    = \E[r,S,asg]
    .
  \end{align*}

  Thus, for \( \tau = S \) the lemma holds true.
  Next, each commodity \( e \in s_{r} \) receives a share of the assignment cost of \( r \) of the size
  \begin{align*}
    \E[r,\{e\}, asg] = \min\{Z(r),X(r)\} \cdot \frac{X(r,e)}{X(r)}
    .
  \end{align*}
  Then the total expected assignment cost of \( r \) is split up entirely onto its commodities: i.e.,
  \begin{align*}
    \sum_{e\in s_{r}} \E[r,\{e\}, asg]
     & = \sum_{e\in s_{r}} \min\{Z(r),X(r)\} \cdot \frac{X(r,e)}{X(r)}
    \\
     & = \min\{Z(r),X(r)\}
    = \E[r,asg]
    .
  \end{align*}
  We distribute the total expected construction cost for large facilities in the same way to the commodities of \( r \) and it holds that \( \E[r,\{e\}, \ell F] = \E[r,\{e\}, asg] \).
  Consider the expected construction cost for small facilities of commodity \( e \):
  \begin{align*}
     & \E[r,\{e\}, sF]
    = \sum_{i} \Pr[r,e,i] \, C_{i}^{\{e\}}
    \\
     & = \sum_{i} \left( \dist{C_{(i-1)}^{\{e\}}}{r} - \dist{C_{i}^{\{e\}}}{r}{C_{i}^{\{e\}}}\right)\frac{X(r,e)}{X(r)}
    \\
     & = C_{0}^{\{e\}}\frac{X(r,e)}{X(r)}
    = \min\{Z(r),X(r)\}\frac{X(r,e)}{X(r)}
    = \E[r,\{e\}, asg]
    .
  \end{align*}
  For any \( e\notin s_{r} \), the respective expected costs are simply zero.
  Also it follows that summing up over all commodities of \( s_{r} \) yields the total expected cost of \( r \).
  Thus, for any \( e \in S \), \( \E[r,\{e\},asg] = \E[r,\{e\}, \ell F] = \E[r,\{e\}, sF] \).
\end{proof}

\begin{lemma}
  \label{lemma:randomized:expected-cost-per-commodity-total}
  Consider an optimal center \( c \) and let \( R_{c} \) be the set of requests connected to \( c \) by \opt{}.
  Fix a configuration \( \tau \in S \cup \{ \{s\} \,|\, s\in S \} \).
  Let \( i \) be the class of \( c \) with respect to \( \tau \)
  The expected total cost charged to commodities in \( \tau \) due to requests in \( R_{c} \) is at most \( \BigO{\lln \,(C_{i}^{\tau} + \sum_{r\in R_{c}} \dist{r}{c})} \).
\end{lemma}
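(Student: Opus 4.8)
The plan is to reduce the statement to a Meyerson-style bound on the expected \emph{assignment} budget of the requests in $R_c$, and then to run the probabilistic argument of \cite{Meyerson2005} (in its refined form \cite{Fotakis2008}) adapted to our rounding classes and to configurations. First I would apply \cref{lemma:randomized:expected-cost-is-equal}: for every $r\in R_c$ the expected assignment cost, the expected small-facility cost, and the expected large-facility cost charged to $\tau$ all coincide. Hence the total expected cost charged to $\tau$ due to $R_c$ equals $3\sum_{r\in R_c}\E[r,\tau,asg]$, and it suffices to bound $\sum_{r\in R_c}\E[r,\tau,asg]$ by $\BigO{\lln\,(C_i^\tau+\sum_{r\in R_c}\dist{r}{c})}$. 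Requests not demanding the relevant commodity contribute $0$ and are dropped; order the rest $r_1,\dots,r_m$ by arrival and write $d_j:=\dist{r_j}{c}$.

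Next I would establish two elementary bounds on the per-request budget $b_j:=\E[r_j,\tau,asg]$. On the one hand $b_j\le\delta_j$, the distance from $r_j$ to the nearest already-open relevant facility (the closest small facility serving $e$ if $\tau=\{e\}$, the closest large facility if $\tau=S$), since $\E[r,\{e\},asg]\le X(r,e)\le\dist{F(e)}{r}$ and $\E[r,S,asg]\le Z(r)\le\dist{\hat F}{r}$. On the other hand, because $c$ is a point of class $i$ with respect to $\tau$ we have $\dist{C_i^\tau}{r_j}\le d_j$, so building a fresh class-$i$ facility at $c$ is always available and $b_j\le C_i^\tau+d_j$. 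The $d_j$-parts of these bounds sum directly to $\BigO{\sum_j d_j}$; the remaining work is to control the part proportional to $C_i^\tau$.

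The heart of the proof is the probabilistic argument. When $r_j$ arrives it opens a class-$i$ relevant facility closest to itself with probability $\Prob[r_j,\tau,i]$ (scaled by $X(r_j,e)/X(r_j)$ in the single-commodity case). Since the closest class-$i$ location to $r_j$ lies within $d_j$ of $r_j$, and hence within $2d_j$ of $c$, any such opening is \emph{useful}: it reduces $\delta_{j'}$ for every later request to $\BigO{d_j+d_{j'}}$. I would therefore argue that, as long as no useful opening has occurred, a request whose budget is $\Theta(C_i^\tau)$ triggers one with probability $\Omega(b_j/C_i^\tau)=\Omega(1)$, so that across distance scales the requests still paying a near-full budget become rapidly unlikely. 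Summing the resulting per-request expectations over the at most $n$ requests yields the logarithmic factor: a harmonic summation gives $\BigO{\log n}$, and the sharper tail estimate of \cite{Fotakis2008} (the probability of surviving $t$ comparably expensive requests without a useful opening decays super-geometrically) improves this to $\BigO{\lln}$, which is exactly the factor that propagates into the final competitive ratio.

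The main obstacle is this last step in our generalized setting. Two points need care beyond the plain Meyerson analysis. First, the opening probability is split across the geometric cost classes and, for a single commodity, further scaled by the share $X(r,e)/X(r)$; I must argue that enough probability mass is concentrated on class-$i$ (useful) openings so that the effective opening probability remains $\Omega(b_j/C_i^\tau)$, using the telescoping identity $\sum_i\Prob[r,\tau,i]\,C_i^\tau=\min\{X(r),Z(r)\}$ together with the power-of-two spacing $2C_i^\tau\le C_{i+1}^\tau$. Second, obtaining $\lln$ rather than $\log n$ requires the refined concentration bound, which I expect to reuse essentially verbatim from \cite{Fotakis2008}, since once the budget-to-opening-probability relation is fixed per configuration it is insensitive to whether the facility serves one commodity or all of $S$.
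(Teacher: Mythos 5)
Your reduction to bounding $\sum_{r\in R_c}\E[r,\tau,asg]$ via \cref{lemma:randomized:expected-cost-is-equal} matches the paper, and your two per-request bounds ($b_j$ at most the distance to the nearest open relevant facility, and $b_j\le C_i^\tau+\dist{r_j}{c}$) are correct. The gap is in the step you yourself flag as the heart of the proof: you never actually establish that the total budget attributable to $C_i^\tau$ is only $\BigO{\lln}\cdot C_i^\tau$, and the mechanism you invoke for the $\lln$ factor --- a ``super-geometric tail estimate'' on the probability of surviving $t$ comparably expensive requests --- is not how this bound is obtained and is not spelled out in a checkable way. In particular, your claim that the $\dist{r_j}{c}$-parts ``sum directly to $\BigO{\sum_j d_j}$'' while the remaining work is to control the part proportional to $C_i^\tau$ does not decompose cleanly: a priori every one of the $n$ requests could pay $\Theta(C_i^\tau)$, and nothing in your sketch rules this out.

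The paper closes exactly this hole with two concrete devices that are missing from your proposal. First, $R_c$ is partitioned into $t+1=\BigO{\lln}$ annuli $B_\alpha$ of radii $t^{\alpha}\,\overline{\te{Asg}_{c}}$ around $c$ (with $t=\lln$, so that $t^{t+1}>n$ makes the outermost annulus empty); the $\lln$ factor is simply the number of annuli, not a concentration phenomenon. Second, within an annulus the expected cost \emph{charged to class $j$} before a class-$j$ facility opens near $c$ is exactly $C_j^\tau$ --- a deterministic waiting-time identity that holds because the opening probability for class $j$ equals the charged increment $\dist{C_{(j-1)}^{\tau}}{r}-\dist{C_{j}^{\tau}}{r}$ divided by $C_j^\tau$; this is also what resolves the difficulty you raise about the probability mass being split across classes and scaled by $X(r,e)/X(r)$. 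Summing these identities over classes $1,\dots,i$ gives $\BigO{C_i^\tau}$ per annulus by the power-of-two spacing, and once a facility has opened within $\delta\, t^{\alpha}\,\overline{\te{Asg}_{c}}$ of $c$, every later request of $B_\alpha$ pays only $\BigO{t\cdot\dist{r}{c}}$ --- note that this places the $\lln$ factor on the connection cost as well, which your accounting omits. Without the annulus decomposition and the waiting-time identity (or a fully worked-out substitute for them), the proposal does not yet prove the lemma.
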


\begin{proof}
  We will analyze requests based on their distance to the center.
  For this, we divide the request set of a center as follows:
  Let \( B_{\alpha} \) be the set of points in a distance of \( [t^{\alpha - 1} \overline{\te{Asg}_{c}}, t^{\alpha} \overline{\te{Asg}_{c}}] \) where \( t = \frac{\log n}{\log \log n} \) and \( \overline{\te{Asg}_{c}} = \frac{\sum_{r \in R_{c}} \dist{r}{c} }{|R_{c}|} \) is the average assignment cost of \opt{} for \( c \).
  Observe that \( B_{t+1} \) is empty since \( t^{t+1} > n \).
  A point in \( B_{t+1} \) would incur an assignment cost for \( c \) higher than the assignment cost of all requests in \( R_{c} \).

  For a fixed \( B_{\alpha} \), observe that the distance of any request \( r \) to a point of some class \( j \) concerning \( \tau \) is bounded by \( \dist{C_{j}^{\tau}}{r} \leq \dist{C_{j}^{\tau}}{c} + t^{\alpha} \, \overline{\te{Asg}_{c}} \).
  We say event \( j \) occurs if a facility opens in a distance of \( \dist{C_{j}^{\tau}}{c} + 2 \, t^{\alpha} \, \overline{\te{Asg}_{c}} \) of \( c \).
  Any request of \( B_{\alpha} \) that opens a facility of class \( j \) or higher in configuration \( \tau \) will cause event \( j \) to happen.
  Let \( \delta \) be a constant.
  We say event \( \delta^{*} \) occurs if a facility in distance of \( \delta \, t^{\alpha} \, \overline{\te{Asg}_{c}}  \) to \( c \) opens.
  Observe that for \( j \geq i \), \( \dist{C_{j}^{\tau}}{c} = 0 \) and it automatically follows that any facility built of class \( j \geq i \) triggers both events \( j \) and \( \delta^{*} \).

  Consider a fixed set \( B_{\alpha} \) and let \( B_{\alpha}^{\tau} \subseteq B_{\alpha} \) be the set of requests in \( B_{\alpha} \) that request a commodity in \( \tau \).
  Let \( B_{\alpha}^{\tau}(x) \) be the set of requests in \( B_{\alpha}^{\tau} \) that appear before an event \( x \) has happened.

  Consider the set \( B_{\alpha}^{\tau}(\delta ^{*}) \).
  We next charge each class \( j \) an expected assignment cost of
  \begin{align*}
    \E[B_{\alpha},r,\tau,j,asg] =\dist{C_{(j-1)}^{\tau}}{r} - \dist{C_{j}^{\tau}}{r},
  \end{align*}
  where \( \dist{C_{0}^{\tau}}{r} := \E[B_{\alpha},r,\tau,asg] \).
  Observe \( \sum_{j} \E[B_{\alpha},r,\tau,j,asg] = \E[B_{\alpha},r,\tau,asg] \).
  If \( \tau = \{s\} \) for some \( s\in S \), then \( \E[B_{\alpha},r,\tau,asg] \) corresponds to the assignment cost charged to commodity \( s \).
  If \( \tau = S \), \( \E[B_{\alpha},r,\tau,asg] \) corresponds to the assignment cost of \( r \).

  Consider event \( j \).
  The expected total assignment cost of class \( j \) considering \( \tau \) over all requests \( r \in B_{\alpha}^{\tau}(j) \) is exactly
  \begin{align*}
    \sum_{r \in B_{\alpha}^{\tau}(j)} \E[B_{\alpha},r,\tau,j,asg] = C_{j}^{\tau}
    .
  \end{align*}
  After this expected assignment cost the algorithm builds either a small facility of class \( j \) or a large facility of class \( j \) serving \( \tau \) and event \( j \) happens.

  Assume that \( j \) occurred, but \( \delta ^{*} \) has not.
  Consider a request \( r \in B_{\alpha}^{\tau}(\delta ^{*}) \setminus B_{\alpha}^{\tau}(j) \).
  Let
  \begin{align*}
    Y(r,\tau):=\begin{cases}
      X(r,s) & \te{if } \tau = {s} \te{ for } s\in S \\
      Z(r)   & \te{if } \tau = S
      .
    \end{cases}
  \end{align*}
  Observe that \( Y(r,\tau) \leq \dist{F(\tau)}{r} \) in any case.
  Since event \( j \) occurred, we know that
  \begin{align*}
     & \E[B_{\alpha},r,\tau,asg\,|\,r\in B_{\alpha}^{\tau}(\delta ^{*})\setminus B_{\alpha}^{\tau}(j)] \leq Y(r,\tau) \leq \dist{F(\tau)}{r} \\
     & \leq \dist{C_{i}^{\tau}}{r} + 3\, t^{\alpha}\, \overline{\te{Asg}_{c}}.
  \end{align*}
  In a distance of at most \( \dist{C_{j}^{\tau}}{r} \) to \( r \), there is a point of class \( j \).
  This point is in a distance of at most \( \dist{C_{j}^{\tau}}{r} + t^{\alpha} \, \overline{\te{Asg}_{c}} \) to \( c \).
  Therefore,
  \begin{align*}
    \dist{C_{j}^{\tau}}{r} + t^{\alpha} \, \overline{\te{Asg}_{c}}
    \geq \dist{C_{j}^{\tau}}{c}
    \Leftrightarrow
    \dist{C_{j}^{\tau}}{r}
    \geq \dist{C_{j}^{\tau}}{c} - t^{\alpha} \, \overline{\te{Asg}_{c}}
    .
  \end{align*}
  Additionally, \( \dist{C_{j}^{\tau}}{c} + 2\, t^{\alpha} \, \overline{\te{Asg}_{c}} \geq \delta \, t^{\alpha} \, \overline{\te{Asg}_{c}} \Leftrightarrow \dist{C_{j}^{\tau}}{c} \geq (\delta - 2) \, t^{\alpha} \, \overline{\te{Asg}_{c}} \), because \( \delta ^{*} \) has not occurred yet.
  Now consider
  \begin{align*}
     & \frac{\dist{C_{j}^{\tau}}{r}}{\E[B_{\alpha},r,\tau,asg\,|\,r\in B_{\alpha}^{\tau}(\delta ^{*})\setminus B_{\alpha}^{\tau}(j)]} \geq \frac{\dist{C_{j}^{\tau}}{c} - t^{\alpha}\, \overline{\te{Asg}_{c}}}{\dist{C_{j}^{\tau}}{c} + 3\, t^{\alpha}\, \overline{\te{Asg}_{c}}}
    \\
     & \geq \frac{\delta - 3}{\delta + 1}
    \\
     & \Leftrightarrow
    \dist{C_{j}^{\tau}}{r}
    \geq \frac{\delta - 3}{\delta + 1} \,\E[B_{\alpha},r,\tau,asg\,|\,r\in B_{\alpha}^{\tau}(\delta ^{*})\setminus B_{\alpha}^{\tau}(j)]
    .
  \end{align*}
  Assume that event \( j \) has happend.
  How much expected assignment cost is accumulated until event \( (j+1) \) has happened as well?
  We know that the expected assignment cost charged to classes higher than \( j \) is \( C_{(j+1)}^{\tau} \).
  Therefore,
  \begin{align*}
     &                                                        & \sum_{r\in B_{\alpha}^{\tau}(j+1)\setminus B_{\alpha}^{\tau}(j)} \dist{C_{j}^{\tau}}{r}
     & = C_{(j+1)}^{\tau}
    \\
     & \Leftrightarrow                                        &
    \sum_{r\in B_{\alpha}^{\tau}(j+1)\setminus B_{\alpha}^{\tau}(j)} \frac{\delta - 3}{\delta + 1} \, \E[B_{\alpha},r,\tau,asg]
     & \leq C_{(j+1)}^{\tau}
    \\
     & \Leftrightarrow                                        &
    \sum_{r\in B_{\alpha}^{\tau}(j+1)\setminus B_{\alpha}^{\tau}(j)} \E[B_{\alpha},r,\tau,asg]
     & \leq \frac{\delta + 1}{\delta - 3} \, C_{(j+1)}^{\tau}
    .
  \end{align*}
  Thus, the total expected assignment cost charged to configuration \( \tau \) until \( \delta ^{*} \) happens is
  \begin{align*}
    \sum_{r\in B_{\alpha}^{\tau}(\delta ^{*})} \E[B_{\alpha},r,\tau,asg]
     & \leq C_{1}^{\tau} + \sum_{j=1}^{i-1} \sum_{r \in B_{\alpha}^{\tau}(j+1) \setminus B_{\alpha}^{\tau}(j)} \E[B_{\alpha},r,\tau,asg]
    \\
     & \leq \sum_{j=1}^{i} \frac{\delta + 1}{\delta - 3} \,C_{j}^{\tau}
    \leq \frac{\delta + 1}{\delta - 3} \, 2 \, C_{i}^{\tau}
    .
  \end{align*}
  Since \( \E[r,\tau, asg] = \E[r,\tau,sF] = \E[r,\tau,\ell F] \), the total expected cost charged to \( \tau \) until \( \delta ^{*} \) is
  \begin{align*}
    \sum_{r\in B_{\alpha}^{\tau}(\delta ^{*})} \E[B_{\alpha}, r,\tau]
    \leq 6\, \frac{\delta + 1}{\delta - 3} \, C_{i}^{\tau}
    .
  \end{align*}
  After \( \delta ^{*} \) has happened, there is a facility close by for future requests for \( \tau \).
  We distinguish here between \( \alpha > 0 \) and \( \alpha = 0 \).
  Assume that \( \alpha > 0 \) and consider \( r\in B_{\alpha}^{\tau} \setminus B_{\alpha}^{\tau}(\delta ^{*}) \).
  Then
  \begin{align*}
     & \E[B_{\alpha},r,\tau \,|\,r\in B_{\alpha}^{\tau} \setminus B_{\alpha}^{\tau}(\delta ^{*})]
    \leq 3\, Y(r,\tau)
    \leq 3\,\dist{F(\tau)}{r}                                                                     \\
     & \leq 3\,(\dist{r}{c} + \delta \, t^{\alpha} \, \overline{\te{Asg}_{c}})
    \leq 3\,(\delta \, t + 1)\, \dist{r}{c}
    .
  \end{align*}
  Now consider \( r\in B_{0}^{\tau} \setminus B_{0}^{\tau}(\delta ^{*}) \).
  There is a facility in distance of at most \( \delta \, \overline{\te{Asg}_{c}} \) of \( c \) such that
  \begin{align*}
    \E[B_{0},r,\tau \, |\,r\in B_{0}^{\tau} \setminus B_{0}^{\tau}(\delta ^{*})]
     & \leq 3\, Y(r,\tau)
    \leq 3\, \dist{F(\tau)}{r}                           \\
     & \leq 3\, \dist{r}{c} + 3\,\overline{\te{Asg}_{c}}
    .
  \end{align*}
  Summing up over all \( B_{\alpha} \), the total expected cost charged to \( \tau \) is
  \begin{align*}
     & \sum_{\alpha=0}^{t} \sum_{r\in B_{\alpha}} \E[B_{\alpha},r,\tau]
    \leq 6\,\left(\lln + 1\right)\frac{\delta + 1}{\delta -3} \, C_{i}^{\tau}                                                                                         \\
     & \phantom{=} + 3 \, \left(\delta \lln + 1 \right) \sum_{r\in R_{c}} \dist{r}{c} + \sum_{r\in R_{c}} \delta \overline{\te{Asg}_{c}}
    \\
     & \leq 6\,\left(\lln + 1\right)\frac{\delta + 1}{\delta -3} \, C_{i}^{\tau} + 3 \, \left(\delta \lln + 1 + \frac{\delta}{3}\right) \sum_{r\in R_{c}} \dist{r}{c}
    .
  \end{align*}
  Since \( \delta \) is a constant, the lemma follows.
\end{proof}

\begin{lemma}
  \label{lemma:randomized:competitive-ratio-small-centers}
  Consider an optimal center \( c \) with configuration \( \sigma \) such that \( |\sigma| \leq \SqrtS \).
  Let \( R_{c} \) be the set of requests connected to \( c \) by \opt{}.
  The expected total cost of our algorithm paid for commodities of requests of \( R_{c} \) that are served by \( c \) in \opt{} is at most \( \BigO{\SqrtS\lln} \) times the cost of \opt{} concerning \( c \).
\end{lemma}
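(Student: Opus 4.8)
The plan is to decompose the expected cost that the algorithm is charged for the center $c$ into a contribution per commodity, and to apply \cref{lemma:randomized:expected-cost-per-commodity-total} once for each single commodity $s \in \sigma$. By \cref{lemma:randomized:expected-cost-is-equal}, the total expected cost of a request (assignment plus small- and large-facility construction) is split among its commodities; for the requests in $R_{c}$ the commodities served by $c$ are exactly those of $\sigma$, so the total expected cost attributable to $c$ equals the sum over $s\in\sigma$ of the cost charged to commodity $s$. Hence it suffices to sum the per-commodity bounds of \cref{lemma:randomized:expected-cost-per-commodity-total}.

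For each $s\in\sigma$ I would invoke \cref{lemma:randomized:expected-cost-per-commodity-total} with $\tau = \{s\}$. Writing $i_{s}$ for the class of $c$ with respect to $\{s\}$, this bounds the expected cost charged to commodity $s$ by $\BigO{\lln\,(C_{i_{s}}^{\{s\}} + \sum_{r\in R_{c}}\dist{r}{c})}$. Summing over the $|\sigma|$ commodities of $\sigma$ yields
\begin{align*}
  \BigO{\lln}\left(\sum_{s\in\sigma} C_{i_{s}}^{\{s\}} + |\sigma|\sum_{r\in R_{c}}\dist{r}{c}\right).
\end{align*}
The two terms are then controlled separately using the smallness of the center, namely $|\sigma|\le\SqrtS$. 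For the assignment term I simply bound $|\sigma|\sum_{r\in R_{c}}\dist{r}{c} \le \SqrtS\sum_{r\in R_{c}}\dist{r}{c}$, i.e., at most $\SqrtS$ times \opt{}'s assignment cost for $c$. For the construction term, I first use $C_{i_{s}}^{\{s\}}\le f_{c}^{\{s\}}$, since a class cost is a facility cost at $c$ rounded down to a power of two. Because a facility offering $\sigma$ also serves every single commodity $s\in\sigma$, the opening cost may be assumed monotone, so $f_{c}^{\{s\}}\le f_{c}^{\sigma}$ (exactly as subadditivity is assumed without loss of generality). Therefore $\sum_{s\in\sigma} C_{i_{s}}^{\{s\}} \le |\sigma|\,f_{c}^{\sigma}\le \SqrtS\,f_{c}^{\sigma}$, at most $\SqrtS$ times \opt{}'s construction cost for $c$. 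Combining both estimates, the expected cost charged to $c$ is $\BigO{\SqrtS\,\lln}$ times $f_{c}^{\sigma}+\sum_{r\in R_{c}}\dist{r}{c}$, which is precisely the cost of \opt{} concerning $c$.

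The main obstacle — and the reason the restriction $|\sigma|\le\SqrtS$ is essential — is the construction term: replacing \opt{}'s single $\sigma$-facility by $|\sigma|$ separate single-commodity facilities loses a factor of $|\sigma|$, which is affordable only because $|\sigma|\le\SqrtS$. The monotonicity step $f_{c}^{\{s\}}\le f_{c}^{\sigma}$ is what lets this loss be bounded by $\SqrtS$ rather than by the (unbounded) gap between a single-commodity cost and the full-configuration cost; it is the one place in the argument where one must argue that the cost function may be taken monotone. In contrast to the complementary case of \emph{large} centers treated in \cref{lemma:randomized:competitive-ratio-large-centers}, Condition~\ref{inequality:main-assumption} is not needed here, as we never compare $f_{c}^{\sigma}$ to $f_{c}^{S}$.
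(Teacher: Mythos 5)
Your proposal is correct and follows essentially the same route as the paper: apply \cref{lemma:randomized:expected-cost-per-commodity-total} with \( \tau = \{s\} \) for each \( s\in\sigma \), use \( C_{i}^{\{s\}} \leq f_{c}^{\{s\}} \leq f_{c}^{\sigma} \), and sum over the at most \( \SqrtS \) commodities. You are in fact slightly more careful than the paper in spelling out why \( f_{c}^{\{s\}} \leq f_{c}^{\sigma} \) may be assumed (monotonicity without loss of generality), a step the paper glosses over by attributing both inequalities to the rounding.
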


\begin{proof}
  We start with a single commodity \( s\in \sigma \).
  By \cref{lemma:randomized:expected-cost-per-commodity-total}, we know for the configuration \( \tau = \{s\} \) that the expected total cost charged to \( s \) is \( \BigO{\lln \,(C_{i}^{\{s\}} + \sum_{r\in R_{c}} \dist{r}{c})} \).
  Observe that for any \( s\in \sigma \) with class \( i \) at \( c \) it holds \( C_{i}^{\{s\}} \leq 2\,f_{c}^{\{s\}} \leq 2\,f_{c}^{\sigma} \) due to our rounding.
  Summing up over all \( |\sigma| \leq \SqrtS \) commodities, the total expected cost for \alg{} is \( \BigO{\SqrtS\,\lln \,(f_{c}^{\sigma} + \sum_{r\in R_{c}} \dist{r}{c})} \).
  \opt{} pays for \( c \) at least \( f_{c}^{\sigma} + \sum_{r\in R_{c}} \dist{r}{c} \) and the lemma is correct.
\end{proof}

\begin{lemma}
  \label{lemma:randomized:competitive-ratio-large-centers}
  Consider an optimal center \( c \) with configuration \( \sigma  \) such that \( |\sigma| > \SqrtS \).
  Let \( R_{c} \) be the set of requests connected to \( c \) by \opt{}.
  The expected total cost of our algorithm paid for requests of \( R_{c} \) that are served by \( c \) in \opt{} is at most \( \BigO{\SqrtS\lln} \) times the cost of \opt{} concerning \( c \).
\end{lemma}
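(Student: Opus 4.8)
The plan is to mirror the proof of \cref{lemma:randomized:competitive-ratio-small-centers}, but instead of decomposing the algorithm's cost commodity by commodity, I would charge the entire cost incurred for the requests in \( R_{c} \) to the single configuration \( \tau = S \). This is precisely where the case assumption \( |\sigma| > \SqrtS \) pays off: bundling all commodities of \( c \) into one large facility lets us invoke \cref{lemma:randomized:expected-cost-per-commodity-total} a \emph{single} time rather than \( |\sigma| \) times, so no extra \( |\sigma| \) factor is accumulated.

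Concretely, let \( i \) be the class of \( c \) with respect to \( S \). Every request in \( R_{c} \) demands only commodities in \( \sigma \subseteq S \), and by \cref{lemma:randomized:expected-cost-is-equal} the charges \( \E[r,S,asg]=\E[r,S,\ell F]=\E[r,S,sF] \) coincide, so the cost charged to \( \tau = S \) already accounts for the full expected cost the algorithm pays for \( R_{c} \): assignment together with the construction of both small and large facilities. Applying \cref{lemma:randomized:expected-cost-per-commodity-total} with \( \tau = S \) therefore bounds this total by
\begin{align*}
  \BigO{\lln \left( C_{i}^{S} + \sum_{r\in R_{c}} \dist{r}{c} \right)}.
\end{align*}

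It remains to translate \( C_{i}^{S} \) into a cost of \opt{}. By the rounding of facility costs, \( C_{i}^{S} \leq 2\, f_{c}^{S} \). Here the case assumption enters through Condition~\ref{inequality:main-assumption}, which gives \( f_{c}^{S} \leq \frac{|S|}{|\sigma|}\, f_{c}^{\sigma} \); since \( |\sigma| > \SqrtS \) we have \( \frac{|S|}{|\sigma|} < \SqrtS \), and hence \( C_{i}^{S} \leq 2\,\SqrtS\, f_{c}^{\sigma} \). Substituting this yields an expected cost of
\begin{align*}
  \BigO{\SqrtS \lln \left( f_{c}^{\sigma} + \sum_{r\in R_{c}} \dist{r}{c} \right)}.
\end{align*}
As \opt{} pays at least \( f_{c}^{\sigma} + \sum_{r\in R_{c}} \dist{r}{c} \) for \( c \), the claimed ratio of \( \BigO{\SqrtS\lln} \) concerning \( c \) follows.

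The only genuinely delicate point, and the step I would treat most carefully, is the chain \( f_{c}^{S} \leq \frac{|S|}{|\sigma|}\, f_{c}^{\sigma} \leq \SqrtS\, f_{c}^{\sigma} \): it is exactly Condition~\ref{inequality:main-assumption} combined with \( |\sigma| > \SqrtS \) that keeps the loss from replacing \( c \)'s configuration \( \sigma \) by the full set \( S \) below a factor of \( \SqrtS \). Everything else is a single, direct application of \cref{lemma:randomized:expected-cost-per-commodity-total}; in contrast to the small-center case there is no summation over the individual commodities, so no second \( |\sigma| \)-dependent factor needs to be controlled.
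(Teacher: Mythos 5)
Your proposal is correct and follows essentially the same route as the paper: apply \cref{lemma:randomized:expected-cost-per-commodity-total} once with \( \tau = S \), then use the rounding together with Condition~\ref{inequality:main-assumption} and \( |\sigma| > \SqrtS \) to bound \( C_{i}^{S} \) by \( \BigO{\SqrtS}\, f_{c}^{\sigma} \). The only difference is cosmetic: you spell out the intermediate factor \( \frac{|S|}{|\sigma|} \) and the role of \cref{lemma:randomized:expected-cost-is-equal} slightly more explicitly than the paper does.
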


\begin{proof}
  By \cref{lemma:randomized:expected-cost-per-commodity-total}, we know for the configuration \( \tau = S \) that the expected total cost charged to all requests in \( R_{c} \) is \\ \( \BigO{\lln \,(C_{i}^{S} + \sum_{r\in R_{c}} \dist{r}{c})} \).
  Observe that due to our rounding and due to Condition~\ref{inequality:main-assumption} it holds that \( C_{i}^{S} \leq 2\,f_{c}^{S} \leq \frac{|S|}{|\sigma|} \, f_{c}^{\sigma} \).
  Since \( |\sigma| > \SqrtS \), the total expected cost of \alg{} concerning \( c \) is thus \\ \( \BigO{\SqrtS\lln\,(f_{c}^{\sigma} + \sum_{r\in R_{c}} \dist{r}{c})} \).
  \opt{} pays for \( c \) at least \( f_{c}^{\sigma} + \sum_{r\in R_{c}} \dist{r}{c} \) and the lemma is correct.
\end{proof}

\begin{proof}[Proof of \cref{theorem:randomized:competitive-ratio}]
  Combining \cref{lemma:randomized:competitive-ratio-small-centers} and \cref{lemma:randomized:competitive-ratio-large-centers} for every optimal center proves \cref{theorem:randomized:competitive-ratio}.
\end{proof}

\section{Closing remarks}\label{section:outlook}

We considered a natural extension of the Facility Location Problem, introducing commodities in an online scenario.
A crucial property that is needed to have a competitive ratio sublinear in the number of commodities was the use of prediction.
We believe this is already an interesting insight because it poses the additional difficulty of \emph{how} and \emph{when} to predict.
Both our algorithms, the randomized as well as the deterministic one, achieved this by constructing a facility at some point that offers all possible commodities.
The crucial assumption we pose on the construction cost function to achieve this was Condition~\ref{inequality:main-assumption}.
It would be very interesting to know how the problem can be handled when dropping or loosening it.
The condition indirectly implies that the costs for single commodities are not too different: i.e., there is no commodity that somewhat results in a high increase in the construction cost when it is added to an existing configuration.
If the number of such \emph{heavy} commodities is small or even constant, it is simple to handle them.
Naturally, one could simply run our algorithms in which the heavy commodities are excluded such that a large facility becomes one including all non-heavy commodities.
This reflects the intuition that heavy commodities should be avoided as far as possible.
However, a nice algorithm that provably works for general cost functions is still an open problem.
Additionally, we lack a lower bound that exploits general cost functions as well.
Our lower bound utilizes a function depending only on the number of commodities.
This implies that Condition~\ref{inequality:main-assumption} holds for the lower bound as discussed in \cref{section:Introduction:Model-and-Problem-Definition}.
So, it is also open how the competitive ratio may change if general cost functions are allowed.
Regarding that in the offline case the approximability already changes depending on the assumptions on the cost function (see \cref{section:Introduction:Related-Work}), we expect it to be similar in the online case.

The visible asymptotic difference between our lower and upper bounds on the competitive ratio is the fact that the lower bound incorporates an \emph{additive} factor of \( \SqrtS \) while the upper bounds have a \emph{multiplicative} factor of \( \SqrtS \).
Of course it remains open if this gap can be closed, although we see similarities between our problem and the Online Facility Leasing Problem in which the bounds diverge similarly, see \cite{Nagarajan2013}.

In general, for many online problems that consider requests that have to be answered it is still open how they can be extended to incorporate the idea of heterogeneity.
Possibly, some of the properties we encountered are similar, such as the necessity of prediction.

\bibliography{bibliography}

\end{document}